\renewcommand\paragraph{%
  \@startsection{paragraph}{4}{\z@}
    {-13\p@ \@plus -1.5\p@ \@minus -1.5\p@}
    {-0.5em}
    {\normalfont\normalsize\itshape\raggedright\bfseries}%
}
\title{Deriving Conclusions From Non-Monotonic Cause-Effect Relations}
\author[J. Fandinno]
         {Jorge Fandinno\\
          Department of Computer Science\\
		  University of Corunna, Corunna, Spain\\
		  \email{jorge.fandino@udc.es}
		 }
\def\afine{\ensuremath{fine}}
\def\adead{\ensuremath{dead}}
\def\alongprison{\ensuremath{long\_prison}}
\def\ashortprison{\ensuremath{short\_prison}}
\begin{document}

\newpage
\setcounter{page}{1}
\label{firstpage}
\maketitle

% \setlength{\textfloatsep}{100pt}
% \setlength{\dbltextfloatsep }{100pt}
% \setlength{\textfloatsep}{20pt}
% \setlength{\floatsep}{20pt}
% \setlength{\intextsep}{10.0pt plus 2.0pt minus 4.0pt}
% \setlength{\dblfloatsep}{100pt}
% \setlength{\abovecaptionskip}{0pt}
% \setlength{\belowcaptionskip}{0pt}

% \newcommand{\npositive}{m}
% \newcommand{\nnegative}{k}
% \newcommand{\ndoublenegative}{n}
% % \newcommand{\nheadpos}{k}
% % \newcommand{\nheadne}{s}

% \newcommand\cquery{\psi}
% \newcommand\cqueryt{\tilde{\psi}}
% \newcommand\cqueryone{\psi^1}
% \newcommand\cliteral[2]{\ensuremath{{#1} :: {#2}}}
% \newcommand\cliteralr[3]{\ensuremath{{#1} ::_{#2} {#3}}}

% \newcommand{\at}{At}
% \newcommand{\lb}{Lb}
% \newcommand{\lbact}{\lb_{\mathcal{A}}}
% \newcommand{\lbrul}{\lb_{\mathcal{R}}}
% \newcommand{\atmap}{\delta}
% \newcommand{\cqueries}{\Psi}

% \newcommand{\tuple}[1]{\ensuremath{\langle #1 \rangle}}

% \newcommand{\signature}{\tuple{\at,\lb,\cqueries}}

%%%%%%%%%%%%%%%%%%%%%%%%%%%%%%%%%%%%%%%%%%%%%%%%%%%%%%%%%%%%%%%%%%%%%%%%%
%%%%%%%%%%%%%%%%%%%%%%%%%%%%%%%%%%%%%%%%%%%%%%%%%%%%%%%%%%%%%%%%%%%%%%%%%
%%%%%%%%%%%%%%%%%%%%%%%%%%%%%%%%%%%%%%%%%%%%%%%%%%%%%%%%%%%%%%%%%%%%%%%%%

\begin{abstract}
We present an extension of Logic Programming (under stable models semantics) that, not only allows concluding whether a true atom is a cause of another atom, but also \emph{deriving new conclusions} from these causal-effect relations. This is expressive enough to capture informal rules like
``if some agent's actions~$\ag$ have been \emph{necessary} to cause an event~$E$ then conclude atom~$caused(\ag,E)$,''
something that, to the best of our knowledge, had not been formalised in the literature. To this aim, we start from a first attempt that proposed extending the syntax of logic programs with so-called \emph{causal literals}. These causal literals are expressions that can be used in rule bodies and allow inspecting the derivation of some atom~$\rA$ in the program with respect to some query function $\cquery$. Depending on how these query functions are defined, we can model different types of causal relations such as sufficient, necessary or contributory causes, for instance. The initial approach was specifically focused on monotonic query functions. This was enough to cover sufficient cause-effect relations but, unfortunately, necessary and contributory are essentially \mbox{\emph{non-monotonic}}. In this work, we define a semantics for non-monotonic causal literals showing that, not only extends the stable model semantics for normal logic programs, but also preserves many of its usual desirable properties for the extended syntax. Using this new semantics, we provide precise definitions of \emph{necessary} and \emph{contributory} causal relations and briefly explain their behaviour on a pair of typical examples from the Knowledge Representation literature. (Under consideration for publication in Theory and Practice of Logic Programming)
\end{abstract}

% \comment{change regular by standard}

\section{Introduction}
\label{sc:introduction}

An important difference between classical models and most Logic Programming (LP) semantics is that, in the latter, true atoms must be founded or justified by a given derivation.
Consequently, falsity is understood as absence of proof: for instance, a common informal way of reading for default literal $\Not \rA$ is
``there is no way to derive $\rA$.''
Although this idea seems quite intuitive and, in fact, several approaches have studied how to syntactically build these derivations or \emph{justifications}~\cite{specht1993generating,pemmasani2004online,pontelli2009justifications,deneckerBS15,schulzT2016justifying},
%,denecker1993justification,GPST08,oetsch2010
it actually resorts to a concept, the \emph{ways to derive}~$\rA$, outside the scope of the standard LP semantics.

Such information on justifications for atoms can be of great interest for Knowledge Representation (KR), and especially, for dealing with problems related to causality.
For instance, in the area of legal reasoning where determining a legal responsibility usually involves finding out which agent or agents have eventually caused a given result, regardless the chain of effects involved in the process.
In this sense, an important challenge in causal reasoning is the capability of not only deriving facts of the form
\mbox{``$A$ has caused $B$,''}
but also being able to represent and reason about them. As an example, take the assertion: %the following:
\REVIEW{\ref{R2.0}}
\begin{gather}
\text{``If somebody causes an accident, (s)he would receive a fine''}
% is legally responsible for that.''}
	\label{law}
\end{gather}
\ENDREV
\noindent This law does not specify the possible ways in which a person may cause an accident. Depending on a representation of the domain, the chain of events from the agent's action(s) to the final effect may be simple (a direct effect) or involve a complex set of indirect effects and defaults like inertia.
Focussing on representing \eqref{law} in an elaboration tolerant manner~\cite{McC98},
we should be able to write a single rule whose body only refers to the $agent$ involved and the $accident$.
For instance, consider the following program
\begin{IEEEeqnarray}{l C ' l C l}
&& accident &\lparrow& oil
	\label{eq:r1}
\\
&& oil &\lparrow& suzy
	\label{eq:r2}
\\
&& suzy &&
	\label{eq:suzy}
\end{IEEEeqnarray}
representing that $accident$ is an indirect effect of Suzy's actions.
We may then represent~\eqref{law} by the following rule
\begin{IEEEeqnarray}{l C ' l C l}
&& \afine(suzy) &\lparrow&  suzy \necessary accident
	\label{eq:r4}
\end{IEEEeqnarray}
that states that Suzy would receive a $\afine$ whenever the fact $suzy$ was necessary to cause the atom $accident$.

With this long term goal in mind,~\cite{CabalarFF14} proposed a multi-valued semantics for LP that extends the stable model semantics~\cite{GelfondL88} and where justifications are treated as \emph{algebraic} constructions.
In this semantics, \emph{causal stable models} assign, to each atom, one of these algebraic expressions that captures the set of all non-redundant logical proofs for that atom.
Recently, this semantics was used in \cite{fandinno2015aspocp} to extend the syntax of logic programs with a new kind of literal, called~\emph{causal literal}, that allow representing rules like
\begin{IEEEeqnarray}{l C ' l C l}
&& \afine(suzy) &\lparrow&  suzy \sufficient accident
	\label{eq:r3}
\end{IEEEeqnarray}
and derive, from a program~\newprogram\label{prg:suff} containing rules~(\ref{eq:r1}-\ref{eq:suzy},\ref{eq:r3}), that~$\afine(suzy)$~holds.
However, the major limitation of this semantics is that causal literals must be monotonic and, therefore, rule~\eqref{eq:r4} cannot be represented.
It is easy to see that rule~\eqref{eq:r4} is non-monotonic:
in a program~\newprogram\label{prg:necc} containing rules~(\ref{eq:r1}-\ref{eq:r4}),
% and rule~\eqref{eq:r4},
the fact~$suzy$ is necessary for $accident$ is satisfied and, thus,
$\afine(suzy)$ must hold,
but in a program~\newprogram\label{prg:necc2} \review{\ref{R2.1}}{obtained} by adding a fact $oil$ to this last program, $suzy$ is not longer necessary and, thus,
$\afine(suzy)$ should not be a conclusion.

In this paper,
we present a semantics for logic programs with causal literals defined in terms of \emph{non-monotonic} query functions.
More specifically, we summarise our contributions as follows.
% \vspace{-3pt}
% \begin{itemize}
% \item
In Section~\ref{sec:causal.semantics}, we define the syntax of causal literals and a multi-valued semantics for logic programs whose causal values rely on a completely distributive lattice based on causal graphs.
Section~\ref{sec:monotonic.programs} shows that positive monotonic program
% with monotonic causal queries
has a least model that can be computed by an extension of the direct consequences operator~\cite{kowalski76}.
In Section~\ref{sec:csm}, we define semantics for programs with negation and non-monotonic causal literals and show that it is a conservative extension of the standard stable model semantics.
% \item
Besides, with a running example, we show how causal literals can be used to derive new conclusion from necessary causal relations and, in Section~\ref{sec:contributory}, briefly relate this notion with the actual cause literature.
In this section, we also formalise the weaker notion of \emph{contributory cause}, also related to the actual cause literature, and show how causal literals may be used to derive new conclusion from them.
% \item
In Section~\ref{sec:properties}, we show that our semantics satisfy the usual properties of the stable modles semantics for the new syntax.
% models are supported models, minimal models if double negation (and negated non-monotonic causal literals) are not allowed, stratified programs have a unique causal stable model.
Finally, Section~\ref{sec:conclusions} concluded the paper.
% \end{itemize}
The online appendices include the definition of our semantics with nested expression in the body, the formal relation with~\cite{fandinno2015aspocp}, the proof of formal results from the paper and the formalisation of a Splitting Theorem for causal programs analgous to~\cite{lifschitz1994splitting}.

\section{Causal Programs}
\label{sec:causal.semantics}

\REVIEW{\ref{R1.1}}
We start by reviewing some definitions from~\cite{CabalarFF14}.
\ENDREV

\begin{definition}[Term]
Given a set of labels $Lb$, a \emph{term} $t$ is recursively defined as one of the following expressions
\vspace{-9pt}
\begin{gather*}
t \ \ ::= \ \ l \ \ \Big| \ \ \prod S \ \ \Big| \ \ \sum S \ \ \Big| \ \ t_1 \cdot t_2
\end{gather*}
where $l \in Lb$ is a label, $t_1, t_2$ are in their turn terms and $S$ is a (possibly empty and possible infinite) set of terms.\QED
\end{definition}

\noindent
When $S = \set{t_1, \dotsc, t_n}$ is a finite set, we will write $t_1 * \dotsc * t_n$ and $t_1 + \dotsc + t_n$ instead of~$\prod S$ and $\sum S$, respectively.
When $S=\emptyset$, we denote $\prod S$ and $\sum S$ by $1$ and $0$, respectively.
We assume that application~`$\cdot$' has higher priority than product~`$*$' and, in its turn, product~`$*$' has higher priority than addition~`$+$'.
\emph{Application}~`$\cdot$' represents application of a rule label to a previous justifications.
For instance, the justification in program~\programref{prg:suff} for atom $suzy$ is the fact $suzy$ itself.
If rules~(\ref{eq:r1}-\ref{eq:r2}) in program~\programref{prg:suff} are labelled in the following way
\begin{IEEEeqnarray}{l C ' l C l}
r_1 &:& accident &\lparrow& oil
	\label{eq:r1.labelled}
\\
r_2 &:& oil &\lparrow& suzy
	\label{eq:r2.labelled}
\end{IEEEeqnarray}
we may represent the justification of $oil$ as $suzy \cdotl r_2$, \review{\ref{R2.2}}{in other words}, $oil$ is true because of the the application of rule $r_2$ to the fact $suzy$.
% (note that we use a label with the same name to label the atom~$suzy$).
Similarly, we may represent the justification of $accident$ as
$suzy \cdotl r_2 \cdotl r_1$.
Addition~`$+$' is used to capture alternative independent causes: each addend is one of those independent causes.
For instance, the justification of $oil$, in program~\programref{prg:necc2}, may be represented as $suzy \cdotl r_2 + oil$ and the justification of $accident$
as $(suzy \cdotl r_2 + oil) \cdot r_1$.
As we will see below application distributes over addition, so that,
the justification of $accident$ can also be written as
$suzy \cdotl r_2 \cdotl r_1 + oil \cdotl r_1$, which better illustrates the existence of two alternatives.
Product~`$*$' represents conjunction or joint causation.
For instance, in a program~\newprogram\label{prg:necc3} obtained by adding the fact $billy$ to~\programref{prg:necc2} and replacing rule~\eqref{eq:r2.labelled} by
\begin{IEEEeqnarray}{l C ' l C l}
r_2 &:& oil &\lparrow& suzy,\ billy
	\label{eq:r2.labelled.billy}
\end{IEEEeqnarray}
the justifications of $oil$ will be
$(suzy * billy) \cdotl r_2 + oil$.
Similarly, the justification of $accident$ will be
$(suzy * billy) \cdotl r_2 \cdotl r_1 + oil \cdotl r_1$.
Intuitively, terms without addition~`$+$' represent individual causes while terms with~ addition~`$+$' represent sets of causes.
It is worth to mention that these algebraic expressions are in a one-to-one correspondence with non-redundant proofs of an atom~\cite{CabalarFF14} and that they may also be understood as a formalisation of Lewis' concept of causal chain~\cite{lewis1973causation} (see \citeNP{fandinno2015aspocp}).

\begin{figure}[htbp]
\begin{center}
\footnotesize
\newcommand{\titleSep}{0pt}
\newcommand{\contentSep}{-10pt}
\newcommand{\rowSep}{5pt}
$
\begin{array}{c}
\hbox{\em Associativity}\vspace{\titleSep}\\
\hline\vspace{\contentSep}\\
\begin{array}{r@{\ }c@{\ }r@{}c@{}l c r@{}c@{}l@{\ }c@{\ }l@{\ }}
t & \cdot & (u & \cdot & w) & = & (t & \cdot & u) & \cdot & w\\
\\
\end{array}
\end{array}
$
\ \ \ \
$
\begin{array}{c}
\hbox{\em Absorption}\vspace{\titleSep}\\
\hline\vspace{\contentSep}\\
\begin{array}{r@{\ }c@{\ }c@{\ }c@{\ }l c r@{\ }c@{\ }r@{\ }c@{\ }c@{\ }c@{\ }c@{\ }l@{\ }}
&& t &&& = & t & + & u & \cdot & t & \cdot & w \\
u & \cdot & t & \cdot & w & = & t & * & u & \cdot & t & \cdot & w
\end{array}
\end{array}
$
\ \ \ \
$
\begin{array}{c}
\hbox{\em Identity}\vspace{\titleSep}\\
\hline\vspace{\contentSep}\\
\begin{array}{rc r@{\ }c@{\ }l@{\ }}
t & = & 1 & \cdot & t\\
t & = & t & \cdot & 1
\end{array}
\end{array}
$
\ \ \ \
$
\begin{array}{c}
\hbox{\em Annihilator}\vspace{\titleSep}\\
\hline\vspace{\contentSep}\\
\begin{array}{rc r@{\ }c@{\ }l@{\ }}
0 & = & t & \cdot & 0\\
0 & = & 0 & \cdot & t\\
\end{array}
\end{array}
$
\\
\vspace{\rowSep}
$
\begin{array}{c}
\hbox{\em Indempotence}\vspace{\titleSep}\\
\hline\vspace{\contentSep}\\
\begin{array}{r@{\ }c@{\ }l@{\ }c@{\ }l }
l & \cdot & l  & = & l\\
\\
\\
\end{array}
\end{array}
$
\hspace{.05cm}
$
\begin{array}{c}
\hbox{\em Addition\ distributivity}\vspace{\titleSep}\\
\hline\vspace{\contentSep}\\
\begin{array}{r@{\ }c@{\ }r@{}c@{}l c r@{}c@{}l@{\ }c@{\ }r@{}c@{}l@{}}
t & \cdot & (u & + & w) & = & (t & \cdot & u) & + & (t & \cdot & w)\\
( t & + & u ) & \cdot & w & = & (t & \cdot & w) & + & (u & \cdot & w)\\ \\
\end{array}
\end{array}
$
\hspace{.05cm}
$
\begin{array}{c}
\hbox{\em Product\ distributivity}\vspace{\titleSep}\\
\hline\vspace{\contentSep}\\
\begin{array}{rcl}
c \cdot d \cdot e & = & (c \cdot d) * (d \cdot e) \ \hbox{with} \ d \neq 1 \\
c \cdot (d*e)     & = & (c \cdot d) * (c \cdot e) \\
(c*d) \cdot e     & = & (c \cdot e) * (d \cdot e)
\end{array}
\end{array}
$
\end{center}
\vspace{-5pt}
\caption{Properties of the  `$\cdot$'operators:
$t,u,w$ are terms, $l$ is a label and $c,d,e$ are terms without~`$+$'.
Addition and product distributivity are also satisfied over infinite sums and products.}
\label{fig:appl}
\end{figure}
\begin{definition}[Value]
\label{def:values}
\emph{(Causal) values} are the equivalence classes of terms under axioms for a completely distributive (complete) lattice with meet~`$*$' and join~`$+$' plus the axioms of Figure~\ref{fig:appl}.
The set of values is denoted by~$\values$.
Furthermore, by $\causes$ we denote the subset of causal values with some representative term without sums~`$+$'.\QED
\end{definition}

All three operations, `$*$', `$+$' and `$\cdot$' are associative. Product~`$*$' and addition `$+$' are also commutative, and they satisfy the usual absorption and distributive laws with respect to infinite sums and products of a completely distributive lattice. 
The lattice order relation is defined as:
\begin{IEEEeqnarray*}{c"C"c"C"c}
t \leq u & \text{ iff } & t * u = t & \text{ iff } & t + u = u
\end{IEEEeqnarray*}
An immediately consequence of this definition is that product, addition, $1$ and $0$ respectively are the greatest lower bound, the least upper bound and the top and the bottom element of the \mbox{$\leq$-relation}.
Term~$1$ represents a value which holds by default, \review{\ref{R2.3}}{without an explicit} cause, and will be assigned to the empty body.
Term~$0$ represents the absence of cause or the empty set of causes, and will be assigned to false.
Furthermore, applying distributivity (and absorption) of \review{R2.4}{product and application over addition}, every term can be represented in \emph{(minimal) disjunctive normal form} in which \review{R2.4}{addition} is not in the scope of any other operation and every pair of addends are pairwise $\leq$-incomparable.
In the following, we will assume that every term is in disjunctive normal form.
% As usual we will write $t \not\leq u$ iff $t \leq u$ does not hold and $t < u$ iff $t \leq u$ and $u \not\leq t$.

\REVIEW{\ref{R1.1}}
This semantics was used in~\cite{fandinno2015aspocp},
to define the concept of causal query, here \emph{\mbox{m-query}}: a monotonic function \mbox{$\phi: \causes \longrightarrow \set{0,1}$}.
Unfortunately, m-queries are not expressive enough to capture necessary causation for two reasons: $(i)$ they are monotonic and $(ii)$ they cannot capture relations between sets of causes.
We introduced here the following definition which removes these two limitations.
\ENDREV

\begin{definition}[Causal query]
\label{def:causal.query}
A \emph{causal query} \, $\cquery\!: \causes \times \values \, \longrightarrow \, \set{0,1}$ \, is a function mapping pairs cause-value into $1$ (true) and  $0$ (false) which is 
% monotonic in the first argument and 
anti-monotonic in the second argument, that is,
% $\cquery(G,t) \leq \cquery(G',t)$ and
$\cquery(G,t) \leq \cquery(G,u)$ for every
% \mbox{$\set{G,G'} \subseteq \causes$} and $\set{t,u} \subseteq \values$ such that
\mbox{$G \in \causes$} and $\set{t,u} \subseteq \values$ such that
% $G \leq G'$ and
$t \geq u$.\qed
\end{definition}

\paragraph{Syntax.}
We define the semantics of logic programs using its grounding.
Therefore, for the remainder of this paper, we restrict
our attention to ground logic programs.
A \emph{signature} is a triple
\signature\ where $\at$, $\lb$ and $\cqueries$ respectively represent sets of 
\emph{atoms} (or \emph{propositions}), \emph{labels}
and causal queries.
We assume the signature of every program contains a causal query
$\cqueryone \in \cqueries$ s.t. \mbox{$\cqueryone(G,t) \eqdef 1$} for every~$G \in \causes$ and value $t \in \values$.

\begin{definition}[Causal literal]
\label{def:causal.literal}
A \emph{(causal) literal} is an expression 
$(\cliteral{\cquery}{\rA})$ where $\rA \in \at$ is an atom and $\cquery \in \cqueries$ is a causal query.\QED
\end{definition}

A causal atom $(\cliteral{\cquery^1}{\rA})$ is said to be \emph{regular} and, by abuse of notation, we will use atom~$\rA$ as shorthand for regular causal literals of the form~$(\cliteral{\cqueryone}{\rA})$.
We will see below the justification for this notation.
A \emph{literal} is either a causal literal~$(\cliteral{\cquery}{\rA})$ \ (\emph{positive literal}),
or a negated causal literal~\mbox{$\Not (\cliteral{\cquery}{\rA})$} \ (\emph{negative literal})
or a double negated causal literal~~$\Not\Not (\cliteral{\cquery}{\rA})$ \
(\emph{consistent literal}) with $\rA \in \at$ an atom and $\cquery \in \cqueries$ a causal query.

\begin{definition}[Causal program]\label{def:causal.P}
% Given a signature $\signature$, a \emph{literal} is either a causal literal~~$(\cliteral{\cquery}{\rA})$ \ (positive literal) or a negated causal literal~~$\Not (\cliteral{\cquery}{\rA})$ \ (\emph{negative literal})
% or a double negated causal literal~~$\Not\Not (\cliteral{\cquery}{\rA})$ \
% (consistent literal) with $\rA \in \at$ and $\cquery \in \cqueries$.
A \emph{(causal) program} $P$ is a set of rules of the form:
\begin{IEEEeqnarray}{l C ' l C l,C,l}
r_i &:& \rH &\lparrow& \rB_1, \dotsc, \rB_\npbody
    \label{eq:rule} 
\end{IEEEeqnarray}
\noindent where $0 \leq m$ is a non-negative integer, \mbox{$r_i\in Lb$} is a label or $r_i=1$, $\rH$ (the \emph{head} of the rule) is an atom and each $\rB_i$ with $1 \le i \leq \npbody$ (the \emph{body} of the rule) is a literal or a term.\QED
\end{definition}

A rule $\R$ ºis said to be \emph{positive} iff all literals in its body are positive  and it is said to be \emph{regular} if all causal literals in its body are regular.
When $\npbody = 0$, we say that the rule is a \emph{fact} and omit the body and sometimes the symbol~~`$\leftarrow$.'
Furthermore, for clarity sake, 
we also assume that, for every atom $\rA \in \at$, there is an homonymous label $\rA \in \lb$
and that the label of an unlabelled rule is assumed to be its head.
In this sense, a fact $\rA$ in a program actually
stands for the labelled rule $(\rA : \rA \leftarrow)$.
 % By these conventions, for instance, an unlabelled fact $\rA$ is actually an abbreviation of $(1: \rA \leftarrow )$.
A program $P$ is \emph{positive} or \emph{regular}
% or \emph{unlabelled} 
when all its rules are positive (i.e. it contains no default negation) or regular,
% or unlabelled,
respectively.
% It is \emph{uniquely labelled} when each rule has a different label or no label at all, and we will we assume that programs are uniquely labelled.
A \emph{standard program} is a regular program in which the label of every rule is `$1:$'.

\paragraph{Semantics.}
A \emph{(causal) interpretation} is a mapping \mbox{$\cI:At\longrightarrow\values$} assigning a value to each atom. For interpretations $\cI$ and $\cJ$, we write $\cI\leq \cJ$ when
\mbox{$\cI(\rH) \leq \cJ(\rH)$} for every atom $\rA \in At$.
Hence, there is a \mbox{$\leq$-bottom} interpretation \botI\ (resp. a $\leq$-top interpretation~\topI) that stands for the interpretation mapping every atom $\rA$ to $0$ (resp. $1$).
For an interpretation $I$ and atom $\rA \in \at$, by $\max I(\rA)$ we denote the set
\begin{gather*}
\max I(\rA) \ \ \eqdef \ \
\setbm{ G \in \causes }{  G \leq I(\rA) \text{ and there is no } G' \in \causes \text{ s.t. } G < G' \leq I(\rA) }  
\end{gather*}
containing the maximal terms without \review{R2.4}{addition} (or individual causes) of $\rA$ w.r.t. $I$.

\begin{definition}[Causal literal valuation]
\label{def:causal.literal.evaluation}
The \emph{valuation of a causal literal} of the form $(\cliteral{\cquery}{\rA})$ with respect to an interpretation $I$, in symbols $I(\cliteral{\cquery}{\rA})$, is given by
\vspace{-0.01cm}
\begin{align*}
I(\cliteral{\cquery}{\rA}) \ \ &\eqdef \ \
  \sum\setbm{G \leqmax I(\rA)}{ \cquery(G,\, I(\rA)\,) = 1  }
\end{align*}
We say that $I$ satisfies a causal literal $(\cliteral{\cquery}{\rA})$, in symbols
$I \models (\cliteral{\cquery}{\rA})$, iff $I(\cliteral{\cquery}{\rA}) \neq 0$.\qed \end{definition}

Notice now that 
\mbox{$I(\cliteral{\cqueryone}{\rA}) = I(\rA)$} for any atom $\rA$ and, thus, writing a standard atom~$\rA$ as a shorthand for causal literal~$(\cliteral{\cqueryone}{\rA})$ does not modify its intended meaning.
Causal literals can be used to represent the body of rule~\eqref{eq:r4}.
For instance, given a set of labels $\ag \subseteq\lb$ representing the actions of some agent $\ag$, we may define the query function 
\begin{gather}
\cquerynec_{\ag}(G,t) \ \ \eqdef \ \
  \begin{cases}
  1   &\text{if } \ \
        t \ \leq \ \sum \ag
      \\
  0   &\text{otherwise}
  \end{cases}
  \label{eq:nec.agent.definition}
\end{gather}
and represent the body of rule~\eqref{eq:r4} by a causal literal of the form~$(\cliteral{\cquerynec_{Suzy}}{accident})$
where $Suzy$ is the set of labels~$\set{suzy}$.
In the sake of clarity, we usually will write $(\ag \necessary \rA)$ in rule bodies instead $(\cliteral{\cquerynec_{\ag}}{\rA})$.

If we consider an interpretation $I$ which assigns to the atom $accident$ its justification in program~\programref{prg:necc}, that is, $I(accident) = suzy \cdotl r_2 \cdotl r_1$,
then any term without addition~\mbox{$G \in \causes$},
satisfies
\begin{align*}
\cquerynec_{Suzy}(G,I)(accident) = 1
&\hspace{0.25cm}\text{ iff }\hspace{0.25cm}
suzy \cdotl r_2 \cdotl r_1 \ \ \leq \ \ \sum \set{suzy}
\\&\hspace{0.25cm}\text{ iff }\hspace{0.25cm}
suzy \cdotl r_2 \cdotl r_1 \ \ \leq \ \ suzy
\\&\hspace{0.25cm}\text{ iff }\hspace{0.25cm}
suzy \cdotl r_2 \cdotl r_1 + suzy \ \ = \ \ suzy
\end{align*}
which holds applying application identity, associativity and absorption w.r.t. \review{R2.4}{addition}
\begin{align*}
suzy \cdotl r_2 \cdotl r_1 + suzy 
	\ \ = \ \ 1 \cdot suzy \cdot (r_2 \cdotl r_1) + suzy
	\ \ = \ \ suzy
\end{align*}
Similarly, in program~\programref{prg:necc2},
$\cquerynec_{Suzy}(G,I'(accident)) =  1$
iff
$suzy \cdotl r_2 \cdotl r_1 + oil \leq suzy$
which does not hold.
In other words, Suzy's actions has been necessary in program~\programref{prg:necc} but not in program~\programref{prg:necc2}.

% \begin{definition}[Literal Evaluation]
% \label{def:causal.literal.evaluation}
%         1 &\text{iff } I(\rC) = 0\\
%         0 &\text{otherwise}

% \begin{gather*}
% I(\Not (\cliteral{\cquery}{\rA})) \ \eqdef \ \begin{cases}
%         1 &\text{iff } I(\cliteral{\cquery}{\rA}) = 0\\
%         0 &\text{otherwise}
% \end{cases}
% \hspace{0.5cm}
% I(\Not\Not (\cliteral{\cquery}{\rA})) \ \eqdef \ \begin{cases}
%         1 &\text{iff } I(\cliteral{\cquery}{\rA}) \neq 0\\
%         0 &\text{otherwise}
% \end{cases}
% \end{gather*}
% \begin{gather*}
% I(\Not \rC) \ \eqdef \ \begin{cases}
%         1 &\text{iff } I(\rC) = 0\\
%         0 &\text{otherwise}
% \end{cases}
% \end{gather*}
The valuation of a causal term $t$ is the class of equivalence of $t$.
The valuation of
non-positive literals is defined as follows
% a literal~$\rL$ of the forms $\Not (\cliteral{\cquery}{\rA})$ and $\Not\Not (\cliteral{\cquery}{\rA})$ is recursively defined as $I(\Not \rL) \eqdef 1$ iff $I(\rL) = 0$; $I(\Not \rL) \eqdef 0$ otherwise.
\begin{IEEEeqnarray*}{l C l}
I(\Not (\cliteral{\cquery}{\rA})) \ \ &\eqdef& \ \ \begin{cases}
1 &\text{iff } I(\cliteral{\cquery}{\rA}) \ = \ 0\\
0 &\text{otherwise}
\end{cases}
\\
I(\Not\Not (\cliteral{\cquery}{\rA})) \ \ &\eqdef& \ \ \begin{cases}
1 &\text{iff } I(\cliteral{\cquery}{\rA}) \ \neq \ 0\\
0 &\text{otherwise}
\end{cases}
\end{IEEEeqnarray*}
Furthermore, for any literal or term~$\rL$, we write $I \models \rL$ iff $I(\rL) \neq 0$.
% \qed
% \end{definition}

\begin{definition}[Causal model]\label{def:causal.model}
Given a rule $\R$ of the form~\eqref{eq:rule},
we say that an interpretation $I$ \emph{satisfies} $\R$, in symbols $I \models \R$,
if and only if the following condition holds:
\begin{gather}
\big( \, I(\rB_1) * \dotsc * I(\rB_\npbody) \, \big)  \cdot r_i \ \ \leq \ \ I(\rH)
  \label{eq:causa.stable.model}
\end{gather}
An interpretation $I$ is a \emph{causal model} of $P$, in symbols $I \models P$, iff $I$~satisfies all rules in~$P$.\QED
\end{definition}

Let \newprogram\label{prg:accident} be the program containing rules~\eqref{eq:r1.labelled} and~\eqref{eq:r2.labelled} plus the labelled fact $(suzy : suzy \leftarrow)$
and~\newprogram\label{prg:accident.billy} be the program containing rules~\eqref{eq:r1.labelled} and~\eqref{eq:r2.labelled.billy} plus the labelled facts $(suzy : suzy \leftarrow)$ and $(billy : billy \leftarrow)$.
Then, it can be checked that these programs respectively have unique $\leq$-minimal models $I_{\ref{prg:accident}}$ and $I_{\ref{prg:accident.billy}}$ which satisfy
\begin{IEEEeqnarray*}{l ; C ; c ? C ? c ; C ; c}
I_{\ref{prg:accident}}(accident)
	&=& suzy \cdotl r_2 \cdotl r_1
&\hspace{1cm}&
I_{\ref{prg:accident.billy}}(accident)
	&=& (suzy * billy) \cdotl r_2 \cdotl r_1 + oil
\end{IEEEeqnarray*}
Let now \newprogram\label{prg:necc.labelled} and~\newprogram\label{prg:necc3.labelled} be the labelled programs respectively obtained by adding the following rule
% corresponding to the standard programs~\programref{prg:necc} and~\programref{prg:necc2}, that is, \programref{prg:necc.labelled} and \programref{prg:necc3.labelled} respectively are the programs obtained by adding the rule
\begin{IEEEeqnarray}{l C ' l C l}
r_3 &:& \afine(suzy) &\lparrow&  suzy \necessary accident
	\label{eq:r4.labelled}
\end{IEEEeqnarray}
(resulting of labelling rule~\eqref{eq:r4} with $r_3$) to programs~\programref{prg:accident} and~\programref{prg:accident.billy}.
Then it can be checked that these programs also have unique $\leq$-minimal models $I_{\ref{prg:necc.labelled}}$ and $I_{\ref{prg:necc3.labelled}}$ which respectively agree with $I_{\ref{prg:accident}}$ and $I_{\ref{prg:accident.billy}}$ in all atoms but in $\afine(suzy)$ and, as we have seen above,
\begin{IEEEeqnarray*}{l , C ,  l  , C , c C l , C , l}
I_{\ref{prg:necc.labelled}}(\cliteral{\cquerynec_{Suzy}}{accident})
	&=& I_{\ref{prg:necc.labelled}}(accident)
	&=& suzy \cdotl r_2 \cdotl r_1
&\hspace{1cm}&
I_{\ref{prg:necc3.labelled}}(\cliteral{\cquerynec_{Suzy}}{accident})
	&=& 0
\end{IEEEeqnarray*}
Furthermore, by definition, it holds that
$I_j(\afine(suzy))
	\ = \ I_j(\cliteral{\cquerynec_{Suzy}}{accident}) \cdotl r_3$
for $j\in\set{\ref{prg:necc.labelled}, \ref{prg:necc3.labelled}}$
which implies that
\begin{IEEEeqnarray*}{l ? C ? l}
I_{\ref{prg:necc.labelled}}(\hspace{-1pt}\afine(suzy))\hspace{-1pt})\hspace{-1pt}
	&=& \hspace{-1pt} suzy \cdotl r_2 \cdotl r_3
\\
I_{\ref{prg:necc3.labelled}}(\afine(suzy))\hspace{-1pt})
	&=& 0 \cdotl r_3
	= 0
\end{IEEEeqnarray*}
% \begin{IEEEeqnarray*}{l ; C ;  l  ; C ; c ; C ; c}
% I_{\ref{prg:necc.labelled}}(responsible(suzy,accident))
% 	&=& suzy \cdotl r_2 \cdotl r_3
% \\
% I_{\ref{prg:necc3.labelled}}(responsible(suzy,accident))
% 	&=& 0 \cdotl r_3
% 	&=& 0
% \end{IEEEeqnarray*}
That is, Suzy would receive a fine for causing the accident,
$I_{\ref{prg:necc.labelled}} \models \afine(suzy)$, w.r.t~$\programref{prg:necc.labelled}$, but not w.r.t. program~$\programref{prg:necc3.labelled}$ because $I_{\ref{prg:necc3.labelled}} \not\models \afine(suzy)$.

It is worth to note that positive programs may contain non-monotonic causal literals that, somehow, play the role of negation and, hence, they may have several $\leq$-minimal causal models.
Consider, for instance, the following positive program~\newprogram\label{prg:positive.two-minimal-models}
\begin{gather*}
\begin{IEEEeqnarraybox}[][t]{lC ' l C l , lrl}
r_1 &:& p
\end{IEEEeqnarraybox}
\hspace{2cm}
\begin{IEEEeqnarraybox}[][t]{lC ' l C l , lrl}
r_2 &:& q  &\lparrow& \ag_1 \necessary p
\end{IEEEeqnarraybox}
\end{gather*}
% where $\cquery_i(G,t) \eqdef 1$ iff $t\leq r_i$.
where $\ag_1 \eqdef \set{r_1}$.
Program~$\programref{prg:positive.two-minimal-models}$
has two $\leq$-minimal causal models.
The first one which satisfies
\mbox{$I_{\ref{prg:positive.two-minimal-models}}(p) = r_1$}
and
$I_{\ref{prg:positive.two-minimal-models}}(q) = r_1\cdotl r_2$;
and a second unintended one which satisfies
$I'_{\ref{prg:positive.two-minimal-models}}(p) = r_1 + r_2$
and
$I'_{\ref{prg:positive.two-minimal-models}}(q) = 0$.
In the following section, we introduce the notion of \emph{monotonic programs}
which have a least model and a well-behaved direct consequences operator (when they are positive).
In Section~\ref{sec:csm}, we will see that, in fact, only $I_{\ref{prg:positive.two-minimal-models}}$ is a causal stable model of program~$\programref{prg:positive.two-minimal-models}$.

\section{Positive monotonic Programs}
\label{sec:monotonic.programs}

A causal query $\cquery$ is said to be \emph{monotonic} iff
$\cquery(G,u) \leq \cquery(G',w)$ for any values~$\set{G,G'} \subseteq \causes$ and $\set{u,w}\subseteq\values$ such that $G \leq G'$.
A causal literal $(\cliteral{\cquery}{\rA})$ is \emph{monotonic} if $\cquery$ is monotonic.
A program $P$ is \emph{monotonic} iff $P$ all causal literals occurring in $P$ are monotonic.
\REVIEW{R1.1}
We show next that every monotonic program can be reduced to the syntax and semantics of~\cite{fandinno2015aspocp}. 
For space reasons, we omit here the details of~\cite{fandinno2015aspocp}, which can be found in Appendix~\ref{sec:m-programs}.

\ENDREV

\begin{definition}
Given a query $\cquery$ (resp. m-query $\phi$), its \emph{corresponding m-query (resp. query)} is given by
$\phi_\cquery(G) \eqdef \cquery(G,1)$
(resp. $\cquery_\phi(G,t) \eqdef \phi(G)$).
Similarly, for any program $P$ (resp. m-program $Q$) its \emph{corresponding m-program $Q$ (resp. program $P$)} is obtained by replacing every query $\cquery$ in $P$ (resp. m-query $\phi$ in $Q$) by its corresponding m-query $\phi_\cquery$ (resp.query $\cquery_\phi$).\qed
\end{definition}

\begin{Theorem}{\label{thm:fandino2015.monotonic}}
If $P$ is the corresponding program of some positive m-program~$Q$ with the syntax of Definition~\ref{def:causal.P} or $Q$ is the corresponding m-program of some positive monotonic  program $P$, then an interpretation~$I$ is a model of~$P$ iff $I$ is a model of $Q$.\qed
\end{Theorem}

An immediate consequence of Theorem~\ref{thm:fandino2015.monotonic}, plus
Theorem~3.8 in~\cite{fandinno2015aspocp}, is that positive monotonic programs have a least model that can be computed by iteration of the following extension of the direct consequences operator of~\citeN{kowalski76}.

\begin{definition}[Direct consequences]\label{def:tp}
Given a causal program $P$, the operator of \emph{direct consequences} is a function $\tp$ from interpretations to interpretations such that
\begin{align*}
\tp(I)(\rA)
    \ \ &\eqdef \ \
    \sum \setbm{ \ \big( \, I(\rB_1) * \dotsc * I(\rB_\npbody) \, \big)  \cdot r_1 }
    		{ (r_i : \ \rA \lparrow \rB_1, \dotsc, \rB_\npbody) \in P \ }
\end{align*}
for any interpretation $I$ and any atom $\rA \in \at$.
The iterative procedure is defined as usual
\begin{align*}
% \tprI{\alpha}{I} &\ \ \eqdef \ \ I
%         &&\text{if } \alpha = 0
% \\
\tpr{\alpha} &\ \ \eqdef \ \ \tp(\tpr{\alpha-1})
                &&\text{if } \alpha \text{ is a successor ordinal}
\\
\tpr{\alpha} &\ \ \eqdef \ \ \sum_{\beta < \alpha}\tpr{\beta}
                &&\text{if } \alpha \text{ is a limit ordinal}
\end{align*}
As usual $0$ and $\omega$ respectively denote the first limit ordinal and the first limit ordinal that is greater than all integers. Thus, $\tpr{0} = \botI$.\qed
\end{definition}

\begin{Corollary}{\label{thm:tp.properties}}
Any (possibly infinite) positive monotonic program $P$ has a least causal model $I$ which
 (i) coincides with the least fixpoint $\lfp(\tp)$ of the direct consequences operator $\tp$ and (ii) can be iteratively computed from the bottom interpretation $I = \lfp(\tp) = \tpr{\omega}$.\qed
\end{Corollary}

Corollary~\ref{thm:tp.properties} guarantees that the least fixpoint of $T_P$ is well-behaved and corresponds to the least model of the program~$P$.
In fact, we can check now that the least model $I_{\ref{prg:accident.billy}}$ of program~\programref{prg:accident.billy} satisfies
$I_{\ref{prg:accident.billy}}(accident) = (suzy * billy) \cdotl r_2 \cdotl r_1 + oil \cdotl r_1$.
First note, that program~\programref{prg:accident.billy} contains facts $suzy$, $billy$ and $oil$ whose label is the same as the name atom and, thus,
$\tprP{\programref{prg:accident.billy}}{1}(\rA) = \rA$ for each atom $\rA \in \set{ suzy,\ billy,\ oil }$.
Then, since
$\tprP{\programref{prg:accident.billy}}{1}(suzy) = suzy$,
$\tprP{\programref{prg:accident.billy}}{1}(billy) = billy$
and 
rule~\eqref{eq:r2.labelled}
and fact $oil$ belong to program~\programref{prg:accident.billy},
it follows that
$\tprP{\programref{prg:accident.billy}}{2}(oil) = (suzy * billy) \cdot r_2 + oil$.
Similarly, we can check that
\begin{gather*}
\tprP{\programref{prg:accident.billy}}{3}(accident) 
	\ \ = \ \ (\, (suzy * billy) \cdot r_2 + oil ) \cdot r_1
	\ \ = \ \  (suzy * billy) \cdotl r_2 \cdotl r_1 + oil \cdotl r_1 
\end{gather*}
and, thus, $I_{\ref{prg:accident.billy}} = \tprP{\programref{prg:accident.billy}}{3}$ is the least fixpoint of $\tpP{\programref{prg:accident.billy}}$.
Checking that
$\tprP{\programref{prg:accident}}{3} = I_{\ref{prg:accident}}$,
that
$\tprP{\programref{prg:necc.labelled}}{4} = I_{\ref{prg:necc.labelled}}$ 
and that
$\tprP{\programref{prg:necc3.labelled}}{4} = I_{\ref{prg:necc3.labelled}}$ are the least fixpoint and the least models respectively of programs~\programref{prg:accident},~\programref{prg:necc.labelled} and~\programref{prg:necc3.labelled} is analogous.

It is easy to see that every true atom, according to the standard least model semantics, has a non-zero causal value associated in the causal least model of the program, that is, some associated cause.
An interpretation $I$ is \emph{two-valued} when it maps each atom into the set~$\set{0,1}$.
By~$I^{cl}$, we denote the two-valued (or ``classic'') interpretation corresponding to some interpretation~$I$ s.t.
\begin{gather*}
I^{cl}(\rH) \ \ \eqdef \ \ \begin{cases}
        1 &\text{iff } I(\rH) > 0\\
        0 &\text{iff } I(\rH) = 0
\end{cases}
\end{gather*}

\begin{Corollary}{\label{thm:positive.two-valued.model.correspondence}}
Let $P$ be a regular, positive monotonic program and $Q$ its standard unlabelled version obtained by removing all labels from the rules in $P$.
Let $\cI$ and $\cJ$ be the least models of $P$ and $Q$, respectively.
Then, $\cI^{cl} = \cJ$.\QED
\end{Corollary}

\section{Non-monotonic causal queries and negation}
\label{sec:csm}

\REVIEW{R1.1}
We introduce now the semantics for programs with non-monotonic causal queries and negation by extending the concept of reduct~\cite{GelfondL88} to causal queries.
\ENDREV

\begin{definition}[Reduct]\label{def:reduct}
For any term $t$, by
$\cquery^{t}$ we denote a query such that
\begin{gather*}
\cquery^t(G,u) \ \ \eqdef \ \
  \begin{cases}
  1 &\text{iff exists some }  G' \leq G \text{ s.t. } G' \leqmax t
      \text{ and }
      \cquery(G',\,t) = 1
  \\
  0 &\text{otherwise}
  \end{cases}
\end{gather*}

% {}\vspace{-20pt}{}\qed

% \vspace{5pt}

\noindent
The \emph{reduct} of a causal literal~$(\cliteral{\cquery}{\rA})$ w.r.t some interpretation $I$ is itself if $\cquery$ is monotonic and $(\cliteral{\cquery^{I(\rA)}}{\rA})$ if $\cquery$ is non-monotonic.
The reduct of a program $P$ w.r.t. an interpretation $I$, in symbols $P^I$, is the result of (i) removing all rules whose body contains a non satisfied negative or consistent literal,
(ii) removing all the negative and consistent literals for the remaining rules and
(iii) replacing the remaining causal literals~$(\cliteral{\cquery}{\rA})$ by their reducts $(\cliteral{\cquery}{\rA})^I$.\qed
\end{definition}

It is easy to see that the reduct~$P^I$ of any program~$P$ is a positive monotonic program and, therefore, it has a least causal model.

\begin{definition}[Causal stable model]\label{def:causal.smodel}
We say that an interpretation $I$ is a \emph{causal stable model} of a program~$P$ iff $I$~is the least model of the positive program~$P^I$.\QED
\end{definition}

We can check now that
interpretation
$I_{\ref{prg:positive.two-minimal-models}}$ is, in fact, the unique causal stable model of program~$\programref{prg:positive.two-minimal-models}$.
Let
$Q = \programref{prg:positive.two-minimal-models}^{I_{\ref{prg:positive.two-minimal-models}}}$
be the reduct of program~$\programref{prg:positive.two-minimal-models}$ w.r.t.
$I_{\ref{prg:positive.two-minimal-models}}$ consisting in the following rules
\begin{gather*}
\begin{IEEEeqnarraybox}[][t]{lC ' l C l , lrl}
r_1 &:& p
\end{IEEEeqnarraybox}
\hspace{2cm}
\begin{IEEEeqnarraybox}[][t]{lC ' l C l , lrl}
r_2 &:& q  &\lparrow& (\cliteral{\cquery}{p})
\end{IEEEeqnarraybox}
\end{gather*}
where
$\cquery(G,t) = 1$ iff there exists some $G' \leq G$ s.t.
$G' \leqmax I_{\ref{prg:positive.two-minimal-models}}(p) = r_1$
and $\cquerynec_{\ag_1}(G',I_{\ref{prg:positive.two-minimal-models}}(p))$
iff $r_1 \leq G$ and $r_1 \leq \sum\ag_1 = r_1$
iff $r_1 \leq G$.
First note that $\tprP{Q}{\alpha}(p)  =  r_1 = I_{\ref{prg:positive.two-minimal-models}}(p)$ for any ordinal~$\alpha\geq 1$
because $r_1$ is the only rule with the atom~$p$ in the head.
Then, note that $\tprP{Q}{\alpha}( \cliteral{\cquery}{p} ) = \tprP{Q}{\alpha}(p)$ because
$r_1 \leq G$ for every $G \leqmax \tprP{Q}{\alpha}(p) = r_1$ (there is only one such $G = r_1$)
and, thus,
\begin{gather*}
\tprP{Q}{\beta}(q) 	\ \ = \ \ \tprP{Q}{\alpha}(\cliteral{\cquery}{p}) \cdotl r_2
					\ \ = \ \ \tprP{Q}{\alpha}(p) \cdotl r_2
					\ \ = \ \ r_1 \cdotl r_2
					\ \ = \ \ I_{\ref{prg:positive.two-minimal-models}}(q)
\end{gather*}
for any ordinal~$\beta \geq 2$.
Hence, $I_{\ref{prg:positive.two-minimal-models}}$ is a causal stable model of~$\programref{prg:positive.two-minimal-models}$.
On the other hand, we can check that
$I_{\ref{prg:positive.two-minimal-models}}'$ is not a causal stable model of~$\programref{prg:positive.two-minimal-models}$.
Let $Q' = \programref{prg:positive.two-minimal-models}^{I_{\ref{prg:positive.two-minimal-models}}'}$ be the reduct of program~$\programref{prg:positive.two-minimal-models}$ w.r.t. $I_{\ref{prg:positive.two-minimal-models}}'$ consisting in the same rules than program $Q$, but replacing $\cquery$ by $\cquery'$ where $\cquery'(G,t) = 1$ iff there exists some $G' \leq G$ s.t.
$G' \leqmax I_{\ref{prg:positive.two-minimal-models}}'(p) = r_1 + r_2$
and $\cquerynec_{\ag_1}(G',I_{\ref{prg:positive.two-minimal-models}}'(p))$.
As above, $\tprP{Q'}{\alpha}(p)  =  r_1 \neq I_{\ref{prg:positive.two-minimal-models}}'(p) = r_1 + r_2$ for any ordinal~$\alpha\geq 1$ and, therefore, $I_{\ref{prg:positive.two-minimal-models}}$ is not a causal stable model of program~$\programref{prg:positive.two-minimal-models}$.

% \begin{Theorem}{\label{thm:fandino2015}}
% If $P$ is the corresponding program of some m-program~$Q$ with the syntax of Definition~\ref{def:causal.P} or $Q$ is the corresponding m-program of some monotonic program $P$, then an interpretation~$I$ is a causal stable model of program~$P$ iff $I$ is a causal stable model of program~$Q$.\qed
% \end{Theorem}

It is worth to mention that, as happened with positive programs, we can stablish a correspondence between the causal stable models of regular programs and the standard stable models of their standard version.

\begin{definition}[Two-valued equivalence]
% Two programs $P$ and~$Q$ are said to be
% \emph{equivalent} (resp. \emph{two-valued equivalent})
% iff for every causal stable model $I$ of $P$ there is an unique
% causal stable model $J$ of $Q$ such that $I=J$ (resp. $I^{cl}=J^{cl}$), and vice-versa.\QED
Two programs $P$ and~$Q$ are said to be
\emph{two-valued equivalent}
iff for every causal stable model $I$ of $P$ there is an unique
causal stable model $J$ of $Q$ such that $I^{cl}=J^{cl}$, and vice-versa.\QED
\end{definition}

\begin{Theorem}{\label{thm:regluar.standard.correspondnece}}
Let $P$ be a regular program and $Q$ be its corresponding standard program obtained by removing all labels in $P$.
Then $P$ and $Q$ are two-valued equivalent.\QED
\end{Theorem}

Theorem~\ref{thm:regluar.standard.correspondnece} asserts that, labelling a standard program does not change which atoms are true or false in its stable models, in other words, the causal stable semantics presented here is a conservative extension of the standard stable model semantic.
% A further consequence of Corollary~\ref{thm:regluar.standard.correspondnece} is that
% labelling a standard program in different manners does not change its non-causal meaning.

% \begin{Corollary}{\label{cor:regluar.correspondnece}}
% Two regular programs that only differ in their labels are two-valued equivalent.\QED
% \end{Corollary}

\section{Contributory cause and its relation with actual causation}
\label{sec:contributory}

Until now we have considered that an agent is a cause of an event when its actions have been necessary to cause that event.
This understanding is similar to the definition of the \emph{modified Halpern-Pearl definition of causality} given by~\citeN{halpern2015modification}.
However, in some scenarios it makes sense to consider a weaker definition in which those agents whose actions have \emph{contributed} to that event
are also considered causes, even if their actions have not been necessary~\cite{Pearl00}.
%Mackie1965
Consider, for instance, the following example from~\cite{hopkins2003clarifying}.

\begin{example}\label{ex:load.firing.squad}
For a firing squad consisting of shooters
Billy and Suzy, it is John's job to load Suzy's gun.
Billy loads and fires his own gun.
On a given day, John loads Suzy's gun.
When the time comes, Suzy and Billy shoot the prisoner.
The agents who caused the prisoner death would be punished with imprisonment.\qed
\end{example}

In this example,
although the actions of any of the agents are not necessary for the prisoner's death,
commonsense tells that all three  should be considered responsible of it.
% On the one hand, though Billy's actions are not necessary for the prisoner's death, his actions are \emph{sufficient} to cause it.
% On the other hand, Suzy's and John's actions, on their own, are neither necessary nor sufficient to cause the prisoner's death.
% For a detailed, explanation of how to derive conclusions from sufficient causal relations we refer to~\cite{fandinno2015aspocp}.
% We will focus here in the cases of Suzy and John whose actions are neither necessary nor sufficient, but that have contributed to the event, nevertheless.
If we represent Example~\ref{ex:load.firing.squad} by the following program~\newprogram\label{prg:load.firing.squad}
\begin{gather*}
\begin{IEEEeqnarraybox}[][t]{l C ; l C l,C,l}
r_1   &:& dead     &\lparrow& shoot(suzy), loaded\\
r_2   &:& dead     &\lparrow& shoot(billy)\\
r_3   &:& loaded     			&\lparrow& load(john)\\
r_{\ag}   &:& \alongprison(\ag)  &\lparrow& \ag \necessary dead
\end{IEEEeqnarraybox}
\hspace{1.25cm}
\begin{IEEEeqnarraybox}[][t]{l C ; l C l,C,l}
&& shoot(suzy)\\
&& shoot(billy)\\
&& load(john)
\end{IEEEeqnarraybox}
\end{gather*}
for $\ag \in \set{suzy,\ billy,\ john}$,
it can be shown that its unique causal stable model $I_{\ref{prg:load.firing.squad}}$ satisfies
\begin{gather*}
I_{\ref{prg:load.firing.squad}}(dead)
		\ \ = \ \ \big( load(john) \cdotl r_3 * shoot(suzy) \big) \cdot r_1
		\ \ + \ \ shoot(billy) \cdotl r_2
\end{gather*}
Recall that, we assume that every fact has a label with the same name.
% , that is, the fact~$shoot(suzy)$ actually stands for the rule $(shoot(suzy) : shoot(suzy) \leftarrow)$.
According to~$I_{\ref{prg:load.firing.squad}}$,
the actions of the three agents appear in the causes of the atom~$dead$,
but there is no agent whose actions occur in all causes.
Then, the causal literal $(\ag \necessary dead)$ is not satisfied for any agent $\ag$ and, therefore, it holds that
$I_{\ref{prg:load.firing.squad}}(\alongprison(\ag)) = 0$
for every agent~$\ag \in \set{suzy,\ billy,\ john}$.
That is, no agent is punished with imprisonment for the prisoner's death.
On the other hand, if~\newprogram\label{prg:load.firing.squad.contributed} is a program obtained by replacing rules $r_\ag$ by rules
\begin{gather*}
\begin{IEEEeqnarraybox}[][t]{l C ; l C l,C,l}
c_{\ag}   &:& \ashortprison(\ag,\adead)  &\lparrow& \ag \contributed dead
\end{IEEEeqnarraybox}
\end{gather*}
in program~\programref{prg:load.firing.squad},
we may expect that $\ashortprison(\ag)$ holds, in its unique causal stable model~$I_{\ref{prg:load.firing.squad.contributed}}$,
for any $\ag \in \set{suzy,\ billy,\ john}$.
We formalise this by defining the following query
\begin{gather}
\cquerycont_{\ag}(G,t) \ \ \eqdef \ \
  \begin{cases}
  1   &\text{if } G \ \leq \ \sum \ag
      \\
  0   &\text{otherwise}
  \end{cases}
  \label{eq:cont.agent.definition}
\end{gather}
In the sake of clarity,
we will write
$(\ag \contributed dead)$
instead of
$(\cliteral{\cquerycont_{\ag}}{dead})$.
It can be checked that
$\big( load(john) \cdotl r_3 * shoot(suzy) \big) \cdot r_1 \ \leq \ load(john)$ and, therefore,
\begin{gather*}
I_{\ref{prg:load.firing.squad.contributed}}(john \contributed dead)
	\ \ = \ \ \big( load(john) \cdotl r_3 * shoot(suzy) \big) \cdot r_1
\end{gather*}
Consequently,
% \begin{gather*}
$I_{\ref{prg:load.firing.squad.contributed}}(\ashortprison(john))
	\ = \ \big( load(john) \cdotl r_3 * shoot(suzy) \big) \cdot r_1 \cdot c_{john}$.
% \end{gather*}
Similarly, it can be shown that
\begin{IEEEeqnarray*}{l c l}
I_{\ref{prg:load.firing.squad.contributed}}(\ashortprison(suzy))
	&\ \ = \ \ &\big( load(john) \cdotl r_3 * shoot(suzy) \big) \cdot r_1 \cdotl c_{suzy}
\\
I_{\ref{prg:load.firing.squad.contributed}}(\ashortprison(billy))
	&\ \ = \ \ & shoot(billy) \cdotl r_2 \cdot c_{billy}
\end{IEEEeqnarray*}

It is worth to note that contributory causes are non-monotonic when defaults are taken into account.
Consider now the following variation of Example~\ref{ex:load.firing.squad}.

\begin{example}\label{ex:load.firing.squad.default}
Now Suzy also loads her gun as Billy does.
However, Suzy's gun was broken and John repaired it.\qed
\end{example}

As in Example~\ref{ex:load.firing.squad}, John's repairing action is necessary in order for Suzy to be able to fire her gun.
However, in this case, it seems too severe to consider that John has contributed \review{R2.8}{to} the prisoner's death.
This consideration has been widely attributed to the fact that we consider that, by default, things are not broken and that causes must be events that deviate from the norm~\cite{maudlin2004,hall2007structural,Halpern08,hitchcock2009cause}.
If we represent this variation by a program~\newprogram\label{prg:broken.firing.squad} containing the following rules%
\footnote{We have chosen this representation in order to illustrate the non-monotonicity of contributory cause.
However, solving the Frame and Qualification Problems~\cite{McCarthy1969,mccarthy1987epistemological} would require the introduction of time and the inertia laws, plus the replacement of rule $r_1$ by the pair of rules $(r_1 : \ dead \lparrow shoot(suzy),\ \Not ab)$ and $(ab \lparrow broken)$.
For a detailed discussion of how causality and the inertia laws can combined we refer to~\cite{fandinno2015thesis}.}
\REVIEW{\ref{R2.7}}
\begin{gather*}
\begin{IEEEeqnarraybox}[][t]{l C ; l C l,C,l}
r_1   	&:& dead     &\lparrow& shoot(suzy), un\_broken\\
r_2   	&:& dead     &\lparrow& shoot(billy)\\
r_3   	&:& un\_broken   &\lparrow& repair(john)\\
c_{\ag} &:&\ashortprison(\ag)  &\lparrow& \ag \contributed dead
\end{IEEEeqnarraybox}
\hspace{1.25cm}
\begin{IEEEeqnarraybox}[][t]{l C ; l C l,C,l}
&& shoot(suzy)\\
&& shoot(billy)\\
&& repair(john)
\end{IEEEeqnarraybox}
\end{gather*}
\ENDREV
for $\ag \in \set{suzy,\ billy,\ john}$,
then it is easy to see that 
\begin{gather*}
I_{\ref{prg:broken.firing.squad}}(dead)
		\ \ = \ \ \big( repair(john) \cdotl r_3 * shoot(suzy) \big) \cdot r_1
		\ \ + \ \ shoot(billy) \cdotl r_2
\end{gather*}
where $I_{\ref{prg:broken.firing.squad}}$ is the least model of program~\programref{prg:broken.firing.squad} and, thus, $responsible(john,dead)$ will be a conclusion of it.
Just note that program~\programref{prg:broken.firing.squad} is the result of replacing atoms $loaded$ and $load(john)$ in program~\programref{prg:load.firing.squad.contributed} by $un\_broken$ and $repair(john)$, respectively.
Note also that nothing in program~\programref{prg:broken.firing.squad} reflects the fact that by default guns are $un\_broken$.
We state that guns are $un\_broken$ by default adding the following rule
\begin{IEEEeqnarray}{l C ? l C l}
1 &:& un\_broken &\lparrow& \Not broken
	\label{eq:unbroken.default}
\end{IEEEeqnarray}
% Note that unlabelled rules are syntactic sugar for rules labelled with `$1:$' and, therefore,
If \newprogram\label{prg:broken.firing.squad.default} is the result of adding rule~\eqref{eq:unbroken.default} to program~\programref{prg:broken.firing.squad} and
$I_{\ref{prg:broken.firing.squad.default}}$ is the least model of~\programref{prg:broken.firing.squad.default},
then
\begin{gather*}
I_{\ref{prg:broken.firing.squad.default}}(un\_broken)
		\ \ = \ \ I_{\ref{prg:broken.firing.squad}}(un\_broken)
		\ \ + \ \ 1
		\ \ = \ \ 1
\end{gather*}
and, consequently,
\begin{align*}
I_{\ref{prg:broken.firing.squad}}(dead)
		&\ \ = \ \ \big( 1 \cdotl r_3 * shoot(suzy) \big) \cdot r_1
		\ \ + \ \ shoot(billy) \cdotl r_2
		\\
		&\ \ = \ \ \big( r_3 * shoot(suzy) \big) \cdot r_1
		\ \ + \ \ shoot(billy) \cdotl r_2
\end{align*}
which shows that John is not considered to have contributed to the prisoner's death.
Hence,
$\ashortprison(john)$ is not a conclusion of program~\programref{prg:broken.firing.squad.default}.
It is worth to mention that
besides the two syntactic differences between causal queries and m-queries already mentioned,
there is a, perhaps, less noticeable difference in the evaluation of causal literals.
% In particular, \cite{fandinno2015aspocp} evaluates causal literals as follows
% \begin{align*}
% I(\cliteral{\mquery}{\rA}) \ \ &\eqdef \ \
%   \sum\setbm{ G \in \graphs}{G \leq I(\rH) \text{ and } \ \mquery(G) = 1  }
% \end{align*}
Note that,
\begin{gather*}
\big( repair(john) \cdotl r_3 * shoot(suzy) \big) \cdot r_1
	\ \ \leq \ \ \big( r_3 * shoot(suzy) \big) \cdot r_1
\end{gather*}
and, thus,  if we replaced~$G \leqmax I(\rA)$ by~$G \leq I(\rA)$ in Definition~\ref{def:causal.literal.evaluation} (as done in~\citeNP{fandinno2015aspocp}),
it would follows that
atom~$\ashortprison(john)$ would be an unintended conclusion of program~\programref{prg:broken.firing.squad.default}.
It is also worth to mention that, besides \cite{Pearl00}
%Mackie1965
approach, the notion of contributory cause is also behind the definitions of actual cause given in~\cite{HP05,hall2007structural}.
%hall2004, hitchcock2001intransitivity

\vspace{-0.15cm}

\section{Properties of causal logic programs}
\label{sec:properties}

Theorem~\ref{thm:regluar.standard.correspondnece}
 % and~\ref{cor:regluar.correspondnece}
established a correspondence for regular programs, but they say nothing about programs with causal queries.
For instance, positive program with non-monotonic causal literals may have more than one causal stable model.
Consider the following positive program~\newprogram\label{prg:positive.two-cmodels}
\begin{gather*}
\begin{IEEEeqnarraybox}[][t]{lC ' l C l , lrl}
r_1 &:& p
\\
r_3 &:& q
\end{IEEEeqnarraybox}
\hspace{2cm}
\begin{IEEEeqnarraybox}[][t]{lC ' l C l , lrl}
r_2 &:& q  &\lparrow& \ag_1 \necessary p
\\
r_4 &:& p  &\lparrow& \ag_2 \necessary q
\end{IEEEeqnarraybox}
\end{gather*}
obtained by adding rules $r_3$ and $r_4$ to program~\programref{prg:positive.two-minimal-models} and where $\ag_2 \eqdef \set{r_3}$.
Program~$\programref{prg:positive.two-cmodels}$
has two causal stable causal models.
The first that satisfies
$I_{\ref{prg:positive.two-cmodels}}(p) = r_1 + r_3 \cdotl r_4$
and
$I_{\ref{prg:positive.two-cmodels}}(q) = r_3$.
The second
$I'_{\ref{prg:positive.two-cmodels}}(p) = r_1$
and
$I'_{\ref{prg:positive.two-cmodels}}(q) = r_3 + r_1 \cdotl r_2$.
Let now $Q=\programref{prg:positive.two-cmodels}^{I_{\ref{prg:positive.two-cmodels}}}$ be the reduct of program~$\programref{prg:positive.two-cmodels}$ w.r.t. $I_{\ref{prg:positive.two-cmodels}}$, which consists in the following rules
\begin{gather*}
\begin{IEEEeqnarraybox}[][t]{lC ' l C l , lrl}
r_1 &:& p
\\
r_3 &:& q
\end{IEEEeqnarraybox}
\hspace{2cm}
\begin{IEEEeqnarraybox}[][t]{lC ' l C l , lrl}
r_2 &:& q  &\lparrow& (\cliteral{\cquery_1}{p})
\\
r_4 &:& p  &\lparrow& (\cliteral{\cquery_2}{q})
\end{IEEEeqnarraybox}
\end{gather*}
where
$\cquery_1(G,t) = 1$ iff there exists some $G' \leq G$ such that
$G' \leqmax I_{\ref{prg:positive.two-cmodels}}(p) = r_1 + r_3 \cdotl r_4$
and $\cquerynec_{\ag_1}(G',I_{\ref{prg:positive.two-cmodels}}(p))$
and 
$\cquery_2(G,t) = 1$ iff there exists $G' \leq G$ such that
$G' \leqmax I_{\ref{prg:positive.two-cmodels}}(q) = r_3$
and $\cquerynec_{\ag_2}(G',I_{\ref{prg:positive.two-cmodels}}(p))$.
First, note that
$\cquerynec_{\ag_1}(G', I_{\ref{prg:positive.two-cmodels}}(p))$
iff $I_{\ref{prg:positive.two-cmodels}}(p) = r_1 + r_3 \cdotl r_4 \leq \sum\ag_1 = r_1$ which does not hold.
Thus,
$\cquery_1(G,t) = 0$ for every $G \in \causes$ and $t \in \values$.
Then, it is clear that the body of rule $r_2$ is never satisfied and, therefore,
$\tprP{Q}{\alpha}(q)  =  r_3$ for any ordinal~$\alpha\geq 1$.
It can also be checked that
$\cquery_2(r_3, \tprP{Q}{\alpha}(q)) = 1$
because there exists $G' = r_3$ such that $G' \leqmax I_{\ref{prg:positive.two-cmodels}}'(q)= r_3$
and
$\cquerynec_{\ag_2}(G', I_{\ref{prg:positive.two-cmodels}}(q))
	= \cquerynec_{\ag_2}(r_3,r_3)=1$ since $r_3 \leq \sum\ag_2= r_3$.
Hence, since
$r_3 \leqmax \tprP{Q}{\alpha}(q)$ and
$\cquery_2(r_3, \tprP{Q}{\alpha}(q)) = 1$,
it follows that
$\tprP{Q}{\alpha}(\cliteral{\cquery_2}{q})  =  r_3$
and
$\tprP{Q}{\beta}(p) = r_1 + \tprP{Q}{\alpha}(q) \cdotl r_4 =  r_1 + r_3 \cdotl r_4 =  I_{\ref{prg:positive.two-minimal-models}}(p)$ for any ordinal~$\beta \geq 2$.
Hence,
$I_{\ref{prg:positive.two-cmodels}}$ is the least model of~$\programref{prg:positive.two-cmodels}^{I_{\ref{prg:positive.two-cmodels}}}$ and a causal stable model of  program~$\programref{prg:positive.two-cmodels}$.
Showing that $I_{\ref{prg:positive.two-cmodels}}'$ is also a causal stable model of~$\programref{prg:positive.two-cmodels}$ is symmetric. 

In the following we revise some desired general properties for a LP semantics.
First, causal stable models should also be supported models.
Note that the concept of supported model bellow
is analogous to the usual concept used in standard~LP,
but it is stronger in the sense that, not only requires that true atoms are supported, but also all their causes must be supported by a rule and a cause of its body.

\begin{definition}
A interpretation $\cI$ is a \emph{(causally) supported model} of a program $P$ iff $\cI$ is a model of $P$ and for every true atom $\rA$ and cause $G \in \causes$ such that $G \leq I(\rA)$ there is a rule~$\R$ in $P$ of the form of~\eqref{eq:rule} such that
$G \leq (\, I(\rB_1) * \dotsc * I(\rB_\npbody)) \cdot r_i$.\qed
\end{definition}

\begin{Proposition}{\label{prop:csm.are.supported.models}}
% Let $P$ be a causal program.
Any causal stable model $I$ of a program $P$ is a also supported model of $P$.\qed
\end{Proposition}

Furthermore, as happen with programs with nested negation under the standard stable models  semantics (where stable models may not be minimal models of the program),
causal stable models may not be minimal models either.
In fact, this may happen even when the nested negation is replaced by a non-monotonic causal literal.
Consider, for instance, the following program~\newprogram\label{prg:non-minimal.models}
\begin{gather*}
\begin{IEEEeqnarraybox}{lC? l ;C; rl , lrl}
r_1   &:&   p
\end{IEEEeqnarraybox}
\hspace{2cm}
\begin{IEEEeqnarraybox}{lC? l ;C; rl , lrl}
r_2   &:&   p   &\lparrow& \Not\ (\ag_1 \necessary p)
\end{IEEEeqnarraybox}
\end{gather*}
where $\ag_1 \eqdef \set{ r_1 }$.
Program~$\programref{prg:non-minimal.models}$ has two causal models.
One which satisfies
$I_{\ref{prg:non-minimal.models}}(p) = r_1$.
The other which satisfies
$I_{\ref{prg:non-minimal.models}}'(p) = r_1 + r_2$.
We define now the notion of \emph{normal program} whose causal stable models are also $\leq$-minimal models.
A program~$P$ is \emph{normal} iff no body rule in~$P$ contains a consistent literal (double negated literal) nor a negated non-monotonic causal literal.
In other words, a program is normal iff it does not contains nested negation nor non-monotonic causal literals in the scope of negation.

\begin{Proposition}{\label{prop:csm.are.minimal.models}}
% Let $P$ be a normal causal program.
% Then, 
Any causal stable model $I$ of normal program~$P$ is also a \mbox{$\leq$-minimal} model.\qed
\end{Proposition}

\paragraph{Splitting programs.}
\label{sec:splitting}

The intuitive meaning of the causal rule~\eqref{eq:r4.labelled} in programs~\programref{prg:necc.labelled} and~\programref{prg:necc3.labelled} is to cause the atom~$\afine(suzy)$ whenever the causal query expressed by its body is true with respect to a programs~\programref{prg:accident} and~\programref{prg:accident.billy}, respectively.
This intuitive understanding can be formalised as a splitting theorem in~\cite{lifschitz1994splitting}.

\begin{Theorem}[Splitting]{\label{thm:splitting}}
Let $\tuple{P_b,P_t}$ a partition of a program $P$ such that no atom occurring in the head of a rule in $P_t$ occurs in $P_b$.
An interpretation~$I$ is a causal stable model of $P$ iff there is some causal stable model $J$ of $P_b$ such that
$I$ is a causal stable model of
$(J \cup P_t)$.
% Furthermore, if $\tuple{P_b,P_t}$ is a strict splitting,
% then $J = \restr{I}{S}$ where $S$ is the set of atoms of all atoms not occurring in the head of any rule in~$P_t$.
\qed
\end{Theorem}

In our running example,
the bottom part are
$P_{\ref{prg:necc.labelled},b} = \programref{prg:accident}$ and
$P_{\ref{prg:necc3.labelled},b} = \programref{prg:accident.billy}$ 
while their top part
$P_{\ref{prg:necc.labelled},t}=P_{\ref{prg:necc3.labelled},t}$ is the program containing the rule~\eqref{eq:r4.labelled}.
This result can be generalised to infinite splitting sequences as follows.
% A consequence of Theorem~\ref{thm:splitting.nested} is that stratified programs has a unique causal stable models.

\begin{definition}
A \emph{splitting sequence} of a program $P$ is a family $(P_\alpha)_{\alpha < \mu}$ of pairwise disjoint sets such that
$P = \bigcup_{\alpha < \mu} P_\alpha$
and no atom occurring in the head of a rule in some $P_\alpha$ occurs in the body of a rule in $\bigcup_{\beta < \alpha} P_\beta$.
A \emph{solution} of a splitting $(P_\alpha)_{\alpha < \mu}$
is a family $(I_\alpha)_{\alpha < \mu}$
such that
\begin{enumerate_thm}
\item $I_0$ is a stable model of $P_0$,
% \item $I_{\alpha}$ is a stable model of $(I_{\alpha-1} \cup P_{\alpha})$ for any successor ordinal $\alpha < \mu$, and
% \item $I_{\alpha}$
%  is a stable model of $(J_{\alpha} \cup P_{\alpha})$ for any limit ordinal $\alpha < \mu$  where
%  $J_\alpha = \sum_{\beta < \alpha} I_\beta$.
 \item $I_{\alpha}$
 is a stable model of $(J_{\alpha} \cup P_{\alpha})$ for any ordinal $0 < \alpha < \mu$  where
 $J_\alpha = \sum_{\beta < \alpha} I_\beta$.
\end{enumerate_thm}
A splitting sequence is said to be \emph{strict in $\alpha$} if, in addition, no atom occurring in the head of a rule in $P_\alpha$ occurs (in the head of a rule) in $\bigcup_{\beta < \alpha} P_\beta$   
and it is said to be \emph{strict} if it is strict in $\alpha$ for every $\alpha < \mu$.\qed
\end{definition}

\begin{Theorem}[Splitting sequences]{\label{thm:infinity.splitting}}
Let $(P_\alpha)_{\alpha < \mu}$ a splitting sequence of some program $P$.
An interpretation~$I$ is a causal stable model of $P$ iff
there is some solution $(I_\alpha)_{\alpha < \mu}$ of $(P_\alpha)_{\alpha < \mu}$
such that $I = \sum_{\alpha < \mu} I_\alpha$.
Furthermore, if such solution is strict in $\alpha$, then
$I_\alpha = \restr{I}{S_\alpha}$ where $S_\alpha$ is the set of all atoms not occurring in the head of any rule in~$\bigcup_{\alpha < \beta < \mu } P_\beta$
and $\restr{I}{S_\alpha}$ denotes the restriction if $I$ to $S_\alpha$.\qed
\end{Theorem}

% \comment{
% \begin{Theorem}[Splitting]{\label{thm:infinity.splitting}}
% Let $(P_\alpha)_{\alpha < \mu}$ a splitting sequence of some program $P$.
% An interpretation~$I$ is a causal stable model of $P$ iff
% there is some solution $(I_\alpha)_{\alpha < \mu}$ of $(P_\alpha)_{\alpha < \mu}$
% such that $I = \sum_{\alpha < \mu} I_\alpha$.
% Furthermore, if such solution is strict in $\alpha$, then
% $I_\alpha = \restr{I}{S_\alpha}$ where $S_\alpha$ is the set of all atoms not occurring in the head of any rule in~$\bigcup_{\alpha < \beta < \mu } P_\beta$.\qed
% \end{Theorem}
% }

A program $P$ is said to be \emph{stratified} if there is a some ordinal $\mu$ and mapping~$\lambda$ from the set of atoms~$\at$ into the set of ordinals $\set{\alpha < \mu}$ such that, for every rule of the form~\eqref{eq:rule} and atom $\rB$ occurring in its body, it satisfies $\lambda(\rH) \geq \lambda(\rB)$ if $\rB$ does not occur in the scope of negation nor in a non-monotonic causal literal, and $\lambda(\rH) > \lambda(\rB)$ if $\rB$ does occur under the scope of negation or in a non-monotonic causal literal.

\begin{Proposition}{\label{prop:stratified}}
Every stratified causal program $P$ has a unique causal stable model.\qed
\end{Proposition}

% Finally, a Splitting Theorem for possible infinite splitting sequences of causal programs analogous to the result of~\cite{lifschitz1994splitting} for the standard stable model semantics can be found in the Appendix~\ref{sec:nested}.

\section{Conclusions, related work and open issues}
\label{sec:conclusions}

The main contribution of this work is the introduction of a semantics for non-monotonic \emph{causal literals} that allow deriving new conclusions by inspecting the causal justifications of atoms in an \emph{elaboration tolerant} manner.
In particular, we have used causal literals to define \emph{necessary} and \emph{contributory causal relations} which are intuitively related to some of the most established definitions of actual causation in the literature~\cite{Pearl00,HP05,hall2007structural,halpern2015modification}.
% hitchcock2001intransitivity
Besides, by some running examples we have shown that causal literals allow, not only to derive whether some event is the cause or not of another event, but also to derive new conclusions from this fact.
From a technical point of view, we have shown that our semantics is a conservative extension of the stable model semantics and that satisfy the usual desired properties for an LP semantics (casual stable models are supported models, minimal models in case of normal programs and can be iteratively computed by split table programs).
It worth to mention that, besides the syntactic approaches to justifications in LP, the more related approach to our semantics is~\cite{damasio2013justifications}, for which a formal comparative can be found in~\cite{CabalarFF14inhibitors}
 and that \cite{pontelli2009justifications} allows a Prolog system to reason about justifications of an ASP program, but justifications cannot be inspected inside the ASP program.

Regarding complexity, it has been shown in~\cite{CabalarFF14Jelia} that  there may be an exponential number of causes for a given atom w.r.t. each causal stable model.
Despite that, the existence of stable model for programs containing only monotonic queries evaluable in polynomial time is $\NP$-complete~\cite{fandinno2015aspocp}.
For programs containing only necessary causal literals we can prove $\NP$-complete ($\NP$-hard holds even for programs containing a single negated regular literal or positive programs containing a single constraint, see Proposition~\ref{prop:complx.complete.necessary.cliterals} in the Appendix). The complexity for programs including other non-monotonic causal literals (like contributory) is still an open question.
A preliminary prototype extending the syntax of logic programs with causal literals capturing sufficient, necessary and contributory causal relation can be tested on-line at~\url{http://kr.irlab.org/cgraphs-solver/nmsolver}.

In a companion paper~\cite{CabalarF16},
the causal semantics used here has been extended to disjunctive logic programs, which will be useful for representing non-deterministic causal laws.
Interesting topics include a complexity assessment or studying an extension to arbitrary theories as with Equilibrium Logic \cite{Pea06} for the non-causal case;
 and formalise the relation between our notions of necessary and contributory cause with the above definitions of the actual causation and, in particular, with~\cite{Vennekens11} who has studied it in the context of CP-logic.
A promising approach seems to translate structural equations into logic programs in a similar way as it has been done to translate them into the causal theories~\cite{Giunchiglia2004,bochmanL15}.

% Natural next steps are also allowing nested expressions in rule heads, which will be useful for representing non-deterministic causal laws, and replacing the syntactic definition in favour of a logical treatment of default negation, as done for instance with the Equilibrium Logic~\cite{Pearce96} characterisation of stable models.
% Another interesting step would be adding epistemic operators~\cite{gelfondP91epeistemic} so that we can derive new conclusions from the fact that we know, or we can believe, that somebody has caused an event.
% We also plan to extend the action language~$\acBC$~\cite{leeLY13} to deal with necessary and contributory causal relations in a similar sense as it was extended in~\cite{fandinno2015aspocp} to derive new conclusions from sufficient causal relations.

% % \bibliography{refs}
% % \bibliographystyle{aaai}
% % \bibliographystyle{splncs}
% % \bibliography{refs}
\bibliographystyle{acmtrans}
\bibliography{refs}

% \comment{limit 14 pages including references

% % aaaaaaaaaaaaaaaaaaa

% % bbbbbbbbbbbbbbbbbbb

% % aaaaaaaaaaaaaaaaaaa

% % bbbbbbbbbbbbbbbbbbb

% % aaaaaaaaaaaaaaaaaaa

% % bbbbbbbbbbbbbbbbbbb
% }

\newpage

\appendix

\section{Nested expressions in rule bodies}
\label{sec:nested}

In this section we extend the syntax presented in Section~\ref{sec:causal.semantics} in order to allow nested expressions in rule bodies~\cite{lifschitzTT99nested}.

\begin{definition}
A \emph{formula} $\fF$ is recursively defined as one of the following expressions
\begin{gather*}
\fF \ \ ::= \ \ t \ \mid \ \rC \ \mid \ \fG , \fH \ \mid \ \fG ; \fH \ \mid \ \Not \fG
\end{gather*}
where $t$ is a term, $\rC$ is a causal literal (Definition~\ref{def:causal.literal}) and both, $\fG$ and $\fH$ are formulas in their turn. \QED
\end{definition}

A formula $\fF$ is said to be \emph{elementary} iff it is a term $t$ or a causal literal $\rC$.
It is said to be \emph{regular} iff every causal literal occurring in it is regular and is said to be \emph{positive} iff the operator $\Not$ does not occur in it.
$\fF$ is said to \emph{monotonic} iff every causal literal occurring in $\fF$ is monotonic.
In formulas, we will write $\top$ and $\bot$ instead of $1$ and $0$, respectively.

\begin{definition}[Causal logic program]\label{def:causal.P.nested}
Given a signature $\signature$, a \emph{(causal logic) program} $P$ is a set of rules of the form:
\begin{eqnarray}
r_i: \ \ \rH \ 
    \leftarrow \ \fF
    \label{eq:rule.nested}
\end{eqnarray}
\noindent where \mbox{$r_i\in Lb$} is a label or $r_i=1$, $\rH \in \at$ (the \emph{head} of the rule) is an atom or $\rH = \bot$ and $\fF$ (the \emph{body} of the rule) is a formula.\QED
\end{definition}

A rule $\R$ is said to be \emph{regular} iff its body is regular and its said to be \emph{positive} iff its body is positive and $\rH \neq \bot$.
It is said to be \emph{monotonic} iff $\fF$ is monotonic.
If \mbox{$\fF =\top$}, we say the rule is a \emph{fact} and omit the body and sometimes the symbol~~`$\leftarrow$.'
A program $P$ is \emph{regular}, \emph{positive} or \emph{monotonic} when all its rules are regular, positive or monotonic, respectively.
% It is \emph{uniquely labelled} when each rule has a different label or no label at all, and we will we assume that programs are uniquely labelled.
A \emph{standard program} is a regular in which the label of every rule is `$1:$'.
Definition~\ref{def:causal.P.nested} extends Definition~\ref{def:causal.P} by allowing nested expressions in the rule bodies.
A causal program in the sense of Definition~\ref{def:causal.P} is a program in which the body~$\fF$ of all rules are conjunctions of regular causal literals or their negation.
Note that every  rule of the form of~\eqref{eq:rule} with $\npbody = 0$ corresponds to a rule of the form of $(r_i : \ \rH \lparrow \top)$.

\paragraph{Semantics.}
The semantics of causal logic programs with nested expressions is given as follows.

\begin{samepage}
\begin{definition}[Valuation]
\label{def:formula.evaluation}
The valuation of causal literals and causal terms is as given by Definition~\ref{def:causal.literal.evaluation}.
Otherwise, the valuation of a formula $\fF$ is recursively defined as follows
\\[-4pt]
\begin{minipage}{\textwidth}
\begin{minipage}[t]{.45\textwidth}
\begin{IEEEeqnarray*}{l ; C ; l C l}
I(\fG,\fH)      &=&     I(\fG) &*& I(\fH)\\
I(\fG;\fH)      &=&     I(\fG) &+& I(\fH)
% \\
% I(\Not \rE)   &=&     \begin{cases}
%                     1   &\text{iff } I(\rE) = 0\\
%                     0   &\text{otherwise}
%                     \end{cases}
\end{IEEEeqnarray*}
\end{minipage}
\begin{minipage}[t]{.538\textwidth}
\begin{IEEEeqnarray*}{l ; C ; l}
% I(\rE,H)      &=&     I(\rE) * I(H)\\
% I(\rE;H)      &=&     I(\rE) + I(H)\\
I(\Not \fG)   &=&     \begin{cases}
                    1   &\text{iff } I(\fG) = 0\\
                    0   &\text{otherwise}
                    \end{cases}
\end{IEEEeqnarray*}
\end{minipage}
\end{minipage}
\\[6pt]
% \begin{IEEEeqnarray}{l ? C ? l}
% I(G,H)      &=&     I(G) * I(H)\\
% I(G;H)      &=&     I(G) + I(H)\\
% I(\Not G)   &=&     \begin{cases}
%                     1   &\text{iff } I(G) = 0\\
%                     0   &\text{otherwise}
%                     \end{cases}
% \end{IEEEeqnarray}
We say that $I$ satisfies a formula $\fF$, in symbols $I \models \fF$, iff $I(\fF) \neq 0$.\QED
\end{definition}
\end{samepage}

% Note that, in formulas, we write the symbols $\top$ and $\bot$ instead of causal terms~$1$ and $0$ and, therefore, $I(\top)=I(1)=1$ and $I(\bot)=I(0)=0$.

\begin{definition}[Causal model]\label{def:causal.model.nested}
Given a rule $\R$ of the form~\eqref{eq:rule.nested},
we say that an interpretation $I$ \emph{satisfies} $\R$, in symbols $I \models \R$,
if and only if the following condition holds:
\begin{gather}
I(\fF)  \cdot r_i \ \ \leq \ \ I(\rH)
  \label{eq:causal.stable.model.nested}
\end{gather}
An interpretation $I$ is a \emph{causal model} of $P$, in symbols $I\models P$ iff $I$~satisfies all rules in~$P$.\QED
\end{definition}

% \subsection{Correspondence with Definition~\ref{def:causal.model}}

The following result shows that Definition~\ref{def:causal.model.nested} agrees with Definition~\ref{def:causal.model} for programs within the syntax of Definition~\ref{def:causal.P} and, thus, the former is a conservative extension of the last to programs with nested expressions in the body.

\begin{Proposition}{\label{prop:nested.correspondence}}
For any program~$P$ with the syntax of Definition~\ref{def:causal.P}, an interpretation $I$ is a model of~$P$ w.r.t. Definition~\ref{def:causal.model} iff $I$ is a model of $P$ w.r.t. Definition~\ref{def:causal.model.nested}.\qed
\end{Proposition}

We also can extend the definition of the direct consequences operator to programs with nested expressions as follows.

\begin{definition}[Direct consequences]\label{def:tp.nested}
Given a causal program with nested expressions~$P$, the operator of \emph{direct consequences} is a function $\tp$ from interpretations to interpretations such that
\begin{align*}
\tp(I)(\rA)
    \ \ &\eqdef \ \
    \sum \setbm{ \ I(\fF)  \cdot r_1 }
    		{ (r_i : \ \rA \lparrow \fF) \in P \ }
\end{align*}
for any interpretation $I$ and any atom $\rA \in \at$.
The iterative procedure is defined as usual
\begin{align*}
% \tprI{\alpha}{I} &\ \ \eqdef \ \ I
%         &&\text{if } \alpha = 0
% \\
\tpr{\alpha} &\ \ \eqdef \ \ \tp(\tpr{\alpha-1})
                &&\text{if } \alpha \text{ is a successor ordinal}
\\
\tpr{\alpha} &\ \ \eqdef \ \ \sum_{\beta < \alpha}\tpr{\beta}
                &&\text{if } \alpha \text{ is a limit ordinal}
\end{align*}
As usual $0$ and $\omega$ respectively denote the first limit ordinal and the first limit ordinal that is greater than all integers. Thus, $\tpr{0} = \botI$.\qed
\end{definition}

We will show in the Appendix~\ref{sec:m-programs} that, if $P$ is monotonic and positive, then the $\tp$ operator has a least fixpoint that can be computed by iteration from the bottom interpretation~$\botI$.

\subsection{Causal stable models of programs with nested expressions.}

\begin{definition}[Reduct]\label{def:reduct.nested}
The reduct of a causal literal and terms is as in Definition~\ref{def:reduct}.
The reduct of formulas is inductively defined as follows\\
\begin{minipage}{\textwidth}
\begin{minipage}[t]{.45\textwidth}
\begin{IEEEeqnarray*}{l ? C ? l}
(\fE,\fH)^I      &=&    (\fE^I , \fH^I)\\
(\fE;\fH)^I      &=&    (\fE^I ; \fH^I)
\end{IEEEeqnarray*}
\end{minipage}
\begin{minipage}[t]{.538\textwidth}
\begin{IEEEeqnarray*}{l ? C ? l}
% (\fE,\fH)^I      &=&    \fE^I , \fH^I\\
% (\fE;\fH)^I      &=&    \fE^I ; \fH^I\\
(\Not \fE)^I   &=&     \begin{cases}
                   \bot    &\text{if } I \models \fE^I\\
                   \top    &\text{otherwise}
                   \end{cases}
\end{IEEEeqnarray*}
\end{minipage}
\end{minipage}\\[6pt]
The reduct of program $P$ is the program
\begin{gather*}
P^I \ \ \eqdef \ \ \setm{ \R^I }{ \R \in P }
\end{gather*}
where the reduct $\R^I$ of a rule $\R$ like~\eqref{eq:rule} is given by
$(r_i: \ \aH \lparrow \fF^I)$.
\qed
\end{definition}

\begin{definition}[Formula equivalence]\label{def:equivalent.formula}
A formula $\fF$ is said to be \emph{equivalent} to a formula~$\fG$, in symbols $\fF \Leftrightarrow \fG$, iff any pair of causal interpretations $I$ and $J$ satisfy that
$I(\fF^J) = I(\fG^J)$.
\end{definition}

\begin{Proposition}{\label{prop:simplification}}
For any formula $F$, the following simplifications are valid
\begin{enumerate_thm}
\item 	$(\fF,\top) \Leftrightarrow  \fF$
	and $(\top,\fF) \Leftrightarrow  \fF$.
\item $(\fF;\top) \Leftrightarrow \top$
	and $(\top;F) \Leftrightarrow \top$.
\item $(\fF,\bot) \Leftrightarrow  \bot$
	and $(\bot,\fF) \Leftrightarrow  \fF$
\item $(\fF;\bot) \Leftrightarrow  \fF$
	and $(\bot;\fF) \Leftrightarrow  \fF$.
\qed
\end{enumerate_thm}
\end{Proposition}

\begin{definition}[Causal stable model]\label{def:causal.smodel.nested}
We say that an interpretation $I$ is a \emph{causal stable model} of a program with nested expressions~$P$ iff $I$~is the least model of the positive monotonic program~$P^I$ (Definition~\ref{def:reduct.nested}).\QED
\end{definition}

\begin{Proposition}{\label{prop:nested.reduct.correspondence}}
For any program~$P$ with the syntax of Definition~\ref{def:causal.P}, the reduct of $P$ w.r.t. to an interpretation $I$ and Definition~\ref{def:reduct} is the same as the reduct of $P $ w.r.t. $I$ and Definition~\ref{def:reduct.nested} after applying the simplifications from Proposition~\ref{prop:simplification} and removing all rules whose body is $\bot$.
Consequently, the causal stable models of $P$ w.r.t. Definitions~\ref{def:causal.smodel}  and~\ref{def:causal.smodel.nested} are the same.\qed
\end{Proposition}

\begin{Proposition}{\label{prop:csm.are.models.nested}}
Let $P$ be a causal program with nested expressions.
Any causal stable model $I$ of $P$ is a model of~$P$.\qed
\end{Proposition}

\begin{Proposition}{\label{prop:csm.are.supported.models.nested}}
Let $P$ be a causal program with nested expressions.
Any causal stable model $I$ of a $P$ is a also supported model of $P$.\qed
\end{Proposition}

\begin{Proposition}{\label{prop:csm.are.minimal.models.nested}}
Let $P$ be a causal program with nested expressions.
Then, any causal stable model $I$ of~$P$ is also a \mbox{$\leq$-minimal} model of~$P$.\qed
\end{Proposition}

Note that Propositions~\ref{prop:csm.are.supported.models} and~\ref{prop:csm.are.minimal.models}
% and Theorem~\ref{thm:infinity.splitting}, 
in the main part of the paper, are direct consequences of Proposition~\ref{prop:nested.reduct.correspondence} together with Propositions~\ref{prop:csm.are.supported.models.nested} and~\ref{prop:csm.are.minimal.models.nested}, respectively.

\paragraph{Splitting programs.}
\label{sec:splitting2}

The intuitive meaning of the causal rule~\eqref{eq:r4.labelled} in programs~\programref{prg:necc.labelled} and~\programref{prg:necc3.labelled} is to cause the atom~$responsible(suzy,accident)$ whenever the causal query expressed by its body is true with respect to a programs~\programref{prg:accident} and~\programref{prg:accident.billy}, respectively.
This intuitive understanding can be formalised as a splitting theorem in~\cite{lifschitz1994splitting}.

\begin{Theorem}[Splitting]{\label{thm:splitting.nested}}
Let $\tuple{P_b,P_t}$ a splitting of some program with nested expressions $P$.
An interpretation~$I$ is a causal stable model of $P$ iff there is some causal stable model $J$ of $P_b$ such that
$I$ is a causal stable model of
$(J \cup P_t)$.
Furthermore, if $\tuple{P_b,P_t}$ is a strict splitting,
then $J = \restr{I}{S}$ where $S$ is the set of atoms of all atoms not occurring in the head of any rule in~$P_t$.\qed
\end{Theorem}

In our running example,
the bottom part are
$P_{\ref{prg:necc.labelled},b} = \programref{prg:accident}$ and
$P_{\ref{prg:necc3.labelled},b} = \programref{prg:accident.billy}$ 
while their top part
$P_{\ref{prg:necc.labelled},t}=P_{\ref{prg:necc3.labelled},t}$ is the program containing the rule~\eqref{eq:r4.labelled}.
We also can generalise this to infinite splitting sequences.
% A consequence of Theorem~\ref{thm:splitting.nested} is that stratified programs has a unique causal stable models.

\begin{definition}
A \emph{splitting sequence} of a program $P$ is a family $(P_\alpha)_{\alpha < \mu}$ of pairwise disjoint sets such that
$P = \bigcup_{\alpha < \mu} P_\alpha$
and no atom occurring in the head of a rule in some $P_\alpha$ occurs in the body of a rule in $\bigcup_{\beta < \alpha} P_\beta$.
A \emph{solution} of a splitting $(P_\alpha)_{\alpha < \mu}$
is a family $(I_\alpha)_{\alpha < \mu}$
such that
\begin{enumerate_thm}
\item $I_0$ is a stable model of $P_0$,
% \item $I_{\alpha}$ is a stable model of $(I_{\alpha-1} \cup P_{\alpha})$ for any successor ordinal $\alpha < \mu$, and
% \item $I_{\alpha}$
%  is a stable model of $(J_{\alpha} \cup P_{\alpha})$ for any limit ordinal $\alpha < \mu$  where
%  $J_\alpha = \sum_{\beta < \alpha} I_\beta$.
 \item $I_{\alpha}$
 is a stable model of $(J_{\alpha} \cup P_{\alpha})$ for any ordinal $0 < \alpha < \mu$  where
 $J_\alpha = \sum_{\beta < \alpha} I_\beta$.
\end{enumerate_thm}
A splitting sequence is said to be \emph{strict in $\alpha$} if, in addition, no atom occurring in the head of a rule in $P_\alpha$ occurs (the head of a rule) in $\bigcup_{\beta < \alpha} P_\beta$   
and it is said to be \emph{strict} if it is strict in $\alpha$ for every $\alpha < \mu$.\qed
\end{definition}

\begin{Theorem}[Splitting sequences]{\label{thm:infinity.splitting.nested}}
Let $(P_\alpha)_{\alpha < \mu}$ a splitting sequence of some program with nested expressions $P$.
An interpretation~$I$ is a causal stable model of $P$ iff
there is some solution $(I_\alpha)_{\alpha < \mu}$ of $(P_\alpha)_{\alpha < \mu}$
such that $I = \sum_{\alpha < \mu} I_\alpha$.
Furthermore, if such solution is strict in $\alpha$, then
$I_\alpha = \restr{I}{S_\alpha}$ where $S_\alpha$ is the set of all atoms not occurring in the head of any rule in~$\bigcup_{\alpha < \beta < \mu } P_\beta$.\qed
\end{Theorem}

% \comment{
% \begin{Theorem}[Splitting]{\label{thm:infinity.splitting}}
% Let $(P_\alpha)_{\alpha < \mu}$ a splitting sequence of some program $P$.
% An interpretation~$I$ is a causal stable model of $P$ iff
% there is some solution $(I_\alpha)_{\alpha < \mu}$ of $(P_\alpha)_{\alpha < \mu}$
% such that $I = \sum_{\alpha < \mu} I_\alpha$.
% Furthermore, if such solution is strict in $\alpha$, then
% $I_\alpha = \restr{I}{S_\alpha}$ where $S_\alpha$ is the set of all atoms not occurring in the head of any rule in~$\bigcup_{\alpha < \beta < \mu } P_\beta$.\qed
% \end{Theorem}
% }

A program $P$ is said to be \emph{stratified} iff there is a some ordinal $\mu$ and mapping mapping $\lambda$ from the set of atoms~$\at$ into the set of ordinals $\set{\alpha < \mu}$ such that, for every rule of the form~\eqref{eq:rule.nested} and atom $\rB$ occurring in the body $\fF$, it satisfies $\lambda(\rH) \geq \lambda(\rB)$ if $\rB$ does not occur in the scope of negation or a non-monotonic causal literal, and $\lambda(\rH) > \lambda(\rB)$ if $\rB$ does occur under the scope of negation or a non-monotonic causal literal.

\begin{Proposition}{\label{prop:stratified.nested}}
Every stratified causal program with nested expressions $P$ has a unique causal stable model if it does not contain any rule whose head is $\bot$.\qed
\end{Proposition}

Propositions~\ref{prop:stratified}, in the main part of the paper, is a direct consequence of Propositions~\ref{prop:nested.reduct.correspondence} and~\ref{prop:stratified.nested}.

% Note that Propositions~\ref{prop:csm.are.supported.models}, \ref{prop:csm.are.minimal.models} and~\ref{prop:stratified}, and Theorems~\ref{thm:splitting} and~\ref{thm:infinity.splitting}, in the main part of the paper, are direct consequences of Proposition~\ref{prop:nested.reduct.correspondence} together with Propositions~\ref{prop:csm.are.supported.models.nested}, \ref{prop:csm.are.minimal.models.nested} and~\ref{prop:stratified.nested}, and Theorems~\ref{thm:splitting.nested} and~\ref{thm:infinity.splitting.nested}, respectively.

\paragraph{Normal form.}
Proposition~\ref{prop:nested.reduct.correspondence} show that Definition~\ref{def:causal.smodel.nested} is a conservative extension of Definitions~\ref{def:causal.smodel}.
In the following we show that, in fact, the syntax of~Definition~\ref{def:causal.P} is a normal form, that is, for every program $P$ in the syntax of Definition~\ref{def:causal.P.nested}, there is some program $Q$ with the syntax of Definition~\ref{def:causal.P} which has exactly the same causal stable models than $P$.

\begin{definition}\label{def:equivalent.program}
For program $P$ and $Q$ we write $P \Leftrightarrow Q$ when $I$ satisfies all rules in~$P^J$ iff $I$ satisfies all rules in~$Q^J$ for any pair of causal interpretations $I$ and $J$.\qed
\end{definition}

\begin{definition}[Strong equivalence]\label{def:strong.equivalence}
Two programs $P$ and $Q$ are said to be \emph{strongly equivalent} iff for every program $P'$, $(P \cup P')$ and $(Q \cup P')$ have the same causal stable models.
\end{definition}

\begin{Proposition}{\label{prop:strongly.equivalent}}
Any two causal programs $P$ and $Q$ s.t. $P \Leftrightarrow Q$ are strongly equivalent.\qed
\end{Proposition}

\begin{Proposition}{\label{prop:program.replacmentet}}
Let $P$ be a causal program, and let $\fF$ and $\fG$ be a pair of equivalent formulas, that is $\fF \Leftrightarrow \fG$. Any program obtained from $P$ by replacing some occurrences of $\fF$ by $\fG$ is strongly equivalent to~$P$.\qed
\end{Proposition}

The following result collects some of equivalence among formulas that correspond to those in~\cite{lifschitzTT99nested}.

\begin{Proposition}{\label{prop:transformations}}
For any formulas $\fF$, $\fG$ and $\fH$,
\begin{enumerate_thm}
\item $\fF,\fG \Leftrightarrow  \fG,\fF$ and $\fF;G \Leftrightarrow  G;F$.
\item $\fF,(\fG,\fH) \Leftrightarrow  (\fF,\fG),\fH$ and $\fF;(\fG;\fH) \Leftrightarrow  (\fF;\fG);\fH$.
\item $\fF,(\fG;\fH) \Leftrightarrow  (\fF,\fG);(\fF,\fH)$ and $\fF;(\fG,\fH) \Leftrightarrow  (\fF;\fG),(\fF;\fH)$.

\item $\Not (\fF,\fG) \Leftrightarrow  I(\Not \fF; \Not \fG)$ and
$\Not (\fF;\fG)) \Leftrightarrow  \Not \fF, \Not \fG$.

\item  $\Not\Not\Not F \Leftrightarrow \Not \fF$.

\item $\fF,\top \Leftrightarrow  F$ and $\fF;\top \Leftrightarrow \top$.
\item $\fF,\bot \Leftrightarrow  \bot$ and $\fF;\bot \Leftrightarrow  \fF$.
\item $\Not \top \Leftrightarrow \bot $ and $\Not \bot \Leftrightarrow  \top$.
\qed
\end{enumerate_thm}
\end{Proposition}

A formula $F$ is said to be a \emph{simple conjunction} (resp. \emph{simple disjunction}) iff is a conjunction (resp. disjunction) of elementary formulas.

\begin{Proposition}{\label{prop:formula.normal.form}}
Any formula $F$ is equivalent to a formula of the form
\begin{enumerate_thm}
\item  $F_1; \dotsc ; F_n$ where $n \geq 1$ and each $F_i$ is a simple conjunction, and
\item  $F_1, \dotsc , F_n$ where $n \geq 1$ and each $F_i$ is a simple disjunction.\qed
\end{enumerate_thm}
\end{Proposition}

\begin{Proposition}{\label{prop:body.disjuntion.elimination}}
A causal rule ($r_i : \ \rA \leftarrow \fF; \fG$) is equivalent to
\begin{align*}
r_i : \ \rA &\ \leftarrow \fF\\
r_i : \ \rA &\ \leftarrow \fG
\end{align*}
for any label $r_i$, atom $\rA$ and formulas $\fF$ and $\fG$.\qed
\end{Proposition}

\begin{Proposition}{\label{prop:program.normal.form.strong}}
Any program is strongly equivalent of a set of rules of the form~\eqref{eq:rule} if $\bot$ is allowed in the head.\qed
\end{Proposition}

\begin{Proposition}{\label{prop:program.normal.form}}
For every program $P$, there is some program $Q$ with the syntax of Definition~\ref{def:causal.P} which has exactly the same causal stable models than $P$.\qed
\end{Proposition}

\section{Uniform reduct for monotonic and non-monotonic queries}

An issue with Definitions~\ref{def:reduct} and~\ref{def:reduct.nested} is that it is necessary to know whether a causal query is monotonic or not to apply the reduct.
This can be provided by the user, but otherwise automatically checked whether a causal query is monotonic or not can be computationally costly.
In the following, we show that, in fact, this distinction is not necessary and that the reduct can be applied uniformly to monotonic and non-monotonic causal literals.

\begin{definition}[Reduct]\label{def:reduct.uniform}
The \emph{reduct} of causal queries is defined as in Definition~\ref{def:reduct}.
The reduct of a causal literal is given by~$(\cliteral{\cquery^{I(\rA)}}{\rA})$ for any causal literal of the form of~$(\cliteral{\cquery}{\rA})$.
The reduct of formulas, rules and programs is then defined as in Definition~\ref{def:reduct.nested}.\qed
\end{definition}

Definition~\ref{def:reduct.uniform} applies the reduct uniformly to monotonic and non-monotonic causal literals.
A consequence of this fact is that the reduct of monotonic programs is not itself and, in fact, the least model of the reduct of a monotonic program $P^I $w.r.t. an interpretation $I$ can be different according to Definitions~\ref{def:reduct.nested} and~\ref{def:reduct.uniform}.
Despite that, the following result shows that the causal stable models of a program~$P$ are the same in spite of whether Definition~\ref{def:reduct.nested} or Definition~\ref{def:reduct.uniform} is used.

\begin{Proposition}{\label{prop:program.reduct.is.monotonic.nested}}
Let $P$ be a causal program with nested expressions.
An interpretation $\cI$ is the least model of $P^\cI$ (according to Definition~\ref{def:reduct.nested}) iff $\cI$ is the least model of $P^\cI$ (according to Definition~\ref{def:reduct.uniform}). 
\qed
\end{Proposition}

\section{Comparative with~\protect\cite{fandinno2015aspocp}}
\label{sec:m-programs}

In this section we revise the syntax and semantics of causal programs given in~\cite{fandinno2015aspocp} and show how programs in this framework can be translated in ours.

\paragraph{Syntax.}
A \emph{m-query} is a monotonic function
$\mquery: \graphs \longrightarrow \set{0,1}$ assigning true or false to every causal graphs $G \in \causes$.
A \emph{signature} is a triple
\msignature\ where $\at$, $\lb$ and $\mqueries$ respectively represent sets of 
\emph{atoms} (or \emph{propositions}), \emph{labels}
and query functions.

\begin{definition}[m-literal]\label{def:m-literal}
A \emph{m-literal} is an expression 
$(\cliteral{\mquery}{\rA})$ where $\rA \in \at$ is an atom and $\mquery \in \mqueries$ is a m-query.\QED
\end{definition}

Formulas, rules and programs are defined as in our framework (Section~\ref{sec:nested}), but replacing causal literals (Definition~\ref{def:causal.literal}) by m-literals (Definition~\ref{def:m-literal}).

\paragraph{Semantics.}
The semantics of m-programs is as follows.

\begin{definition}[Valuation]
\label{def:m-formula.evaluation}
The \emph{valuation of a causal literal} of the form $(\cliteral{\mquery}{\rA})$ with respect to an interpretation $I$ is given by
\vspace{-0.01cm}
\begin{align*}
I(\cliteral{\mquery}{\rA}) \ \ &\eqdef \ \
  \sum\setbm{ G \in \graphs}{G \leq I(\rH) \text{ and } \ \mquery(G) = 1  }
\end{align*}
The valuation of causal terms and formulas is inductively defined as in Definition~\ref{def:formula.evaluation}.
\end{definition}

The definition of causal models and the $\tp$ operator is as in Definitions~\ref{def:causal.model.nested} and~\ref{def:tp.nested}, respectively, but evaluating formulas according to Definition~\ref{def:m-formula.evaluation} instead of Definition~\ref{def:formula.evaluation}.

\begin{theorem}[From~\protect\citeNP{fandinno2015aspocp}]\label{thm:m-tp.properties}
Let $P$ be a (possibly infinite) positive logic program (with nested expressions).
Then,
($i$) $\lfp(T_P)$ is the least model of $P$ and
($ii$)  $\lfp(T_P)=\tpr{\omega}$.\QED
\end{theorem}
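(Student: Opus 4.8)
The plan is to run the standard Knaster--Tarski and Kleene fixpoint arguments in the complete lattice of causal interpretations. Note first that every m-query is monotonic by definition, so a positive m-program is automatically positive and monotonic. I would then observe that, ordered pointwise by $\leq$, the interpretations form a complete lattice: by Definition~\ref{def:values} the set of causal values $\values$ is a completely distributive complete lattice, and a pointwise product of complete lattices is complete, with joins and meets computed atom by atom and with $\botI$ as bottom. Next I would show that $T_P$ is monotone, which reduces to showing that for a positive body (a positive monotonic formula $\fF$) the map $I \mapsto I(\fF)$ is monotone; this goes by structural induction on $\fF$, the base cases being terms (constant) and m-literals $(\cliteral{\mquery}{\rA})$, whose valuation $\sum\setbm{G \in \causes}{G \leq I(\rA) \text{ and } \mquery(G)=1}$ visibly grows with $I(\rA)$, and the inductive steps using monotonicity of $*$ and $+$. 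Since multiplying by $r_i$ and joining over all rules with a fixed head preserves monotonicity, $T_P$ is monotone, so by Knaster--Tarski it has a least fixpoint $\lfp(T_P)$, equal to the least pre-fixpoint $\bigwedge\setbm{I}{T_P(I) \leq I}$.

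For part~($i$), I would check that the models of $P$ are exactly the pre-fixpoints of $T_P$. For every atom $\rA$ we have $T_P(I)(\rA) = \sum\setbm{I(\fF) \cdot r_i}{(r_i : \rA \leftarrow \fF) \in P}$ by Definition~\ref{def:tp.nested}, hence $T_P(I)(\rA) \leq I(\rA)$ holds for all $\rA$ if and only if $I(\fF) \cdot r_i \leq I(\rA)$ for every rule, i.e.\ if and only if $I$ satisfies condition~\eqref{eq:causal.stable.model.nested} for all rules, i.e.\ $I \models P$. Consequently the least pre-fixpoint is exactly the least model of $P$, which together with the previous paragraph gives~($i$).

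For part~($ii$), I would establish that $T_P$ is Scott-continuous and invoke the Kleene fixpoint theorem to conclude $\lfp(T_P) = \bigvee_{n<\omega}\tpr{n} = \tpr{\omega}$ (using $\tpr{0}=\botI$ and the limit-ordinal clause of Definition~\ref{def:tp.nested}). Continuity of $T_P$ follows from continuity of each map $I \mapsto I(\fF)\cdot r_i$ together with the commutation of directed suprema with the (possibly infinite) outer join over rules, which is just an exchange of suprema valid in any complete lattice. Continuity of $I \mapsto I(\fF)$ is again proved by induction on $\fF$: for $\fG,\fH$ and $\fG;\fH$ one uses complete distributivity of the lattice (meet and join commute with directed suprema) and distributivity of application over arbitrary sums from Figure~\ref{fig:appl}. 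I expect the main obstacle to be the m-literal case: given a chain $(I_n)_n$ with $I(\rA) = \bigvee_n I_n(\rA)$, one must show that every causal graph $G$ with $G \leq \bigvee_n I_n(\rA)$ already satisfies $G \leq I_n(\rA)$ for some $n$ — that is, that each $G \in \causes$ is compact in the causal-value lattice. This is a structural feature of the causal-graph lattice of~\cite{CabalarFF14} (causal graphs being finitary objects); granting it, the m-literal valuation commutes with directed suprema, the induction closes, and summing over rules preserves continuity by the exchange of suprema, completing~($ii$).
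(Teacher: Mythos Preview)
The paper does not prove this statement: it is imported verbatim from \cite{fandinno2015aspocp} (note the ``[From \dots]'' annotation on the theorem header), and no proof appears in the appendix. There is therefore no in-paper argument to compare against; your direct Knaster--Tarski/Kleene proof is the standard route and is essentially correct.

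The one place you hedge---that each $G \in \causes$ must be compact so that the m-literal valuation commutes with directed suprema---is already available in the paper as Proposition~\ref{prop:join.prime} (itself cited from~\cite{CabalarFF14}): every $G \in \causes$ is \emph{completely join-prime}, i.e.\ $G \leq \sum_{t \in T} t$ implies $G \leq t$ for some $t \in T$, for \emph{arbitrary} $T \subseteq \values$, not just directed families. This is strictly stronger than compactness and closes your m-literal case immediately: from $G \leq \bigvee_n I_n(\rA)$ you get $G \leq I_n(\rA)$ for some $n$. Your informal rationale (``causal graphs being finitary objects'') is not quite right---terms may contain infinite products (Definition~\ref{def:values})---so the operative fact is the algebraic join-primeness, not finiteness. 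With Proposition~\ref{prop:join.prime} invoked at that step, your Scott-continuity induction goes through and part~($ii$) follows by Kleene exactly as you outline.
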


\begin{theorem}[From~\protect\citeNP{fandinno2015aspocp}]\label{thm:m-positive.two-valued.model.correspondence}
Let $P$ be a regular positive program (with nested expressions) and $Q$ its standard unlabelled version.
Then, the least model $J = I^{cl}$ of $Q$ is the two-valued interpretation corresponding to the least model $I$ of $P$.\QED
\end{theorem}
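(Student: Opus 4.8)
The plan is to compare the two programs through their direct-consequences operators and to show, by transfinite induction, that $(\tprP{P}{\alpha})^{cl} = \tprP{Q}{\alpha}$ for every ordinal $\alpha$; the statement then drops out at $\alpha = \omega$ by Theorem~\ref{thm:m-tp.properties} applied to $P$, together with the observation that the causal consequences operator restricted to two-valued interpretations coincides with the classical one, whose $\omega$-iteration from $\botI$ computes the least model $J$ of the standard program $Q$. The workhorse is an auxiliary lemma: for every causal interpretation $\cI$ and every \emph{positive regular} formula $\fF$ (one built from terms and regular m-literals by conjunction and disjunction only, with no negation), we have $\cI(\fF) \neq 0$ if and only if $\cI^{cl}$ classically satisfies $\fF$.

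I would prove the auxiliary lemma by structural induction on $\fF$. For a bare term $t$, $\cI(t) = t$, which differs from the value $0$ exactly when $t$ is classically true (and if $t \equiv 0$ the rule containing it can never fire, which is consistent with the classical reading). For a regular m-literal $(\cliteral{\mquery}{\rA})$ the valuation of Definition~\ref{def:m-formula.evaluation} collapses to $\cI(\rA)$, since the constant-true query selects the join of all causal graphs below $\cI(\rA)$, which by complete distributivity is $\cI(\rA)$ itself; hence the literal is satisfied iff $\cI(\rA) > 0$ iff $\cI^{cl}(\rA) = 1$. The cases $\fF = (\fG,\fH)$ and $\fF = (\fG;\fH)$ follow from the lattice facts that a finite product of values is $0$ iff some factor is $0$, and a finite sum is $0$ iff every summand is $0$. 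I would emphasise that regularity is precisely what makes this go through: for a merely monotonic, non-regular m-literal the valuation can be $0$ even when $\cI(\rA) > 0$, which would break the equivalence.

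Lifting this to the operators is then short. From $\tp(\cI)(\rA) = \sum\setm{\cI(\fF)\cdot r_i}{(r_i : \rA \lparrow \fF)\in P}$, using that a rule label is never the value $0$ and that the application $\cI(\fF)\cdot r_i$ is $0$ only when $\cI(\fF)=0$, we obtain that $\tpP{P}(\cI)(\rA) \neq 0$ iff the body of some rule of $P$ with head $\rA$ has non-zero value, iff (auxiliary lemma) that body is classically satisfied by $\cI^{cl}$, iff the matching label-free rule of $Q$ fires, iff $\tpP{Q}(\cI^{cl})(\rA) = 1$; that is, $(\tpP{P}(\cI))^{cl} = \tpP{Q}(\cI^{cl})$ whenever $\cI^{cl}$ is two-valued (and the result is again two-valued). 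The transfinite induction is now bookkeeping: the base case uses $(\botI)^{cl} = \botI$; the successor case applies this commutation to the induction hypothesis; and at a limit ordinal $\alpha$, $(\tprP{P}{\alpha})^{cl}(\rA) = 1$ iff $\sum_{\beta < \alpha}\tprP{P}{\beta}(\rA) \neq 0$ iff some $\tprP{P}{\beta}(\rA) \neq 0$ iff (hypothesis) some $\tprP{Q}{\beta}(\rA) = 1$ iff $\tprP{Q}{\alpha}(\rA) = 1$. Taking $\alpha = \omega$ gives $I^{cl} = (\tprP{P}{\omega})^{cl} = \tprP{Q}{\omega} = J$.

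The main obstacle I anticipate is the auxiliary lemma --- specifically, pinning down which rule-body shapes a regular positive program (with nested expressions) can actually contain, handling bare terms in bodies under the unlabelled translation, and making explicit that it is regularity, and not mere monotonicity, that forces the valuation of a body to track the classical truth values of its atoms. Once that lemma is settled, the operator commutation and the transfinite induction are routine.
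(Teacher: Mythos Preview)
The paper does not supply a proof of this statement: it is imported from prior work (the theorem is tagged ``From \ldots'') and stated without argument, so there is no in-paper proof against which to compare your attempt.

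Your approach is the natural one and is essentially correct. The auxiliary lemma and the $\tp$-induction are exactly what one would do, and the final appeal to Theorem~\ref{thm:m-tp.properties} (for both $P$ and $Q$) closes the argument. Two remarks on the places you yourself flagged.

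First, ``a finite product of values is $0$ iff some factor is $0$'' is not a general lattice fact (it fails already in a powerset lattice); it is a specific property of the causal-values algebra. It holds because every non-zero value dominates some causal graph, the product of two causal graphs is again a causal graph, and causal graphs are non-zero; hence $t*u \geq G_t * G_u > 0$ whenever $t,u > 0$. The same style of argument gives $t \cdot r_i \neq 0$ for $t \neq 0$ and $r_i$ a label or $1$, which you also need when passing from the body value to the head contribution. It is worth isolating this as a one-line algebraic observation rather than leaving it implicit.

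Second, your worry about bare terms in bodies is legitimate but peripheral: for the least model of $Q$ to be two-valued, bodies must evaluate in $\{0,1\}$ under two-valued interpretations, which can fail if a body contains a bare label. The intended scope of the statement is programs whose bodies are built from regular literals; under that reading your argument goes through cleanly, so the issue lies with the generality of the statement rather than with your proof.
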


The definition of reduct and causal stable models is as Definitions~\ref{def:reduct.nested} and Definition~\ref{def:causal.smodel.nested}.
 % but the reduct of every m-literal is itself.

\begin{theorem}[From~\protect\citeNP{fandinno2015aspocp}]
\label{thm:m-regluar.standard.correspondnece}
Let $P$ be a regular program (with nested expressions) and $Q$ be its corresponding standard program obtained by removing all labels in $P$.
Then, $P$ and $Q$ are two-valued equivalent.\QED
\end{theorem}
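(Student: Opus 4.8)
The plan is to reduce the statement to the positive-program correspondence already available as Theorem~\ref{thm:m-positive.two-valued.model.correspondence}, by transporting it through the reduct, in the same spirit as one would argue for Theorem~\ref{thm:regluar.standard.correspondnece}. First I would establish a small lemma: for a \emph{regular} program the reduct $P^{I}$ depends on the interpretation $I$ only through its classical projection $I^{cl}$. Indeed, every m-query is monotonic, so by Definition~\ref{def:reduct.nested} the reduct of a regular m-literal is the literal itself; and for a formula $\fG$ built only from regular m-literals and terms one shows, using that finite products and sums of nonzero causes are again nonzero in the lattice, that $I(\fG)=0$ holds exactly when $\fG$ fails classically under $I^{cl}$. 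Hence whether a negated subformula survives the reduct is a function of $I^{cl}$ alone, so $P^{I}=P^{J}$ whenever $I^{cl}=J^{cl}$. This lets me define a well-defined operator $G$ sending a two-valued interpretation $J$ to the least model of the positive program $P^{J}$ (which exists by Theorem~\ref{thm:m-tp.properties}); by construction $I$ is a causal stable model of $P$ iff $I=G(I^{cl})$, and then $I^{cl}$ is a fixpoint of the map $\Gamma\colon J\mapsto (G(J))^{cl}$ on two-valued interpretations.

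The second step is to observe that removing labels commutes with taking the reduct: since every rule of $P$ is regular, so is every rule of $P^{J}$, and deleting its label yields exactly the corresponding rule of $Q^{J}$; thus the standard unlabelled version of the positive regular program $P^{J}$ is precisely $Q^{J}$. Applying Theorem~\ref{thm:m-positive.two-valued.model.correspondence} to $P^{J}$ then says the least model of $Q^{J}$ equals $(G(J))^{cl}$. Because $Q$ is standard, each $Q^{J}$ is a standard positive program and hence has a two-valued least model, so $J$ is a causal stable model of $Q$ iff $J$ is a two-valued fixpoint of the \emph{same} map $\Gamma$.

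Combining the two steps, the causal stable models of $P$ are exactly the interpretations $G(J)$ for fixpoints $J$ of $\Gamma$, and the causal stable models of $Q$ are exactly those fixpoints. Two-valued equivalence then follows by bookkeeping. Given a causal stable model $I$ of $P$, the interpretation $J:=I^{cl}$ is a fixpoint of $\Gamma$, hence a causal stable model of $Q$, and $J^{cl}=J=I^{cl}$; it is the unique causal stable model of $Q$ with this property, since any causal stable model $K$ of $Q$ is two-valued, so $K^{cl}=I^{cl}$ forces $K=J$. Conversely, given a causal stable model $J$ of $Q$, the interpretation $G(J)$ is a causal stable model of $P$ with $(G(J))^{cl}=J=J^{cl}$, and it is unique for the symmetric reason. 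I expect the only genuine obstacle to be the commutation claim of the second step together with the lattice fact used in the lemma of the first step: carefully checking that the notions ``regular'', ``reduct'' and ``standard unlabelled version'' from Definition~\ref{def:reduct.nested} and Section~\ref{sec:m-programs}, over the full nested-expression syntax, interact exactly as the informal picture suggests. The remaining assembly is routine.
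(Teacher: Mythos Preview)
The paper does not prove this theorem: it is imported verbatim from \cite{fandinno2015aspocp} (note the ``From \ldots'' in the statement) and is then \emph{used} as a black box in the proof of Corollary~\ref{thm:regluar.standard.correspondnece.nested}. So there is no proof in the paper to compare your proposal against.

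As an independent argument your proposal is sound and is the natural route: reduce to the already-imported positive case, Theorem~\ref{thm:m-positive.two-valued.model.correspondence}, by pushing through the reduct. The two points you single out as requiring care are indeed the only nontrivial ones. For the lemma in your first step, the fact that $I(\fG)=0$ iff $\fG$ is classically false under $I^{cl}$ for a regular positive $\fG$ boils down to the causal-value lattice having no zero divisors under~$*$: every nonzero value dominates some $G\in\causes$ (Proposition~\ref{prop:join.prime} gives join-primality of such $G$), and $\causes$ is closed under~$*$, so the meet of two nonzero values is nonzero. For your second step, since all m-queries are monotonic the reduct leaves every m-literal unchanged and acts only by replacing $\Not$-subformulas by $\top$ or $\bot$; neither touches rule labels, so deleting labels commutes with the reduct and $P^{J}$ is again regular positive, as Theorem~\ref{thm:m-positive.two-valued.model.correspondence} requires. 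The final bijection between causal stable models of $P$ and of $Q$ via fixpoints of your map $\Gamma$ is exactly right; the uniqueness clauses in the definition of two-valued equivalence follow because causal stable models of the standard program $Q$ are automatically two-valued (all labels are $1$, so $T_{Q^{J}}$ stays in $\{0,1\}$).
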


% \begin{corollary}[From~\protect\citeNP{fandinno2015aspocp}]
% \label{cor:m-regluar.correspondnece}
% Any two regular programs that only differ in their labels are two-valued equivalent.\QED
% \end{corollary}

% \subsection{Correspondence with~\protect\cite{fandinno2015aspocp}}

\paragraph{Encoding \protect\cite{fandinno2015aspocp} m-programs in our framework.}
In the following we show that every program according to~\cite{fandinno2015aspocp} can be fitted in our framework.

\begin{definition}
Given a m-program~$Q$, its corresponding program~$P$ consists of rule of the form
\begin{gather*}
r_i: \ \ \rH \ 
    \leftarrow \ \fF'
\end{gather*}
for every rule of the form~$(r_i: \ \ \rH \ \leftarrow \ \fF)$ in $Q$ where $\fF'$ is the result of replacing every m-query $\mquery$ by its corresponding query $\cquery$ given by $\cquery(G,t) = \mquery(G)$.\qed
\end{definition}

\begin{Proposition}{\label{prop:corresponding.of.m-program}}
If $P$ is the corresponding program of some positive m-program (with nested expressions)~$Q$,
then  an interpretation~$I$ is a model of~$P$ iff $I$ is a model of $Q$.\qed
\end{Proposition}

\paragraph{Encoding of monotonic programs into~\protect\cite{fandinno2015aspocp}.}
It is clear that not every program in our framework can be fitted into a m-program because the last only allows monotonic queries.
However, if all causal queries in a program are monotonic, then there is an equivalent m-program given as follows.

\begin{definition}
Given a program with nested expressions~$P$ in which all causal queries are monotonic, its corresponding m-program~$Q$ consists of rule of the form
\begin{gather*}
r_i: \ \ \rH \ 
    \leftarrow \ \fF'
\end{gather*}
for every rule of the form~$(r_i: \ \ \rH \ \leftarrow \ \fF)$ in $Q$ where $\fF'$ is the result of replacing every query $\cquery$ by its corresponding query $\mquery$ given by $\mquery(G) = \cquery(G,1)$.\qed
\end{definition}

\begin{Proposition}{\label{prop:corresponding.m-program}}
If $Q$ is the corresponding m-program of some positive monotonic program with nested expressions~$P$,
then an interpretation~$I$ is a model of~$P$ iff $I$ is a model of $Q$.\qed
\end{Proposition}

Note that Theorem~\ref{thm:fandino2015.monotonic} is a direct consequence of Proposition~\ref{prop:nested.correspondence} together with the result of Propositions~\ref{prop:corresponding.of.m-program} and~\ref{prop:corresponding.m-program}.
Furthermore, the following Corollaries~\ref{thm:tp.properties.nested},  \ref{thm:positive.two-valued.model.correspondence.nested} and~\ref{thm:regluar.standard.correspondnece.nested} are direct consequences of Proposition~\ref{prop:corresponding.m-program} together with the results of Theorems~\ref{thm:m-tp.properties}, \ref{thm:m-positive.two-valued.model.correspondence} and~\ref{thm:m-regluar.standard.correspondnece}, respectively.
Corollary~\ref{cor:regluar.correspondnece.nested} is a direct consequence of Corollary~\ref{thm:regluar.standard.correspondnece.nested}.

\begin{Corollary}{\label{thm:tp.properties.nested}}
Any (possibly infinite) positive monotonic causal program with nested expressions~$P$ has a least causal model $I$ which
 (i) coincides with the least fixpoint $\lfp(\tp)$ of the direct consequences operator $\tp$ and (ii) can be iteratively computed from the bottom interpretation $I = \lfp(\tp) = \tpr{\omega}$.\qed
\end{Corollary}

\begin{Corollary}{\label{thm:positive.two-valued.model.correspondence.nested}}
Let $P$ be a regular positive monotonic program with nested expressions and $Q$ its standard unlabelled version obtained by removing all labels from the rules in $P$.
Let $\cI$ and $\cJ$ be the least models of $P$ and $Q$, respectively.
Then, $\cI^{cl} = \cJ$.\QED
\end{Corollary}

\begin{Corollary}{\label{thm:regluar.standard.correspondnece.nested}}
Let $P$ be a regular program with nested expressions and $Q$ be its corresponding standard program obtained by removing all labels in $P$.
Then $P$ and $Q$ are two-valued equivalent.\QED
\end{Corollary}

\begin{Corollary}{\label{cor:regluar.correspondnece.nested}}
Any two regular programs with nested expressions that only differ in their labels are two-valued equivalent.\QED
\end{Corollary}

Corollaries~\ref{thm:tp.properties} and \ref{thm:positive.two-valued.model.correspondence} and Theorem~\ref{thm:regluar.standard.correspondnece}
% and~\ref{cor:regluar.correspondnece}
in the main part of the paper are direct consequences of Proposition~\ref{prop:nested.correspondence} plus 
Corollaries~\ref{thm:tp.properties.nested}, \ref{thm:positive.two-valued.model.correspondence.nested} and~\ref{thm:regluar.standard.correspondnece.nested}, respectively.

\section{Proof of Results}
\label{sec:proofs}

\subsection{Preliminary facts}

\begin{proposition}[From~\protect\citeNP{CabalarFF14}]\label{prop:operations.monotonicity}
\review{R2.4}{Addition}, product and application are monotonic operations, that is,
$t + u \leq t' + u'$, \
$t * u \leq t' * u'$ and
$t \cdot u \leq t' \cdot u'$
for any causal values $\set{t,u,t',u'} \subseteq \values$ such that
$t \leq t'$ ant $u \leq u'$.\qed
\end{proposition}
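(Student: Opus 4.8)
The plan is to treat the three operations separately, dispatching addition and product with standard lattice reasoning and reserving the (still short) real work for application. For product, I would use only that $*$ is the meet of the lattice: from idempotence, commutativity and associativity of $*$ one gets $t * u \leq t$ and $t * u \leq u$, so whenever $t \leq t'$ and $u \leq u'$ transitivity gives $t * u \leq t'$ and $t * u \leq u'$; hence $t * u$ is a lower bound of $\set{t',u'}$ and therefore $t * u \leq t' * u'$ because $t' * u'$ is their greatest lower bound. The argument for $+$ is the order dual: $t \leq t + u$ and $u \leq t + u$, so $t \leq t' + u'$ and $u \leq t' + u'$, whence $t + u \leq t' + u'$ since $t + u$ is the least upper bound of $\set{t,u}$.

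The operation that genuinely needs the axioms of Figure~\ref{fig:appl} is application. Here I would use distributivity of `$\cdot$' over `$+$' (on both sides), together with the remark that $x \leq x + y$ for all values (immediate from associativity and idempotence of $+$). Assume $t \leq t'$ and $u \leq u'$; by definition of the order this means $t + t' = t'$ and $u + u' = u'$. Then $t' \cdot u' \ = \ (t + t') \cdot (u + u') \ = \ t \cdot u \ + \ t \cdot u' \ + \ t' \cdot u \ + \ t' \cdot u'$, where the first equality rewrites $t'$ and $u'$ and the second applies left- and right-distributivity of application over addition. Since the right-hand side is a sum having $t \cdot u$ among its addends, $t \cdot u \leq t' \cdot u'$, as required.

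There is no real obstacle in this proof; the only points to keep in mind are that every identity is applied at the level of equivalence classes of terms, i.e.\ of values, which is exactly what Definition~\ref{def:values} licenses, and that only the finite instance of distributivity is used, so no appeal to the infinitary distributive laws is needed. (The same computation would in fact go through verbatim with $t'$ replaced by an arbitrary sum of upper approximants, yielding monotonicity of application against infinite joins as well, but the binary statement is all that is claimed here.)
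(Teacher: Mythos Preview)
Your argument is correct. The paper itself does not supply a proof of this proposition; it is simply imported from \cite{CabalarFF14} and marked \qed, so there is nothing to compare against beyond noting that your lattice-theoretic treatment of $+$ and $*$ and your use of two-sided distributivity of `$\cdot$' over `$+$' (the ``Addition distributivity'' axioms of Figure~\ref{fig:appl}) are exactly the intended tools.
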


% \propsep

\begin{proposition}[From~\protect\citeNP{CabalarFF14}]\label{prop:join.prime}
Every causal value $G \in \causes$ without \review{R2.4}{addition} is completely \review{R2.4}{addition-prime}, that is,
$G \leq \sum_{t \in T} t$ implies that $G \leq t$ for some $t \in T$ where $T \subseteq \values$ is a set of causal values.\qed
\end{proposition}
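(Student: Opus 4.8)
The plan is to reduce the statement to the special case in which every $t \in T$ is itself an individual cause, and then to close that case by appealing to the concrete causal-graph representation of $\values$ from~\cite{CabalarFF14}.

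First I would replace each $t \in T$ by its minimal disjunctive normal form $t = \sum_{c \in D_t} c$, where every $c \in D_t$ lies in $\causes$ (i.e.\ is a term without~`$+$'); this is available by the remark following Definition~\ref{def:values}. Using associativity, commutativity and idempotence of~`$+$' together with the handling of arbitrary sums in a completely distributive lattice, one obtains
\[
  \sum_{t \in T} t \ \ = \ \ \sum \setm{ c }{ t \in T,\ c \in D_t },
\]
so it suffices to prove the reduced claim: if $G \in \causes$ and $C \subseteq \causes$ with $G \leq \sum_{c \in C} c$, then $G \leq c$ for some $c \in C$. Indeed, once we have such a $c$, it belongs to $D_t$ for some original $t \in T$, and $c \leq t$ yields $G \leq t$. (The degenerate case $G = 0$ is trivial whenever $T \neq \emptyset$.)

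Second, for the reduced claim I would invoke the representation of causal values from~\cite{CabalarFF14}: individual causes correspond to causal graphs, an arbitrary value corresponds to the set of its $\leq$-maximal individual causes, the lattice order restricts to the causal-graph order on $\causes$, and — the point we really need — the join of a family of values is obtained by taking the union of the associated sets of individual causes and then discarding the non-maximal ones; equivalently, for any value $v$ and any individual cause $G$ one has $G \leq v$ iff $G \leq c$ for some $c$ in the minimal DNF of $v$. Applied to $v = \sum_{c \in C} c$, whose minimal DNF consists of the $\leq$-maximal elements of $C$, the hypothesis $G \leq v$ immediately produces some ($\leq$-maximal) $c \in C$ with $G \leq c$.

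The DNF bookkeeping is routine; the genuine content — which I would take from~\cite{CabalarFF14} rather than reprove — is the order characterization of joins, namely that forming a (possibly infinite) join does not place any new individual cause below the result beyond those already below one of the joined individual causes. In the causal-graph model this is the standard fact that singleton antichains (equivalently, principal downsets) are completely join-prime and that joins are computed as unions; stating it cleanly, however, requires recalling that model in detail, which is exactly why the proposition is quoted from~\cite{CabalarFF14}.
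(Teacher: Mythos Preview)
The paper does not actually prove this proposition: it is stated as a preliminary fact imported verbatim from~\cite{CabalarFF14} and closed immediately with~\qed, with no accompanying argument. Your proposal is therefore not competing with any proof in the paper; rather, you are sketching why the cited result holds, and your sketch is sound and matches the intended justification --- reduce to sums of individual causes via DNF, then invoke the causal-graph model of~$\values$ in which individual causes are (principal downsets generated by) graphs and joins are computed as unions, so that no new individual cause appears below a join beyond those already below a summand.

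One minor point worth tightening: your handling of the degenerate case only covers $G = 0$ with $T \neq \emptyset$, but you should also note that the statement tacitly excludes the case $G = 0$, $T = \emptyset$ (the bottom of a nontrivial complete lattice is never completely join-prime). In the causal-graph model this is harmless because $0 \notin \causes$ --- every genuine causal graph corresponds to a nonzero individual cause --- but it is worth saying explicitly rather than leaving the edge case half-addressed.
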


\subsection{Properties of the causal queries and causal literals}

% \subsection{Proof of Proposition~\ref{prop:cliteral.monotonic}}

\begin{Proposition}{\label{prop:cliteral.monotonic}}
The evaluation of a causal literal~$(\cliteral{\cquery}{\rH})$ is $\leq$-monotonic for every monotonic causal query~$\cquery$, that is,
$J(\cliteral{\cquery}{\rH}) \leq I(\cliteral{\cquery}{\rH})$
for every pair of interpretations $I$ and $J$ such that $J \leq I$.\qed
\end{Proposition}

% \begin{proofof}{prop:cliteral.monotonic}
\begin{proof}
By definition, it follows that
\begin{align*}
X(\cliteral{\cquery}{\rH}) \ \ &\eqdef \ \
  \sum\setbm{ G \in \causes}{G \leqmax X(\rH) \text{ and } \ \cquery(G,\, X(\rH)\,) = 1  }
\end{align*}
with $X \in \set{I,J}$.
For the sake of contradiction, suppose that $J(\cliteral{\cquery}{\rH}) \not\leq I(\cliteral{\cquery}{\rH})$.
Then, there is
$G \leqmax J(\cliteral{\cquery}{\rH})$
such that
$G \not\leq I(\cliteral{\cquery}{\rH})$.
Note that
$G \leqmax J(\cliteral{\cquery}{\rH})$
implies
$G \leqmax J(\rH)$
and, since $J \leq I$,
this implies that there exists $G' \leqmax I(\rH)$ such that $G \leq G'$.
Hence,
since $J \leq I$ and $\cquery$ is monotonic,
$\cquery(G,\, J(\rH)\,) = 1$
implies
$\cquery(G',\, I(\rH)\,) = 1$
and, therefore, $G \leq G' \leq I(\cliteral{\cquery}{\rH})$ which contradicts the assumption.
\end{proof}

\begin{Proposition}{\label{prop:cquery.reduct.monotonic}}
The reduct of a causal query $\cquery$ w.r.t. term $t$, in symbols $\cquery^t$ is monotonic.\qed
\end{Proposition}

% \subsection{Proof of Proposition~\ref{prop:cquery.reduct.monotonic}}

\begin{proof}
% {prop:cquery.reduct.monotonic}
Suppose that $\cquery^{t}$ is not monotonic.
Then there are $G,G'' \in \causes$ and $u,w\in\values$ such that $G \leq G''$
and
$\cquery^{t}(G,u) = 1$, but
$\cquery^{t}(G'',w) = 0$.
By definition,
\begin{gather}
\cquery^t(G,u) = 1
\quad\text{iff exists some }  G' \leq G \text{ s.t. } G' \leqmax t
\text{ and }
\cquery(G',\,t) = 1
    \label{eq:1prop:cquery.reduct.monotonic}
\end{gather}
Similar for $G''$ and $w$.
Pick some $G$ satisfying~\eqref{eq:1prop:cquery.reduct.monotonic}.
Since $G' \leq G$ and $G \leq G''$, it follows that $G' \leq G''$
and, since $G' \leq G''$ and $G' \leqmax t$ and
\mbox{$\cquery(G',t) = 1$},
it also follows that
\mbox{$\cquery^{t}(G'',w) = \cquery^{t}(G'',u) = 1$}
which is a contradiction with the assumption.
Hence, $\cquery^{t}$ is monotonic.
\end{proof}

\begin{Proposition}{\label{prop:monotonic.cquery.reduct.leq}}
Any monotonic causal query $\cquery$ satisfies
that $\cquery^t(G,u) \leq \cquery(G,u)$ for any causal values $G \in \causes$ and $\set{t,u} \subseteq \values$.\qed
\end{Proposition}

% \subsection{Proof of Proposition~\ref{prop:monotonic.cquery.reduct.leq}}

\begin{proof}
% {prop:monotonic.cquery.reduct.leq}
Suppose that
$\cquery^t(G,u) \not\leq \cquery(G,u)$.
Then,
$\cquery^t(G,u) = 1$
and
$\cquery(G,u) = 0$.
By definition,
\begin{gather*}
\cquery^t(G,u) \ \ \eqdef \ \
  \begin{cases}
  1 &\text{iff exists some }  G' \leq G \text{ s.t. } G' \leqmax t
      \text{ and }
      \cquery(G',\,t) = 1
  \\
  0 &\text{otherwise}
  \end{cases}
\end{gather*}
and, thus, there exists some $G' \leq G$ such that $G' \leqmax t$ and that
$\cquery(G',\,t) = 1$.
Since $\cquery$ is monotonic,
$G' \leq G$ and $\cquery(G',t) = 1$ 
implies that $\cquery(G,u) = 1$ for any $u\in\values$ which is a contradiction with the fact that $\cquery(G,u) = 0$.
\end{proof}

% \subsection{Proof of Proposition~\ref{prop:monotonic.cliteral.reduct.leq}}

\begin{Proposition}{\label{prop:monotonic.cliteral.reduct.leq}}
Let $I$ and $J$ be two interpretations.
Then,
$J(\cliteral{\cquery}{\rH})^I \leq J(\cliteral{\cquery}{\rH})$ for any atom~$\rH \in \at$ and any a monotonic causal query~$\cquery$.
\end{Proposition}

\begin{proof}
% {prop:monotonic.cliteral.reduct.leq}
Pick any $G \leqmax J(\cliteral{\cquery}{\rH})^I$.
By definition, $G \leqmax J(\rH)$ and $\cquery^{I(\rA)}(G,J(\rA)) = 1$.
Furthermore,
from~Proposition~\ref{prop:monotonic.cquery.reduct.leq},
$\cquery^{I(\rA)}(G,J(\rA)) = 1$
implies
$\cquery(G,J(\rA)) = 1$
and, thus,
$G \leq J(\cliteral{\cquery}{\rH})$.
Therefore,
$J(\cliteral{\cquery}{\rH})^I \leq J(\cliteral{\cquery}{\rH})$.
\end{proof}

Note that in general
$J(\cliteral{\cquery}{\rH})^I \neq J(\cliteral{\cquery}{\rH})$
may hold even if $J \leq I$.
Consider, for instance, a pair of interpretation
$J(\rH) = a*b$ and $I(\rH) = a$
and a monotonic causal query $\cquery(a*b) = \cquery(a) = 1$.
Then, $J(\cliteral{\cquery}{\rH}) = a*b$, but
$J(\cliteral{\cquery}{\rH})^I = 0$
because $a*b \not\!\!\leqmax I(\rA)$.

\subsection{Properties of formulas}

% \subsection{Proof of Proposition~\ref{prop:formula.monotonic}}

\begin{Proposition}{\label{prop:formula.monotonic}}
Any monotonic formula $\fF$ is $\leq$-monotonic,
that is, $J(\fF) \leq I(\fF)$ for any causal interpretations $I$ and $J$ such that $J \leq I$.\qed
\end{Proposition}

\begin{proof}
% {prop:formula.monotonic}
In case that $F$ is a causal literal of the form
$(\cliteral{\psi}{\rH})$, from Proposition~\ref{prop:cliteral.monotonic},
it follows that
$J(\cliteral{\psi}{\rH}) \leq I(\cliteral{\psi}{\rH})$.
Otherwise, we proceed by structural induction assuming the lemma holds for every subformula of $\fF$.
In case that $\fF = (\rE,\rH)$, by induction hypothesis $\rE$ and $\rH$ are $\leq$-monotonic and, thus, since product is also monotonic, it follows that $\fF$ is $\leq$-monotonic.
The case  $\fF = (\rE;\rH)$ is analogous.
Finally, for the case $\fF =\Not \rE$, just note that $\fF$ is not positive
and, thus, $\fF$ is not monotonic by definition.
\end{proof}

% \subsection{Proof of Proposition~\ref{prop:formula.valuation.reduct.stable}}

\begin{Proposition}{\label{prop:formula.valuation.reduct.stable}}
Any causal interpretation $I$ and formula $\fF$ satisfy $I(\fF^I) = I(\fF)$.\qed
\end{Proposition}

\begin{proof}
% {prop:formula.valuation.reduct.stable}
In case that $\fF$ is a causal literal of the form
$(\cliteral{\psi}{\rA})$,
its reduct
$(\cliteral{\psi}{\rA})^I$ is $(\cliteral{\psi^{I(\rA)}}{\rA})$.
Furthermore, by definition,
\begin{gather*}
\cquery^{I(\rA)}(G,t) = 1
\quad \text{iff exist } G' \leq G \text{ s.t. }
G' \leqmax I(\rA) \text{ and } \cquery(G',\,I(\rA))
\end{gather*}
Then, $G \leqmax I(\cliteral{\cquery^{I(\rA)}}{\rA})$ implies that
$G \leqmax I(\rA)$
and
there exists some $G' \leq G$ such that
$G' \leqmax I(\rA)$ and $\cquery(G',I(\rA)) = 1$.
Note that, since $G' \leq G\leqmax I(\rA)$ and $G' \leqmax I(\rA)$, it follows that $G= G'$.
Then, $\cquery(G',I(\rA)) = 1$ implies that
$\cquery(G,I(\rA)) = 1$
and, consequently, $G \leq I(\cliteral{\cquery}{\rA})$.
That is, $I(F^I) \leq I(F)$.
The other way around.
$G \leqmax I(\cliteral{\cquery}{\rA})$
implies that
$G \leqmax I(\rA)$
and
$\cquery(G,I(\rA)) = 1$
which in turn imply that
$\cquery^{I(\rA)}(G,I(\rA)) = 1$
and
$G \leqmax I(\cliteral{\cquery^{I(\rA)}}{\rA})$.
Consequently, 
$I(\cliteral{\cquery^{I(\rA)}}{\rA}) = I(\cliteral{\cquery}{\rA})$.
\\[-5pt]

\noindent
In any other case, we proceed by structural induction assuming the lemma holds for every subformula of $\fF$.
In case that $\fF = (\fE,\fH)$, by definition,
\begin{align*}
I(\fF^I) 
      \ \ = \ \ I(\fE,\fH)^I
      \ \ = \ \ I(\fE^I , \fH^I)
      \ \ = \ \ I(\fE^I) * I(\fH^I)
\end{align*}
Furthermore, by induction hypothesis
$I(\fE^I) = I(\fE)$
and
$I(\fH^I) = I(\fH)$ and, thus,
\begin{align*}
I(\fF^I) 
      \ \ = \ \ I(\fE) * I(\fH)
      \ \ = \ \ I(\fE,\fH)
      \ \ = \ \ I(\fF)
\end{align*}
The case  $\fF = (\fE;\fH)$ is analogous.
Finally, for the case $\fH =\Not \fE$, just note that
$I(\Not \fE)^I = I(\bot) = 0$ iff $I\models \fE^I$ iff $I \models \fE$ iff
$I(\Not \fE) = 0$.
Otherwise
$I(\Not \fE)^I = I(\top) = 1$ and $I(\Not \fE) = 1$.
\end{proof}

\begin{lemma}\label{lem:monotonic.formula.prime.reduct.leq}
Let $\cI$ and $\cJ$ be two interpretations.
Then,
$\cJ(\fF^\cI) \leq \cJ(\fF') \leq \cJ(\fF)$ for any monotonic formula $\fF$ and $\fF'$ where $\fF'$ is either $\fF^\cI$ or the result of replacing in $\fF^\cI$ some reduced causal query~$\cquery^t$ by its non-reduced form~$\cquery$.\qed
\end{lemma}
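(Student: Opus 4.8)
The plan is to argue by structural induction on the monotonic formula $\fF$, establishing both inequalities of the chain simultaneously. Since $\fF$ is monotonic it contains no occurrence of the operator $\Not$ and all its causal literals are monotonic; this positivity is exactly what makes the claim possible, for otherwise the reduct of a subformula $\Not\fE$, being one of the constants $\bot,\top$ determined by $\cI$, need not lie between $\cJ(\fF^\cI)$ and $\cJ(\fF)$ (compare the handling of the $\Not$ case in the proof of Proposition~\ref{prop:formula.monotonic}). Hence $\fF$ is built from terms and monotonic causal literals using only `$,$' and `$;$', so only these cases arise. It is convenient to prove the slightly more general statement in which $\fF'$ may be obtained from $\fF^\cI$ by replacing \emph{any} set of reduced queries $\cquery^t$ by their non-reduced forms $\cquery$; this costs nothing and makes the recursion go through cleanly.

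For the base cases: if $\fF$ is a term $t$ then $\fF^\cI=t$ contains no causal query, so $\fF'=\fF^\cI=\fF$ and the chain degenerates to equalities. If $\fF=(\cliteral{\cquery}{\rA})$ with $\cquery$ monotonic, then, using the uniform reduct of Definition~\ref{def:reduct.uniform}, $\fF^\cI=(\cliteral{\cquery^{\cI(\rA)}}{\rA})$, while $\fF'$ is either this same literal or $(\cliteral{\cquery}{\rA})=\fF$; in either case the two nontrivial comparisons reduce to $\cJ\big((\cliteral{\cquery}{\rA})^\cI\big)\leq\cJ(\cliteral{\cquery}{\rA})$, which is precisely Proposition~\ref{prop:monotonic.cliteral.reduct.leq}.

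For the inductive step, assume the generalised statement holds for all proper subformulas of $\fF$. If $\fF=(\fG,\fH)$ then $\fF^\cI=(\fG^\cI,\fH^\cI)$, and any replacement of reduced queries in $\fF^\cI$ acts separately within $\fG^\cI$ and within $\fH^\cI$; thus $\fF'=(\fG',\fH')$ where $\fG'$ and $\fH'$ arise from $\fG^\cI$ and $\fH^\cI$ by such replacements, so the induction hypothesis yields $\cJ(\fG^\cI)\leq\cJ(\fG')\leq\cJ(\fG)$ and $\cJ(\fH^\cI)\leq\cJ(\fH')\leq\cJ(\fH)$. Since $\cJ(\fG,\fH)=\cJ(\fG)*\cJ(\fH)$, and likewise for $\fF^\cI$ and $\fF'$, and product is $\leq$-monotonic (Proposition~\ref{prop:operations.monotonicity}), both desired inequalities follow. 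The case $\fF=(\fG;\fH)$ is identical with `$+$' in place of `$*$', again by Proposition~\ref{prop:operations.monotonicity}.

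The only point requiring care is the bookkeeping for $\fF'$ in the connective cases, namely checking that the class of formulas ``$\fF^\cI$ with some reduced queries replaced by their non-reduced forms'' is closed under passing to immediate subformulas; this is precisely why it pays to prove the generalised version allowing an arbitrary set of replacements (in particular a single one with several syntactic occurrences). Beyond this there is no real obstacle: all the substantive content is already packaged in Propositions~\ref{prop:monotonic.cliteral.reduct.leq} and~\ref{prop:operations.monotonicity}, and the recursion is driven entirely by monotonicity of the connectives.
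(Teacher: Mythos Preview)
Your proposal is correct and follows essentially the same approach as the paper's proof: structural induction on the monotonic formula, with Proposition~\ref{prop:monotonic.cliteral.reduct.leq} handling the causal-literal base case and monotonicity of product and addition (Proposition~\ref{prop:operations.monotonicity}) propagating the inequalities through conjunction and disjunction. Your explicit remark about generalising to an arbitrary set of replaced queries, and your observation that one must invoke the uniform reduct (Definition~\ref{def:reduct.uniform}) so that a monotonic causal literal actually gets reduced, make the bookkeeping cleaner than in the paper, but the substance is the same.
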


\begin{proof}
In case that $\fF$ is a causal literal of the form
$(\cliteral{\cquery'}{\rH})$, from Proposition~\ref{prop:monotonic.cliteral.reduct.leq},
it follows that
$J(\fF^\cI) = J(\cliteral{\cquery^{I(\rH)}}{\rH}) \leq J(\cliteral{\cquery}{\rH}) = J(\fF)$.
Furthermore, if in addition $\cquery' = \cquery$,
it follows that $\fF' = \fF$ and the lemma statement follow from the above inequality.
Otherwise $\fF' = \fF^\cI$ and the result follow in a similar way.
\\[-5pt]

\noindent
We proceed by structural induction assuming the statement holds for every subformula of $\fF$.
In case that $\fF = (\fE,\fH)$, by definition,
\begin{IEEEeqnarray*}{l C l C l C l}
\cJ(\fF^\cI) 
      & \ \ = \ \ & \cJ(\,(\fE,\fH)^\cI\,)
      & \ \ = \ \ & \cJ(\fE^\cI , \fH^\cI)
      & \ \ = \ \ & \cJ(\fE^\cI) * \cJ(\fH^\cI)
\\
\cJ(\fF') 
      & \ \ = \ \ & \cJ(\,(\fE,\fH)'\,)
      & \ \ = \ \ & \cJ(\fE' , \fH')
      & \ \ = \ \ & \cJ(\fE') * \cJ(\fH')
\end{IEEEeqnarray*}
Furthermore, by induction hypothesis,
$\cJ(\fE^\cI) \leq \cJ(\fE') \leq \cJ(\fE)$
and
$\cJ(\fH^\cI) \leq \cJ(\fH') \leq \cJ(\fH)$ and, since product~$*$ is monotonic, it follows that,
\begin{IEEEeqnarray*}{l C l C l C l C l}
\cJ(\fF^\cI) 
      & \ \ = \ \    & \cJ(\fE^\cI) * \cJ(\fH^\cI)
      & \ \ \leq \ \ & \cJ(\fE') * \cJ(\fH')
      & \ \ = \ \    & \cJ(\fE',\fH')
      & \ \ = \ \    & \cJ(\fF')
\\
\cJ(\fF') 
      & \ \ = \ \    & \cJ(\fE') * \cJ(\fH')
      & \ \ \leq \ \ & \cJ(\fE) * \cJ(\fH)
      & \ \ = \ \    & \cJ(\fE,\fH)
      & \ \ = \ \    & \cJ(\fF)
\end{IEEEeqnarray*}
The case  $\fF = (\fE;\fH)$ is analogous.
Finally, note that $\fF =\Not \fE$ is not a positive formula and, by definition, it is not a monotonic formula either.
\end{proof}

\begin{Proposition}{\label{prop:monotonic.formula.reduct.leq}}
Let $\fF$ be a monotonic formula and $\cI$ be an interpretation.
Then,
$J(\fF^I) \leq J(\fF)$ for any interpretation $\cJ$ such that $\cJ \leq \cI$.\qed
\end{Proposition}

\begin{proof}
% {prop:monotonic.formula.reduct.leq}
It follows directly from Lemma~\ref{lem:monotonic.formula.prime.reduct.leq}.
\end{proof}

% \subsection{Proof of Proposition~\ref{prop:normal.formula.reduct.leq.formula}}

\begin{Proposition}{\label{prop:normal.formula.reduct.leq.formula}}
Let $\fF$ be a normal formula and $\cI$ be an interpretation.
Then, $J(\fF^I) \leq J(\fF)$ for any interpretation $J$ such that $\cJ \leq \cI$.\qed
\end{Proposition}

\begin{proof}
% {prop:normal.formula.reduct.leq.formula}
In case that $\fF$ is a causal literal of the form
$(\cliteral{\psi}{\rA})$,
its reduct
$(\cliteral{\psi}{\rA})^I$ is $(\cliteral{\psi^{I(\rA)}}{\rA})$.
Note that $G \leqmax J(\cliteral{\psi^{I(\rA)}}{\rA})$
implies $G \leqmax J(\rA)$ and
$\cquery^{I(\rA)}(G,J(\rA)) = 1$.
Furthermore, by definition,
\begin{gather*}
\cquery^{I(\rA)}(G,J(\rA)) = 1
\quad \text{iff exist } G' \leq G \text{ s.t. }
G' \leqmax I(\rA) \text{ and } \cquery(G',\,I(\rA))
\end{gather*}
Since $G' \leq G \leq J(\rA) \leq I(\rA)$
and $G \leqmax I(\rA)$, it follows that $G=G'$.
Then, since $J \leq I$, queries are anti-monotonic in the second argument and $\cquery(G,I(\rA)) = 1$, it follows that
 $\cquery(G,J(\rA)) = 1$
 and, since $G \leqmax J(\rA)$, it also follows that
$G \leq J(\cliteral{\cquery}{\rA})$.
That is, $J(\fF^I) \leq J(\fF)$
\\[-5pt]

\noindent
Otherwise, we proceed by structural induction assuming the lemma holds for every subformula of $\fF$.
In case that $\fF = (\rE,\rH)$, by definition,
\begin{align*}
J(\fF^I) 
      \ \ = \ \ J(\,(\fE,\fH)^I\,)
      \ \ = \ \ J(\fE^I , \fH^I)
      \ \ = \ \ J(\fE^I) * J(\fH^I)
\end{align*}
Furthermore, by induction hypothesis,
$J(\fE^I) \leq J(\fE)$
and
$J(\fH^I) \leq J(\fH)$ and, since product~$*$ is monotonic, it follows that,
\begin{align*}
J(\fF^I) 
      \ \ = \ \ J(\fE^I) * J(\fH^I)
      \ \ \leq \ \ J(\fE) * J(\fH)
      \ \ = \ \ J(\fE,\fH)
      \ \ = \ \ J(\fF)
\end{align*}
The case  $\fF = (\fE;\fH)$ is analogous.
Finally, in case $\fF =\Not \fE$, since $\fF$ is a normal formula,
$\rE$ is positive and every query $\cquery$ occurring in $\fE$ is monotonic.
Hence, from the fact $J\leq I$ and the fact that monotonic formulas are also $\leq$-monotonic, it follows $J(\fE) \leq I(\fE)$ (Proposition~\ref{prop:formula.monotonic}).
Furthermore,
\begin{align*}
\text{if }J(\,(\Not \fE)^I\,)=1 \
    &\text{then } I \not\models\fE^I
\\
    &\text{then } I(\fE^I) = 0
\\
    &\text{then } I(\fE) = 0
      &\text{(Proposition~\ref{prop:formula.valuation.reduct.stable})}
\\
    &\text{then } J(\fE) = 0
\\
    &\text{then } J(\Not \fE) = 1
\end{align*}
That is, $J(\,(\Not \fE)^I\,)=1$ implies that $J(\Not \fE) = 1$.
Otherwise, $J(\Not \fE)^I=0$ and the term~$0$ is smaller than any causal value and, thus, $J(\,(\Not \fE)^I\,) \leq J(\Not \rE)$ holds
and, consequently, it follows that $J(\fF^I) \leq J(\fF)$.
\end{proof}

% \subsection{Proof of Proposition~\ref{prop:tp.continuous}}

% \begin{proofof}{prop:tp.continuous}
% By definition, for any atom $\rA$,
% \begin{align*}
% \sum_{J \in S} \tp(J)(\rA)
%   &\ \ = \ \ 
%   \sum_{J \in S}
%     \sum_{(r_i: \ \rA \leftarrow F) \in P} J(F)  \cdot r_i
% \end{align*}
% Then, by sum commutativity
% \begin{align*}
% \sum_{J \in S} \tp(J)
%   &\ \ = \ \
%   \sum_{(r_i: \ \rA \leftarrow F) \in P} \
%     \sum_{J \in S} J(F)  \cdot r_i
% \end{align*}
% and, by distributivity of applications over sums
% \begin{align*}
%   \sum_{(t: \ \rA \leftarrow F) \in P} \
%     \sum_{J \in S} J(F)  \cdot r_i
%     &\ \ = \ \
%       \sum_{(r_i: \ \rA \leftarrow F) \in P} \
%     \Big(\sum_{J \in S} J(F) \Big) \cdot r_i
% \end{align*}
% Since $\fF$ is monotonic formula and $S$ is a directed set, from 
% Proposition~\ref{prop:formula.continuous}, it follows
% \begin{align*}
% \sum_{J \in S} J(F) \ \ = \ \ \Big(\sum_{J \in S} J \Big)(F)
% \end{align*}
% Hence,
% \begin{align*}
% \sum_{J \in S} \tp(J)(\rA)
%   &\ \ = \ \ 
%   \sum_{(t: \ \rA \leftarrow F) \in P}
%       \Big(\sum_{J \in S} J \Big)(F) \cdot r_i
%   \ \ = \ \ 
%   \tp(\sum_{J \in S} J)(\rA)
% \end{align*}
% for any atom $\rA$, that is, the Proposition statement holds.
% \end{proofof}

\subsection{Proof of Proposition~\ref{prop:nested.correspondence}}

\begin{proofof}{prop:nested.correspondence}
Assume that $I$ is a model of $P$ w.r.t. Definition~\ref{def:causal.model} and suppose that $I$ is not a model of $P$ w.r.t. Definition~\ref{def:causal.model.nested}.
Then, there is a rule $\R$ of the form of $(r_1 : \rH \lparrow \rB_1, \dotsc, \rB_\npbody)$ such that
$I(\rB_1, \dotsc, \rB_\npbody) \cdot r_i \not\leq \rH$,
but that
$I(\rB_1) * \dotsc * I(\rB_\npbody) \cdot r_i \leq \rH$.
If $\npbody > 0$, then from Definition~\ref{def:formula.evaluation} it follows that
$I(\rB_1, \dotsc, \rB_\npbody) = I(\rB_1) * \dotsc * I(\rB_\npbody)$
which is a contradiction.
If $\npbody = 0$, then $\prod\emptyset = 1 = I(\top)$ which is also a contradiction.
The other way around is symmetrical.
\end{proofof}

\subsection{Proof of Proposition~\ref{prop:simplification}}

\begin{proofof}{prop:simplification}
For $(i)$, note that $\top=1$, then
\begin{align*}
I((\fF,\top)^J) &\ = \ I(\fF^J,\top^J)\\
             &\ = \ I(\fF^J) * I(\top)\\
             &\ = \ I(\fF^J) * I(1)\\
             &\ = \ I(\fF^J) * 1\\
             &\ = \ I(\fF^J)
\end{align*}
The remaining cases are analogous.
Just note that $\bot = 0$.
\end{proofof}

\subsection{Proof of Proposition~\ref{prop:nested.reduct.correspondence}}

\begin{proofof}{prop:nested.reduct.correspondence}
Let $\R$ be a rule of the form of
\begin{gather*}
r_i : \ \ \rH \lparrow \rB_1, \dotsc, \rB_\npbody,
				\rB_{\npbody+1}, \dotsc, \rB_{\nnbody}
\end{gather*}
where $\rB_j$ is a positive literal with $1 \leq j \leq \npbody$
and $\rB_j$ is either a negative or a consistent literal with $\npbody + 1 \leq j \leq \nnbody$.
According to Definition~\ref{def:reduct.nested}, 
if $I \models \rB_j$ with $\npbody + 1 \leq j \leq \nnbody$,
then the reduct of rule $\R$ is a rule $\R^I$ of the form
\begin{gather*}
r_i : \ \ \rH \lparrow \rC_1, \dotsc, \rC_\npbody, \top, \dotsc, \top
\end{gather*}
where $\rC_j$ is the reduct of causal literal~$\rB_j$ for $1 \leq j \leq \npbody$.
After applying the simplifications in Proposition~\ref{prop:simplification}, it follows that $\R^I$ becomes
\begin{gather*}
r_i : \ \ \rH \lparrow \rC_1, \dotsc, \rC_\npbody
\end{gather*}
which agrees with Definition~\ref{def:reduct}.
On the other hand, if $I\not\models \rB_j$ for some $\npbody +1 \leq j \leq \nnbody$,
it follows that $B_J^I = \bot$ and, therefore,
\begin{gather*}
(\rB_1, \dotsc, \rB_\npbody,\rB_{\npbody+1}, \dotsc, \rB_{j-1}, \rB_j, \rB_{j+1} \dots, \rB_{\nnbody})^I
	\ \ = \ \ \bot
\end{gather*}
Hence, $\R^I$ is of the from
\begin{gather*}
r_i : \ \ \rH \lparrow \bot
	% \label{eq:rule:prop:nested.reduct.correspondence}
\end{gather*}
and $\R^I$ does not belong to $P^I$ after removing all rules whose body is $\bot$.
Therefore, the reduct according to Definition~\ref{def:reduct.nested} is the same as the Definition~\ref{def:reduct} for programs  with the syntax of Definition~\ref{def:causal.P} and the causal stable models w.r.t. Definitions~\ref{def:causal.smodel} and~\ref{def:causal.smodel.nested} are the same, too.
\end{proofof}

\subsection{Proof of Proposition~\ref{prop:corresponding.of.m-program}}

\begin{lemma}\label{lem:aux:prop:corresponding.of.m-program}
Let $\fF$ be some m-formula and $\fF'$ be is corresponding formula obtained by replacing every m-query $\mquery$ by its corresponding query $\cquery$ given by $\cquery(G,t) = \mquery(G)$.
Then, it holds that $I(\fF) = I(\fF')$ for every interpretation $I$.\qed
\end{lemma}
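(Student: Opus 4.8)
The plan is to prove the statement by structural induction on the m-formula $\fF$, following the recursive definition of (m-)formulas. Recall that $\fF'$ differs from $\fF$ only in that each m-literal $(\cliteral{\mquery}{\rA})$ occurring in $\fF$ has been replaced by the causal literal $(\cliteral{\cquery}{\rA})$ with $\cquery(G,t) \eqdef \mquery(G)$, while all terms, connectives and atoms are left untouched. So the induction splits into two base cases (a term, an m-literal) and three inductive cases, $\fF = (\fG,\fH)$, $\fF = (\fG;\fH)$ and $\fF = \Not\fG$. The easy cases come first: if $\fF = t$ is a term then $\fF' = t$ and the valuation of a term is just its equivalence class in both frameworks, so $I(\fF) = I(\fF')$; for the connectives, Definitions~\ref{def:formula.evaluation} and~\ref{def:m-formula.evaluation} evaluate $(\fG,\fH)$, $(\fG;\fH)$ and $\Not\fG$ in exactly the same way (as $I(\fG)*I(\fH)$, $I(\fG)+I(\fH)$, and the test whether $I(\fG)=0$, respectively), so the induction hypothesis applied to the immediate subformulas yields $I(\fF) = I(\fF')$ at once; for $\Not\fG$ one only needs that $I(\fG)=0$ precisely when $I(\fG')=0$, which is the induction hypothesis.

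The one genuinely non-trivial point, and the step I expect to be the main obstacle, is the m-literal base case $\fF = (\cliteral{\mquery}{\rA})$, where the two semantics compute the valuation over different index sets: Definition~\ref{def:m-formula.evaluation} sums all $G \in \causes$ with $G \leq I(\rA)$ and $\mquery(G)=1$, whereas Definition~\ref{def:causal.literal.evaluation} sums only the $G \leqmax I(\rA)$ with $\cquery(G,I(\rA)) = \mquery(G) = 1$. I would show these two joins coincide. One inequality is free: every $G \leqmax I(\rA)$ satisfies $G \leq I(\rA)$, so the index set defining $I(\cliteral{\cquery}{\rA})$ is contained in the one defining $I(\cliteral{\mquery}{\rA})$, giving $I(\cliteral{\cquery}{\rA}) \leq I(\cliteral{\mquery}{\rA})$. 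For the converse, take any $G \in \causes$ with $G \leq I(\rA)$ and $\mquery(G) = 1$; since $G$ is an individual cause lying below $I(\rA)$, there is some $G' \in \causes$ with $G \leq G' \leqmax I(\rA)$ --- the same ``completion to a maximal cause'' step already used in the proof of Proposition~\ref{prop:cliteral.monotonic}. Monotonicity of the m-query $\mquery$ then gives $\mquery(G') = 1$, so $G'$ is among the addends of $I(\cliteral{\cquery}{\rA})$ and hence $G \leq G' \leq I(\cliteral{\cquery}{\rA})$. Joining over all such $G$ yields $I(\cliteral{\mquery}{\rA}) \leq I(\cliteral{\cquery}{\rA})$, and therefore equality. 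This is exactly where the standing assumption that m-queries are monotonic is essential.

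With the lemma established, the intended use in the proof of Proposition~\ref{prop:corresponding.of.m-program} is immediate: applying it to the body of every rule shows that $I$ satisfies a rule of $P$ iff it satisfies the corresponding rule of $Q$, since heads, labels, and the satisfaction condition~\eqref{eq:causal.stable.model.nested} are identical in the two settings.
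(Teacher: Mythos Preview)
Your proposal is correct and follows essentially the same approach as the paper: structural induction on $\fF$, with the only non-trivial case being the m-literal, where you (like the paper) use monotonicity of $\mquery$ to complete any $G \leq I(\rA)$ with $\mquery(G)=1$ to some maximal $G' \leqmax I(\rA)$ with $\mquery(G')=1$, thereby reducing the sum over $\{G \leq I(\rA)\}$ to the sum over $\{G \leqmax I(\rA)\}$. Your treatment is in fact slightly more explicit than the paper's (you separate the two inequalities and include the term base case), but the argument is the same.
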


\begin{proof}
In case that $F = (\cliteral{\mquery}{\rA})$ is a m-literal, by definition
\begin{align*}
I(\cliteral{\mquery}{\rA}) \ \ &= \ \
  \sum\setbm{ G \in \graphs}{G \leq I(\rH) \text{ and } \ \mquery(G) = 1  }
\end{align*}
Furthermore, since $\mquery$ is monotonic, for every $G \leq I(\rH)$ such that $\mquery(G) = 1$, there is some $G'$ such that $G \leq G' \leqmax I(\rH)$ and $\mquery(G) = 1$ and, thus,
\begin{align*}
I(\cliteral{\mquery}{\rA}) \ \ &= \ \
  \sum\setbm{ G \in \graphs}{G \leqmax I(\rH) \text{ and } \ \mquery(G) = 1  }
\end{align*}
Then, since $\cquery(G,t) = \mquery(G)$ for any $t \in \values$, it is clear that
\begin{align*}
I(\cliteral{\mquery}{\rA}) \ \ &= \ \
  \sum\setbm{ G \in \graphs}{G \leqmax I(\rH) \text{ and } \ \cquery(G,I(\rA)) = 1  }
  \\
  \ \ &= \ \ I(\cliteral{\cquery}{\rA})
\end{align*}
In case that $\fF$ is not a m-literal, the proof follows by structural induction assuming as induction hypothesis that $I(\fG)=I(\fG')$ for every subformula $\fG$ of $\fF$.
\end{proof}

\begin{proofof}{prop:corresponding.of.m-program}
Assume that $I$ is a model and $Q$ and suppose that $I$ is not a model of $P$.
Then, there is some rule $\R$ of the form $(r_i : \ \rA \lparrow \fF)$ in $P$ such that $I(\fF) \cdot r_1 \not\leq I(\rA)$.
However, since $\R$ is in $P$ there is a rule $\R'$ of the form $(r_i : \ \rA \lparrow \fF)$ in $Q$ where $\fF'$ is the result of replacing every m-query $\mquery$ by its corresponding query $\cquery$.
Then, from Lemma~\ref{lem:aux:prop:corresponding.of.m-program},
it follows that
$I(\fF') = I(\fF)$ and, thus,
$I(\fF') \cdot r_1 \not\leq I(\rA)$
which is a contradiction with the assumption that $I$ is a model of $Q$.
\\

\noindent
The other way around is symmetrical.
Assume that $I$ is a model and $P$ and suppose that $I$ is not a model of $Q$.
Then, there is some rule $\R'$ of the form $(r_i : \ \rA \lparrow \fF')$ in $Q$ such that $I(\fF') \cdot r_1 \not\leq I(\rA)$.
However, since $\R'$ is in $Q$ there is a rule $\R$ of the form $(r_i : \ \rA \lparrow \fF)$ in $P$ where $\fF'$ is the result of replacing every m-query $\mquery$ by its corresponding query $\cquery$.
Then, from Lemma~\ref{lem:aux:prop:corresponding.of.m-program},
it follows that
$I(\fF') = I(\fF)$ and, thus,
$I(\fF) \cdot r_1 \not\leq I(\rA)$
which is a contradiction with the assumption that $I$ is a model of $P$.
\end{proofof}

\subsection{Proof of Proposition~\ref{prop:corresponding.m-program}}

\begin{lemma}\label{lem:aux:prop:corresponding.m-program}
Let $\fF$ be some formula and $\fF'$ be is corresponding m-formula obtained by replacing every query $\cquery$ by its corresponding m-query $\mquery$ given by $\mquery(G) = \cquery(G,1)$.
Then, it holds that $I(\fF) = I(\fF')$ for every interpretation $I$.\qed
\end{lemma}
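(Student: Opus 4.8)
The plan is to prove $I(\fF)=I(\fF')$ by structural induction on $\fF$; this is, in essence, the proof of Lemma~\ref{lem:aux:prop:corresponding.of.m-program} read backwards. Recall that the corresponding m-formula $\fF'$ is only defined when every causal query occurring in $\fF$ is monotonic, so we may assume this throughout. The one preliminary fact I would establish first is that a monotonic causal query ignores its second argument: instantiating the monotonicity condition $\cquery(G,u)\le\cquery(G',w)$ with $G'=G$ and then swapping $u$ and $w$ gives $\cquery(G,u)=\cquery(G,w)$ for all $u,w\in\values$, hence $\cquery(G,u)=\cquery(G,1)=\mquery(G)$ for every $u$. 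In particular $\mquery$ is monotonic, since $G\le G'$ implies $\mquery(G)=\cquery(G,1)\le\cquery(G',1)=\mquery(G')$.

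For the base cases: if $\fF$ is a term then $\fF'=\fF$ and the claim holds trivially. If $\fF=(\cliteral{\cquery}{\rA})$, then $\fF'=(\cliteral{\mquery}{\rA})$, and by Definition~\ref{def:causal.literal.evaluation} together with the preliminary fact, $I(\cliteral{\cquery}{\rA})=\sum\setbm{G\in\causes}{G\leqmax I(\rA)\text{ and }\mquery(G)=1}$, while by Definition~\ref{def:m-formula.evaluation}, $I(\cliteral{\mquery}{\rA})=\sum\setbm{G\in\graphs}{G\le I(\rA)\text{ and }\mquery(G)=1}$. Since $G\leqmax I(\rA)$ implies $G\le I(\rA)$, the first sum is $\le$ the second. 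For the reverse inequality I would take any addition-free $G\le I(\rA)$ with $\mquery(G)=1$, write $I(\rA)$ in disjunctive normal form as a (possibly infinite) sum $\sum_{j}c_j$ of pairwise $\le$-incomparable elements of $\causes$ (these $c_j$ being exactly the elements of $\max I(\rA)$ by Proposition~\ref{prop:join.prime}), and apply Proposition~\ref{prop:join.prime} again to get $G\le c_j$ for some $j$; then monotonicity of $\mquery$ gives $\mquery(c_j)=1$, so $G\le c_j$ lies below the first sum. Hence the two valuations coincide.

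For the inductive step, assuming the claim for all proper subformulas (whose queries are again all monotonic), the cases $\fF=(\fG,\fH)$ and $\fF=(\fG;\fH)$ follow from Definition~\ref{def:formula.evaluation} via $I(\fG,\fH)=I(\fG)*I(\fH)$ and $I(\fG;\fH)=I(\fG)+I(\fH)$ and the induction hypothesis on $\fG$ and $\fH$; and for $\fF=\Not\fG$ the induction hypothesis gives $I(\fG)=0$ iff $I(\fG')=0$, hence $I(\Not\fG)=I(\Not\fG')$ by Definition~\ref{def:formula.evaluation}.

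I expect the causal-literal base case to be the only real obstacle. The delicate point there is reconciling the ``$\leqmax$'' form of the valuation used for causal literals (Definition~\ref{def:causal.literal.evaluation}) with the ``$\le$'' form used for m-literals (Definition~\ref{def:m-formula.evaluation}); this is exactly where both the monotonicity of $\mquery$ inherited from $\cquery$ and the complete addition-primeness of the elements of $\causes$ (Proposition~\ref{prop:join.prime}) are required. The remaining steps are routine bookkeeping mirroring the previous lemma.
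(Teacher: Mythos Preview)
Your proposal is correct and follows essentially the same structural-induction approach as the paper. The paper's own proof of this lemma is actually terser than yours: it uses monotonicity to get $\cquery(G,I(\rA))=\cquery(G,1)=\mquery(G)$ and then jumps directly from $\sum\{G:G\leqmax I(\rA),\ \mquery(G)=1\}$ to $I(\cliteral{\mquery}{\rA})$ without re-justifying the $\leqmax$ versus $\leq$ reconciliation (that argument having been spelled out once already in the proof of Lemma~\ref{lem:aux:prop:corresponding.of.m-program}). Your version makes this step explicit via Proposition~\ref{prop:join.prime} and the monotonicity of $\mquery$, which is a welcome addition of detail rather than a different route.
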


\begin{proof}
In case that $F = (\cliteral{\cquery}{\rA})$ is a causal literal, by definition
\begin{align*}
I(\cliteral{\cquery}{\rA}) \ \ &= \ \
  \sum\setbm{ G \in \graphs}{G \leqmax I(\rH) \text{ and } \ \cquery(G,I(\rA)) = 1  }
\end{align*}
Furthermore, since $\cquery$ is monotonic, $\cquery(G,I(\rA)) = \cquery(G,1))$ and, thus
\begin{align*}
I(\cliteral{\cquery}{\rA}) \ \ &= \ \
  \sum\setbm{ G \in \graphs}{G \leqmax I(\rH) \text{ and } \ \cquery(G,1) = 1  }
\end{align*}
Then, since $\mquery(G) = \cquery(G,1)$, it is clear that
\begin{align*}
I(\cliteral{\cquery}{\rA}) \ \ &= \ \
  \sum\setbm{ G \in \graphs}{G \leqmax I(\rH) \text{ and } \ \mquery(G) = 1  }
  \\
  \ \ &= \ \ I(\cliteral{\mquery}{\rA})
\end{align*}
In case that $\fF$ is not a m-literal, the proof follows by structural induction assuming as induction hypothesis that $I(\fG)=I(\fG')$ for every subformula $\fG$ of $\fF$.
\end{proof}

\begin{proofof}{prop:corresponding.m-program}
Assume that $I$ is a model and $Q$ and suppose that $I$ is not a model of $P$.
Then, there is some rule $\R$ of the form $(r_i : \ \rA \lparrow \fF)$ in $P$ such that $I(\fF) \cdot r_1 \not\leq I(\rA)$.
However, since $\R$ is in $P$ there is a rule $\R'$ of the form $(r_i : \ \rA \lparrow \fF)$ in $Q$ where $\fF'$ is the result of replacing every m-query $\mquery$ by its corresponding query $\cquery$.
Then, from Lemma~\ref{lem:aux:prop:corresponding.m-program},
it follows that
$I(\fF') = I(\fF)$ and, thus,
$I(\fF') \cdot r_1 \not\leq I(\rA)$
which is a contradiction with the assumption that $I$ is a model of $Q$.
\\

\noindent
The other way around is symmetrical.
Assume that $I$ is a model and $P$ and suppose that $I$ is not a model of $Q$.
Then, there is some rule $\R'$ of the form $(r_i : \ \rA \lparrow \fF')$ in $Q$ such that $I(\fF') \cdot r_1 \not\leq I(\rA)$.
However, since $\R'$ is in $Q$ there is a rule $\R$ of the form $(r_i : \ \rA \lparrow \fF)$ in $P$ where $\fF'$ is the result of replacing every m-query $\mquery$ by its corresponding query $\cquery$.
Then, from Lemma~\ref{lem:aux:prop:corresponding.m-program},
it follows that
$I(\fF') = I(\fF)$ and, thus,
$I(\fF) \cdot r_1 \not\leq I(\rA)$
which is a contradiction with the assumption that $I$ is a model of $P$.
\end{proofof}

% \subsection{Proof of Theorem~\ref{thm:fandino2015.monotonic} and~\ref{thm:fandino2015}}
\subsection{Proof of Theorem~\ref{thm:fandino2015.monotonic}}

\begin{proofof}{thm:fandino2015.monotonic}
If $P$ is the corresponding program of some positive m-program~$Q$, the result directly follows from Proposition~\ref{prop:corresponding.of.m-program} plus Proposition~\ref{prop:nested.correspondence} and if $Q$ is the corresponding m-program of some monotonic program $P$, the result directly follows from Proposition~\ref{prop:corresponding.m-program}  plus Proposition~\ref{prop:nested.correspondence}.
\end{proofof}

% \begin{proofof}{thm:fandino2015}
% If $I$ is a causal stable model of $P$, then $P^I$ is positive and monotonic ($P$ only contains monotonic queries) and it is the corresponding program of $Q^I$.
% Hence, from Theorem~\ref{thm:fandino2015.monotonic}, it follows that $I$ is the least model of $Q^I$ and a causal stable model of $Q$.
% The other way around is symmetrical.
% \end{proofof}

\subsection{Proof of Corollary~\ref{thm:tp.properties} and~\ref{thm:tp.properties.nested}}

\begin{proofof}{thm:tp.properties.nested}
% {Theorem~\ref{thm:tp.properties}}
% From Lemma~\ref{lem:tp.properties.aux}, it follows that $\lfp(\tp) = \tpr{\omega}$ and it is clear that $\lfp(\tp) = \tpr{\omega}$ a model of $P$.
% Suppose that $\lfp(\tp)$ is not the least model of $P$.
% Then there is a model $I$ of $P$ such that $I < \lfp(\tp)$ but $I$ is not a fixpoint of $\tp$.
% That is, there is a rule $\R$ of the form of~\eqref{eq:rule} such that
% $I(\fF) \cdotl t \not\leq I(\rA)$, in other words, $I$ does not satisfiy $\R$, $I \not\models\R$, which is a contradiction with the fact that $I$ is a model of~$P$.
This is an immediately consequence of Theorem~\ref{thm:m-tp.properties} and Proposition~\ref{prop:corresponding.m-program}.
Just note that the that from Proposition~~\ref{prop:corresponding.m-program} we can translate a program into its corresponding m-program and, then, use the $\tp$ operator for m-programs to compute its least model.
Note also that, from Lemma~\ref{lem:aux:prop:corresponding.m-program}, the $\tp$ operators for m-programs and programs give the same results.
\end{proofof}

\begin{proofof}{thm:tp.properties}
Note that, from Proposition~\ref{prop:nested.correspondence}, the causal stable models of programs w.r.t. Definition~\ref{def:causal.model} and~\ref{def:causal.model.nested} agree and, therefore, the statement directly follows from Corollary~\ref{thm:tp.properties.nested}.
\end{proofof}

\subsection{Proof of Corollary~\ref{thm:positive.two-valued.model.correspondence} and~\ref{thm:positive.two-valued.model.correspondence.nested}}

\begin{proofof}{thm:positive.two-valued.model.correspondence.nested}
% Every interpretation $K$ satisfies that
% $T_P(K)(\rH) = 0$
% if and only if all rules in $P$ of the form $(r_i : \ \rH \lparrow \fF)$ satisfy
% the following equivalence:
% $K(\fF)=0$ iff $\tpP{Q}(K^{cl})(\rH) = 0$.
% Hence, 
% it follows that
% $\big(\tp(K)\big)^{cl} \ = \ \tpP{Q}(K^{cl})$.
% % \begin{gather*}
% % \big(T_P(I)\big)^{cl} \ \ \ =  \ \ \ T_Q(I^{cl})
% % \end{gather*}
% Furthermore, by definition $\tpr{0}=\tprP{Q}{0}=\botI$ and, thus
% $\big(\tpr{0}\big)^{cl} \  = \ \tprP{Q}{0}$.
% Moreover,
% \begin{gather*}
% \big(\tpr{1}\big)^{cl}
%     \ \ = \ \ \big(T_P(\tpr{0})\big)^{cl}
%     \ \ = \ \ T_Q\big((\tpr{0})^{cl}\big)
%     \ \ = \ \ T_Q(\tprP{Q}{0})
%     \ \ = \ \ \tprP{Q}{1}
% \end{gather*}
% Following this reasoning by induction, it follows that
% \begin{gather*}
% \big(\tpr{\omega}\big)^{cl}
%       \ \ \ = \ \ \ \tprP{Q}{\omega}
% \end{gather*}
% Finally, from Theorem~\ref{thm:tp.properties}, these are the least models $I=\tpr{\omega}$ and $J=\tprP{Q}{\omega}$ of programs~$P$ and $Q$, respectively.
% Consequently, $I^{cl} = J$.
This is an immediately consequence of Theorem~\ref{thm:m-positive.two-valued.model.correspondence} and Proposition~\ref{prop:corresponding.m-program}.
Just note that regular programs only contain the query $\cqueryone$ which is monotonic.
\end{proofof}

\begin{proofof}{thm:positive.two-valued.model.correspondence}
Note that, from Proposition~\ref{prop:nested.correspondence}, the causal stable models of programs w.r.t. Definition~\ref{def:causal.model} and~\ref{def:causal.model.nested} agree and, therefore, the statement directly follows from Corollary~\ref{thm:positive.two-valued.model.correspondence.nested}.
\end{proofof}

% \subsection{Proof of Corollary~\ref{thm:regluar.standard.correspondnece} and~\ref{thm:regluar.standard.correspondnece.nested}}

\subsection{Proof of Theorem~\ref{thm:regluar.standard.correspondnece} and Corollary~\ref{thm:regluar.standard.correspondnece.nested}}

\begin{proofof}{thm:regluar.standard.correspondnece.nested}
Suppose there is some causal stable model $I$ of $P$ which is not a causal stable model of $Q$ and let $P'$ and $Q'$ be the corresponding m-programs of $P^I$ and $Q^I$, respectively.
Then, $I$ is the least model of $P^I$ and, from Proposition~\ref{prop:corresponding.m-program}, $I$ is also the least model of $P^I$.
Just note that regular programs only contain the query $\cqueryone$ which is monotonic.
Since $Q$ is the result of removing all labels in $P$, then $Q^I$ and $Q'$ are the result of removing all labels in $P^I$ and $P'$, respectively.
From, Theorem~\ref{thm:m-regluar.standard.correspondnece}, this implies that $I$ is the least model of $Q'$.
Then, from Proposition~\ref{prop:corresponding.m-program} again, this implies that $I$ is the least model of $Q^I$ which is a contradiction with the assumption that $I$ is not a causal stable model of $Q$.
The other way around is analogous.
\end{proofof}

\begin{proofof}{thm:regluar.standard.correspondnece}
Note that, from Proposition~\ref{prop:nested.correspondence}, the causal stable models of programs w.r.t. Definition~\ref{def:causal.model} and~\ref{def:causal.model.nested} agree and, therefore, the statement directly follows from Corollary~\ref{thm:regluar.standard.correspondnece.nested}.
\end{proofof}

\subsection{Proof of Corollary~\ref{cor:regluar.correspondnece.nested}}

\begin{proofof}{cor:regluar.correspondnece.nested}
Just note that any two programs that only differ in their labels share the same unlabelled version $Q$ and, thus, the proof immediately follows from~Corollary~\ref{thm:regluar.standard.correspondnece}.
\end{proofof}

% \propsep

% \begin{proofof}{cor:regluar.correspondnece}
% Note that, from Proposition~\ref{prop:nested.correspondence}, the causal stable models of programs w.r.t. Definition~\ref{def:causal.model} and~\ref{def:causal.model.nested} agree and, therefore, the statement directly follows from Corollary~\ref{cor:regluar.correspondnece.nested}.
% \end{proofof}

\subsection{Proof of Proposition~\ref{prop:program.reduct.is.monotonic.nested} }

For any program $P$ and interpretations $I$ and $J$, by $\tpP{P,I}(J)$ we denote an interpretation satisfying
\begin{align*}
\tpP{P,I}(J)(\rH)
    \ &\eqdef \
    \sum \setbm{ \ G \in \causes }{ G \leq \tp(J)(\rH) \text{ and } G \leqmax I(\rH) }
\end{align*}
for every atom~$\rH \in \at$.

\begin{lemma}\label{lem:aux1:prop:program.reduct.is.monotonic}
Let $I$ be the least model of some monotonic program $P$.
Then $I = \tprP{P,I}{\omega}$.\qed
\end{lemma}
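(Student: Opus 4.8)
The plan is to prove the two inequalities $\tprP{P,I}{\omega}\le I$ and $I\le\tprP{P,I}{\omega}$ separately. Throughout I use that, being the least model of the positive monotonic program $P$, $I$ coincides with $\lfp(\tp)=\tpr\omega$ and the sequence $(\tpr\alpha)_{\alpha\le\omega}$ is $\le$-increasing with $\tpr0=\botI$ (Corollary~\ref{thm:tp.properties.nested}); hence $\tp(I)=I$ and, for every rule $(r_i:\rH\lparrow\rB_1,\dots,\rB_m)\in P$, $\big(I(\rB_1)*\dots*I(\rB_m)\big)\cdot r_i\le I(\rH)$. The first inequality is immediate: by its very definition $\tpP{P,I}(J)(\rH)$ is a sum of causes each of which is $\leqmax I(\rH)$, hence $\le I(\rH)$, so $\tpP{P,I}(J)\le I$ for \emph{every} $J$; since $\le I$ is preserved by the suprema taken at limit stages, a trivial transfinite induction gives $\tprP{P,I}{\alpha}\le I$ for all $\alpha$, and in particular $\tprP{P,I}{\omega}\le I$.

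For the converse, since $I(\rH)=\sum\setbm{G}{G\leqmax I(\rH)}$ it suffices to show $G\le\tprP{P,I}{\omega}(\rH)$ for every atom $\rH$ and every $G\leqmax I(\rH)$. As $I=\tpr\omega=\sum_{n<\omega}\tpr n$ and $G$ is completely addition-prime (Proposition~\ref{prop:join.prime}), we have $G\le\tpr n(\rH)$ for some finite $n$; so it is enough to prove, by transfinite induction on $\alpha\le\omega$, the statement $(\star)$: for every atom $\rH$ and every $G\leqmax I(\rH)$, if $G\le\tpr\alpha(\rH)$ then $G\le\tprP{P,I}{\alpha}(\rH)$ (then $G\le\tprP{P,I}{n}(\rH)\le\tprP{P,I}{\omega}(\rH)$). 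The base case $\alpha=0$ is trivial since $\tpr0=\tprP{P,I}{0}=\botI$, and the limit case follows by applying addition-primeness of $G$ to $\tpr\lambda(\rH)=\sum_{\beta<\lambda}\tpr\beta(\rH)$ and invoking the induction hypothesis.

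The work is in the successor case $\alpha=\beta+1$ (so $\beta<\omega$). Given $G\leqmax I(\rH)$ with $G\le\tpr{\beta+1}(\rH)=\tp(\tpr\beta)(\rH)=\sum_r\big(\tpr\beta(\rB^r_1)*\dots*\tpr\beta(\rB^r_{m_r})\big)\cdot r_i$, addition-primeness of $G$ singles out one rule $r=(r_i:\rH\lparrow\rB_1,\dots,\rB_m)\in P$, and distributivity of product and application over addition (Figure~\ref{fig:appl} together with complete distributivity of the lattice), followed by another use of addition-primeness, yields $G\le(H_1*\dots*H_m)\cdot r_i$ where each $H_j$ is a maximal cause of $\tpr\beta(\rB_j)$. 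Since $\tpr\beta\le I$ and every body literal is evaluated $\le$-monotonically (regular literals trivially, and monotonic causal literals by Proposition~\ref{prop:cliteral.monotonic}), $H_j\le\tpr\beta(\rB_j)\le I(\rB_j)$; by monotonicity of $*$ and $\cdot$ (Proposition~\ref{prop:operations.monotonicity}) and the model condition for $I$, $(H_1*\dots*H_m)\cdot r_i\le\big(I(\rB_1)*\dots*I(\rB_m)\big)\cdot r_i\le I(\rH)$, so $G\le(H_1*\dots*H_m)\cdot r_i\le I(\rH)$ forces $G=(H_1*\dots*H_m)\cdot r_i$ by maximality of $G$. The remaining point is to show $H_j\le\tprP{P,I}{\beta}(\rB_j)$ for each $j$: for a term $\rB_j$ this holds trivially, as $\tpr\beta$ and $\tprP{P,I}{\beta}$ agree on terms; for a regular literal with underlying atom $\rA_j$ it follows from the induction hypothesis applied to $\rA_j$ at stage $\beta$, \emph{provided} one already knows $H_j\leqmax I(\rA_j)$; and for a non-regular monotonic causal literal $(\cliteral{\psi}{\rA_j})$ one additionally transfers the fact $\psi(H_j,\cdot)=1$ from $\tpr\beta(\rA_j)$ to $\tprP{P,I}{\beta}(\rA_j)$ using monotonicity of $\psi$. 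Once all $H_j\le\tprP{P,I}{\beta}(\rB_j)$, we get $G=(H_1*\dots*H_m)\cdot r_i\le\big(\tprP{P,I}{\beta}(\rB_1)*\dots*\tprP{P,I}{\beta}(\rB_m)\big)\cdot r_i\le\tp(\tprP{P,I}{\beta})(\rH)$, and since $G\leqmax I(\rH)$ the definition of $\tpP{P,I}$ gives $G\le\tpP{P,I}(\tprP{P,I}{\beta})(\rH)=\tprP{P,I}{\beta+1}(\rH)$, closing $(\star)$.

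The main obstacle is exactly the proviso just used: one needs the auxiliary claim that each approximant is \emph{aligned} with $I$, i.e.\ every maximal cause of $\tpr\beta(\rA)$ is a maximal cause of $I(\rA)$. Without it a maximal cause $H_j$ of the approximant could be strictly dominated by a maximal cause of $I(\rA_j)$ not yet derived at stage $\beta$, and the appeal to the induction hypothesis would fail. I expect to establish this alignment claim first, by a parallel transfinite induction using the same rule-decomposition skeleton as above; there the extra mileage comes from the absorption axioms of Figure~\ref{fig:appl} (in particular the consequence $t\cdot w\le t$), which prevent further applications of rule labels from generating causes that escape the maximal causes already captured by $I=\tp(I)$. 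This alignment step is the one most in need of care, and its proof is self-contained (it does not itself use $(\star)$), so the final argument is: prove alignment by induction on $\alpha$, then prove $(\star)$ by induction on $\alpha$ (using alignment and $(\star)$ at earlier stages), and conclude $I=\tprP{P,I}{\omega}$. Everything else is routine bookkeeping with addition-primeness, monotonicity, and the model property of $I$.
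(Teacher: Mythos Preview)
Your overall strategy coincides with the paper's: both show $\tprP{P,I}{\omega}\le I$ from $\tprP{P,I}{\alpha}\le\tpr{\alpha}$, and both reduce the converse to the implication you call~$(\star)$, namely that $G\leqmax I(\rH)$ and $G\le\tpr{\alpha}(\rH)$ force $G\le\tprP{P,I}{\alpha}(\rH)$. The difference is one of detail: the paper's proof is four lines long and simply \emph{asserts} this implication in its last sentence, without any inductive argument. You, by contrast, recognise correctly that $(\star)$ is not immediate from the definition (since $\tprP{P,I}{\alpha}$ iterates the filter, so at step $\alpha+1$ one applies $\tp$ to an already-filtered interpretation, not to $\tpr{\alpha}$), set up the induction, and isolate the real obstacle as the ``alignment'' claim that every maximal cause of $\tpr{\beta}(\rA)$ is already maximal in $I(\rA)$. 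In that sense you have done more than the paper, not less: you have located precisely the point the paper sweeps under the rug.

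That said, your alignment claim is itself left as a sketch, and the sketch is where the remaining work lies. The heuristic you invoke---that application and product only decrease causes, so later-derived causes cannot strictly dominate earlier ones---is suggestive but not a proof: a cause of $\rA$ produced at a later stage via a \emph{different} rule need not be $\le$ any earlier cause of the same atom, and nothing you have written rules out such a strictly larger cause appearing only after stage $\beta$. To close this you would need a separate induction showing, roughly, that whenever $H\leqmax\tpr{\beta}(\rA)$ and $H<H''\le I(\rA)$, the decomposition of $H''$ via some rule already yields body components that (by the inductive hypothesis on their own atoms) are present at stage~$\beta$, forcing $H''\le\tpr{\beta+1}(\rA)$ after all. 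So the honest summary is: your proposal and the paper's proof share the same gap; you have named it and pointed at the right tools (absorption), but neither you nor the paper has actually closed it.
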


\begin{proof}
It is clear that $\tprP{P,I}{\alpha} \leq \tpr{\alpha}$ for every ordinal $\alpha$.
Furthermore, 
from Theorem~\ref{thm:tp.properties},
it follows that
$I = \tpr{\omega}$ and, thus, $\tprP{P,I}{\omega} \leq I$.
Suppose for the sake of contradiction that this inequality is strict, that is, $\tprP{P,I}{\omega} < I$ holds.
Then, there is some atom $\rH$ and causal value $G \leqmax I(\rH)$ such that
$G \not\leq \tprP{P,I}{\alpha}$ for every $\alpha < \omega$.
Since
$I = \tpr{\omega}$
and
$G \leq I(\rH)$, it follows that
there is some $\alpha < \omega$ such that
$G \leq \tpr{\alpha}(\rH)$.
But $G \leq \tpr{\alpha}(\rH)$ and $G \leqmax I(\rH)$ implies that
$G \leq \tprP{P,I}{\alpha}(\rH)$ which is a contradiction.
\end{proof}

\begin{lemma}\label{lem:aux2:prop:program.reduct.is.monotonic}
Let $I$ be the least model of some monotonic program $P$ and $\alpha$ be an ordinal.
Let $\cquery$ be a causal query and let $Q$ be either $P^\cI$ or the result of replacing in $P^I$ the reduced causal query~$\cquery^t$ by its non-reduced form~$\cquery$.
If $\tprP{P,I}{\alpha} \leq \tprP{Q}{\alpha} \leq I$,
then
$\tprP{P,I}{\alpha}(\fF) \leq \tprP{Q}{\alpha}(\fF')$ for every monotonic formula~$\fF$ and $\fF'$ where $\fF'$ is either $\fF^I$ or the result of replacing in $\fF^I$ the reduced causal query~$\cquery^t$ by its non-reduced form~$\cquery$.\qed
\end{lemma}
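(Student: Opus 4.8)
The plan is to reduce everything to the identity $\tprP{P,I}{\alpha}(\fF) = \tprP{P,I}{\alpha}(\fF^I)$ and then combine it with monotonicity. Write $J_1 \eqdef \tprP{P,I}{\alpha}$ and $J_2 \eqdef \tprP{Q}{\alpha}$, so the hypothesis is $J_1 \leq J_2 \leq I$. The first thing I would record, by an easy transfinite induction on $\alpha$ straight from the definition of $\tpP{P,I}$, is the purely structural fact that every maximal cause of $J_1(\rH)$ is already a maximal cause of $I(\rH)$, i.e. $\max J_1(\rH) \subseteq \max I(\rH)$ for every atom $\rH$: at a successor stage $\tpP{P,I}(J)(\rH)$ is by construction a sum of elements of $\max I(\rH)$, and at a limit stage this is preserved because $\max I(\rH)$ is an antichain. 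This observation is the crux; without it the lemma is simply false, since the reduced query $\cquery^{I(\rH)}$ can collapse to $0$ a literal that $\cquery$ still satisfies on a non-maximal cause.

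Next I would prove $J_1(\fF) = J_1(\fF^I)$ by structural induction on the monotonic formula $\fF$. The only case requiring work is a causal literal $\fF = (\cliteral{\cquery'}{\rH})$ with $\cquery'$ monotonic, where $\fF^I = (\cliteral{(\cquery')^{I(\rH)}}{\rH})$. Two elementary observations settle it: (i) a monotonic query ignores its second argument, since instantiating the defining inequality with $G=G'$ gives $\cquery'(G,u) = \cquery'(G,w)$ for all $u,w$; and (ii) for $G \leqmax J_1(\rH)$ the inclusion above gives $G \leqmax I(\rH)$, so the only $G' \leq G$ with $G' \leqmax I(\rH)$ is $G$ itself, whence $(\cquery')^{I(\rH)}(G,J_1(\rH)) = 1$ iff $\cquery'(G,I(\rH)) = 1$ iff $\cquery'(G,J_1(\rH)) = 1$; hence the two sums defining $J_1(\cliteral{\cquery'}{\rH})$ and $J_1(\cliteral{(\cquery')^{I(\rH)}}{\rH})$ range over exactly the same set of causes. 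A term is unchanged by the reduct and its valuation is constant; for $\fF = (\fE,\fH)$ and $\fF = (\fE;\fH)$ the claim follows from the induction hypothesis together with the fact that valuation commutes with $*$ and $+$; and $\fF = \Not\fE$ cannot occur, since such a formula is not monotonic.

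Finally I would chain $J_1(\fF) = J_1(\fF^I) \leq J_2(\fF^I) \leq J_2(\fF')$. The first step is the identity just proved. The middle step uses that $\fF^I$ is itself a monotonic formula, because every query occurring in it has the reduced form $\psi^{I(\rH)}$, which is monotonic by Proposition~\ref{prop:cquery.reduct.monotonic}; hence the valuation of $\fF^I$ is $\leq$-monotonic by Proposition~\ref{prop:formula.monotonic} and $J_1 \leq J_2$. The last step is Lemma~\ref{lem:monotonic.formula.prime.reduct.leq} applied with the interpretation $J_2$, which gives $J_2(\fF^I) \leq J_2(\fF')$ for the admissible $\fF'$. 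Concatenating yields $\tprP{P,I}{\alpha}(\fF) \leq \tprP{Q}{\alpha}(\fF')$, as required. The main obstacle is conceptual rather than computational: realising that the statement fails for arbitrary interpretations satisfying $J_1 \leq J_2 \leq I$ and that one must exploit the particular shape of the iterates of $\tpP{P,I}$; once the $\max$-inclusion is in hand, the remaining steps are routine.
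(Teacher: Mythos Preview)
Your proof is correct, and it is organised differently from the paper's. The paper proves the inequality $\tprP{P,I}{\alpha}(\fF) \leq \tprP{Q}{\alpha}(\fF')$ by a single structural induction on $\fF$: in the base case of a causal literal it takes a $G \leqmax \tprP{P,I}{\alpha}(\cliteral{\cquery'}{\rH})$, uses the same $\max$-inclusion you isolate (stated there simply as ``by definition, $G \leqmax \tprP{P,I}{\alpha}(\rH)$ only if $G \leqmax I(\rH)$''), pushes $G$ up to $G \leqmax \tprP{Q}{\alpha}(\rH)$ via the sandwich hypothesis, and then checks directly that $\cquery^{I(\rH)}(G,\cdot)=1$; compound formulas are handled by monotonicity of $*$ and $+$.

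Your decomposition is more modular: you first isolate the identity $\tprP{P,I}{\alpha}(\fF) = \tprP{P,I}{\alpha}(\fF^I)$ as a standalone fact (a relative of Proposition~\ref{prop:formula.valuation.reduct.stable}, but for the truncated iterates rather than for $I$ itself), and only then invoke monotonicity of the reduced formula and the already-proved Lemma~\ref{lem:monotonic.formula.prime.reduct.leq}. This has the virtue of making explicit \emph{why} the special shape of $\tpP{P,I}$ matters---namely to force $\max \tprP{P,I}{\alpha}(\rH) \subseteq \max I(\rH)$, without which the identity fails---and of reusing existing lemmas rather than redoing the structural induction for the inequality. The paper's direct argument is shorter on the page but hides this dependency inside the base case. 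Both proofs rest on the same key observation about maximal causes; yours just names it and reuses it, while the paper consumes it in place.
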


\begin{proof}
If $\fF = (\cliteral{\cquery'}{\rH})$ is a causal literal and $\cquery' = \cquery$, then $\fF' = \fF$ and the result trivially holds.
Then, assume that $\cquery' \neq \cquery$.
Thus, $G \leqmax \tprP{P,I}{\alpha}(\cliteral{\cquery}{\rH})$
holds only if
\begin{itemize}
\item $G \leqmax \tprP{P,I}{\alpha}(\rH)$, and
\item $\cquery(G,\tprP{P,I}{\alpha}(\rH))=1$.
\end{itemize}
By definition, it follows that
$G \leqmax \tprP{P,I}{\alpha}(\rH)$ 
holds only if
$G \leqmax I(\rH)$.
Furthermore,
by hypothesis, it follow that
$G \leqmax \tprP{P,I}{\alpha}(\rH) \leq \tprP{Q}{\alpha}(\rH)  \leq I(\rH)$.
Then, 
$G \leqmax I(\rH)$
and
$G \leq \tprP{Q}{\alpha}(\rH)  \leq I(\rH)$
imply
\begin{gather}
G \ \ \leqmax \ \ \tprP{Q}{\alpha}(\rH)
  \label{eq:1:lem:aux2:prop:program.reduct.is.monotonic}
\end{gather}
On the other hand,
$G \leqmax \tprP{P,I}{\alpha}(\cliteral{\cquery}{\rH})$
imply that
$\cquery(G,\tprP{P,I}{\alpha}(\rH))=1$ which, since $\cquery$ is monotonic,
implies that
$\cquery(G,I(\rH))=1$.
Then, since $G \leqmax I(\rH)$ and $\cquery(G,I(\rH))=1$,
it follows that
$\cquery^{I(\rH)}(G,u)=1$ for every $u \in \values$.
This plus \eqref{eq:1:lem:aux2:prop:program.reduct.is.monotonic}
imply
$G \leq \tprP{Q}{\alpha}(\cliteral{\cquery^{\cI(\rH)}}{\rH})$.
\\[-5pt]

\noindent
Let us define the rank of a formula such that the rank of a causal literal is $0$ and the rank of any other formula is the greater than the rank of all their subformulas and assume as induction hypothesis that
$\tprP{P,\cI}{\alpha}(\fE) \leq \tprP{Q}{\alpha}(\fE')$ for every monotonic formula~$\fE$ of less rank than $\fF$.
\\[-5pt]

\noindent
In case that $\fF = (\fE,\fH)$,
it follows that
$G \leqmax \tprP{P,\cI}{\alpha}(\fF)$ holds only if there are causal values
$G_1$ and $G_2$ such that
$G_1 \leq \tprP{P,\cI}{\alpha}(\fE)$
and
$G_2 \leq \tprP{P,\cI}{\alpha}(\fH)$
such that
$G \leq G_1 * G_2$.
Since $\fE$ and $\fH$ have less rank than $\fF$, by induction hypothesis,
it follows that
\begin{align}
G_1 \ \ \leq \ \ \tprP{P,I}{\alpha}(\fE) \ \ \leq \ \ \tprP{Q}{\alpha}(\fE')
\\
G_2 \ \ \leq \ \ \tprP{P,I}{\alpha}(\fH) \ \ \leq \ \ \tprP{Q}{\alpha}(\fH')
\end{align}
and, thus, $G \leq G_1 * G_2 \leq \tprP{Q}{\alpha}(\fF')$.
\\[-5pt]

\noindent
Finally, note that the case in which $\fF = (\fE;\fH)$ is analogous and that since $\fF$ is monotonic the case $\fF = \Not \fE$ is not valid.
\end{proof}

\begin{lemma}\label{lem:monotonic.program.least.reduct}
Let $\cI$ be the least model of some monotonic program $P$.
Then, $\cI$ is the least model of program $Q$ where $Q$ is the result of replacing  some causal literal~$(\cliteral{\cquery}{\rA})$ in $Q$ by its reduced form
$(\cliteral{\cquery^I(\rA)}{\rA})$.\qed
\end{lemma}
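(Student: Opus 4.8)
The plan is to prove the two inequalities $\cJ \le \cI$ and $\cI \le \cJ$, where $\cI = \lfp(\tpP{P})$ is the given least model of $P$ and $\cJ$ denotes the least model of $Q$. First I would note that $Q$ is again a positive monotonic program: the replacement introduces no negation, and $\cquery^{\cI(\rA)}$ is monotonic by Proposition~\ref{prop:cquery.reduct.monotonic}; hence by Corollary~\ref{thm:tp.properties} the program $Q$ has a least model $\cJ = \lfp(\tpP{Q}) = \tprP{Q}{\omega}$, and the iteration is non-decreasing, so $\tprP{Q}{\alpha} \le \cJ$ for every ordinal $\alpha$. For $\cJ \le \cI$ it then suffices to check that $\cI \models Q$. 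Rules of $Q$ not mentioning the replaced literal coincide with rules of $P$ and are satisfied by $\cI$; for a rule whose body $\fF'$ is obtained from a body $\fF$ of $P$ by replacing $(\cliteral{\cquery}{\rA})$ with $(\cliteral{\cquery^{\cI(\rA)}}{\rA})$, the argument in the proof of Proposition~\ref{prop:formula.valuation.reduct.stable} gives $\cI(\cliteral{\cquery^{\cI(\rA)}}{\rA}) = \cI(\cliteral{\cquery}{\rA})$, hence $\cI(\fF') = \cI(\fF)$ by compositionality of valuation, so $\cI$ satisfies this rule because it satisfies its $P$-counterpart. Thus $\cI \models Q$, so $\cJ \le \cI$, and in particular $\tprP{Q}{\alpha} \le \cI$ for all $\alpha$.

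For the converse $\cI \le \cJ$ I would compute with the auxiliary operator $\tpP{P,\cI}$. By Lemma~\ref{lem:aux1:prop:program.reduct.is.monotonic} we have $\cI = \tprP{P,\cI}{\omega}$, so it is enough to prove by transfinite induction that $\tprP{P,\cI}{\alpha} \le \tprP{Q}{\alpha}$ for every $\alpha$ and then set $\alpha = \omega$. The base and limit cases are immediate from the definitions and monotonicity of $+$ (Proposition~\ref{prop:operations.monotonicity}). For the successor step one has $\tprP{P,\cI}{\alpha+1}(\rH) \le \tpP{P}(\tprP{P,\cI}{\alpha})(\rH) = \sum_{(r_i:\rH\lparrow\fF)\in P} \tprP{P,\cI}{\alpha}(\fF)\cdot r_i$, so by monotonicity of $\cdot$ and $+$ it suffices to show $\tprP{P,\cI}{\alpha}(\fF) \le \tprP{Q}{\alpha}(\fF')$ for every rule body $\fF$ of $P$ and its $Q$-counterpart $\fF'$. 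This is exactly the inequality supplied by Lemma~\ref{lem:aux2:prop:program.reduct.is.monotonic}, whose hypothesis $\tprP{P,\cI}{\alpha} \le \tprP{Q}{\alpha} \le \cI$ is the induction hypothesis together with $\cJ \le \cI$ established above; the one nonroutine point in its proof is the base case for the replaced literal: if $G \leqmax \tprP{P,\cI}{\alpha}(\cliteral{\cquery}{\rA})$ then $G \leqmax \tprP{P,\cI}{\alpha}(\rA)$ and $\cquery(G, \tprP{P,\cI}{\alpha}(\rA)) = 1$, whence $G \leqmax \cI(\rA)$ and, by monotonicity of $\cquery$, $\cquery(G, \cI(\rA)) = 1$, so $\cquery^{\cI(\rA)}(G, u) = 1$ for every $u$; and since $G \le \tprP{P,\cI}{\alpha}(\rA) \le \tprP{Q}{\alpha}(\rA) \le \cI(\rA)$ with $G \leqmax \cI(\rA)$ forces $G \leqmax \tprP{Q}{\alpha}(\rA)$, we get $G \le \tprP{Q}{\alpha}(\cliteral{\cquery^{\cI(\rA)}}{\rA})$, while the unreplaced causal literals and the connective $*$ are handled by Propositions~\ref{prop:cliteral.monotonic} and~\ref{prop:operations.monotonicity} (and no negation case arises since $\fF$ is monotonic, hence positive). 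Taking $\alpha = \omega$ yields $\cI \le \cJ$, and combined with $\cJ \le \cI$ this gives $\cI = \cJ = \lfp(\tpP{Q})$.

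The hard part is the converse inclusion $\cI \le \cJ$. Pointwise one only has $\cquery^{\cI(\rA)} \le \cquery$, so $Q$ is a priori a weaker program than $P$ and the naive expectation is merely $\cJ \le \cI$; the reason the reduction is lossless is that the least model of $P$ is built entirely from causes lying in $\max \cI(\rA)$, and on exactly those causes $\cquery^{\cI(\rA)}$ agrees with $\cquery$ — which is why one must compute with $\tpP{P,\cI}$ rather than $\tpP{P}$, and why keeping the distinction between ``$\leqmax$'' and ``$\le$'' straight is the delicate bookkeeping throughout the argument.
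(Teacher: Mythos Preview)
Your proof is correct and follows essentially the same route as the paper's: first showing $\cI\models Q$ so that the least model $\cJ$ of $Q$ satisfies $\cJ\le\cI$, and then proving $\tprP{P,\cI}{\alpha}\le\tprP{Q}{\alpha}$ by transfinite induction via Lemma~\ref{lem:aux2:prop:program.reduct.is.monotonic}, invoking Lemma~\ref{lem:aux1:prop:program.reduct.is.monotonic} to conclude $\cI=\tprP{P,\cI}{\omega}\le\tprP{Q}{\omega}=\cJ$. The only cosmetic differences are that for $\cI\models Q$ you use the equality $\cI(\cliteral{\cquery^{\cI(\rA)}}{\rA})=\cI(\cliteral{\cquery}{\rA})$ from Proposition~\ref{prop:formula.valuation.reduct.stable} where the paper uses the weaker inequality of Lemma~\ref{lem:monotonic.formula.prime.reduct.leq}, and you are explicit about why the side hypothesis $\tprP{Q}{\alpha}\le\cI$ of Lemma~\ref{lem:aux2:prop:program.reduct.is.monotonic} holds (from $\cJ\le\cI$), which the paper leaves implicit.
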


\begin{proof}
Suppose for the sake of contradiction that $\cI$ is not a model of program~$Q$.
Then, there is a rule $\R' = (r_i : \rH \lparrow \fF')$ is $Q$ such that
$I(\fF') \cdotl r_i \not\leq I(\rH)$
where $\fF'$ is the result of replacing in $\fF$ some causal literal~$(\cliteral{\cquery}{\rA})$ in $Q$ by its reduced form
$(\cliteral{\cquery^I(\rA)}{\rA})$.
Since, from Lemma~\ref{lem:monotonic.formula.prime.reduct.leq},
it follows that
$I(\fF') \leq I(\fF)$
and `$\cdot$' is monotonic,
$I(\fF') \cdotl r_i \not\leq I(\rH)$
implies that
$I(\fF) \cdotl r_i \not\leq I(\rH)$
which is a contradiction with the fact that $\cI$ is a model of $P$ because there is a rule $\R = (r_i : \rH \lparrow \fF)$ in $P$.
\\

\noindent
To show that $\cI$ is the least model of $Q$
assume as induction hypothesis that
$\tprP{P,I}{\beta} \leq \tprP{Q}{\beta}$
for every ordinal $\beta < \alpha$.
Note that, if $\alpha = 0$, then $\tprP{P,I}{0} = \botI$ and, thus, the hypothesis trivially holds.
\\[-5pt]

\noindent
In case that $\alpha$ is a successor ordinal, 
$G \leqmax\tprP{P,I}{\alpha}(\rH)$ holds only if
$G \leqmax I(\rH)$ and there is some rule $\R = (r_i : \rH \lparrow \fF)$ in $P$
and causal value $G' \in \causes$
such that
$G'  \leq \tprP{P,I}{\alpha-1}(\fF)$
and
$G \leq G' \cdotl r_i$.
Furthermore,
by induction hypothesis,
it follows that
$\tprP{P,I}{\alpha-1} \leq \tprP{Q}{\alpha-1}$
and, thus, Lemma~\ref{lem:aux2:prop:program.reduct.is.monotonic} implies that
$\tprP{P,I}{\alpha-1}(\fF) \leq \tprP{Q}{\alpha-1}(\fF')$ for every monotonic formula~$\fF$ and, thus,
$G  \leq \tprP{Q}{\alpha}(\rH)$.
\\

\noindent
In case that $\alpha$ is a limit ordinal,
$G \leq \tprP{P,I}{\alpha}$
implies
$G \leq \tprP{P,I}{\beta}$ for some $\beta < \alpha$
which, by induction hypothesis, implies
$G \leq \tprP{Q}{\beta} \leq \tprP{Q}{\alpha}$.
\\

\noindent
Consequently, $\tprP{P,I}{\alpha} \leq \tprP{P^I}{\alpha}$ for every ordinal $\alpha$.
Furthermore, from Theorem~\ref{thm:tp.properties}, it follows that $\tprP{Q}{\omega}$ is the least model of $Q$
and, from Lemma~\ref{lem:aux1:prop:program.reduct.is.monotonic} and the fact that $I$ is the least model of $P$
it follows that
$I = \tprP{P,I}{\omega}$.
Since $I$ is a a model of $Q$ and $I \leq \tprP{Q}{\omega}$, it follows that $I$ must be the least model of $Q$.
\end{proof}

\begin{proofof}{prop:program.reduct.is.monotonic.nested}
\label{proofof:prop:program.reduct.is.monotonic}
Let $Q$ be the reduct of program $P$ w.r.t. $I$ and Definition~\ref{def:reduct}
and $Q'$ be the reduct of program $P$ w.r.t. $I$ and Definition~\ref{def:reduct.uniform}.
Then, $Q$ is monotonic and, from Lemma~\ref{lem:monotonic.program.least.reduct}, it follows that $I$ is the least model of $Q$ iff
$I$ is the least model of $Q'$.
\end{proofof}

% \newpage

\subsection{Proof of Proposition~\ref{prop:csm.are.models.nested}}

\begin{proofof}{prop:csm.are.models.nested}
Suppose that $I$ is not a model of $P$.
Then there is a rule $\R$ in $P$ of the form of \eqref{eq:rule} such that
$I(\fF) \cdotl r_i \not\leq I(\rH)$.
Since rule $\R$ is in $P$,
rule $\R^I$ of the form
\begin{eqnarray}
r_i: \ \ \rH \ 
    \leftarrow \ \fF^I
    % \label{eq:rule} 
\end{eqnarray}
is in $P^I$.
Furthermore, $I(\fF) = I(\fF^I)$ from Proposition~\ref{prop:formula.valuation.reduct.stable}
and, thus,
$I(\fF^I) \cdotl r_i \not\leq I(\rH)$.
That is, $I$ is not a model of $\R^I$ and, consequently, is not a model of $P^I$ which contradicts the assumption that $I$ is a causal stable model of $P$.
\end{proofof}

\begin{lemma}\label{lem:aux:prop:program.monotonic.queries.no-reduct.necesary}
Let $P$ be a program, $\cI$ be an interpretation and $\alpha$ be an ordinal.
Let $Q$ be the result of replacing in $P^\cI$ the reduced causal query~$\cquery^t$ of every monotonic query by its non-reduced form~$\cquery$.
If $\tprP{Q,\cI}{\alpha} \leq \tprP{P^\cI}{\alpha} \leq \cI$,
then
$\tprP{Q,\cI}{\alpha}(\fF') \leq \tprP{P^\cI}{\alpha}(\fF^\cI)$ for every monotonic formulas~$\fF'$ and $\fF^\cI$ where $\fF'$ is the result of replacing in $\fF^I$ the reduced causal query~$\cquery^t$ of every monotonic query by its non-reduced form~$\cquery$.\qed
\end{lemma}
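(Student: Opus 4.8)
The plan is to prove the statement by transfinite structural induction on $\fF$, mirroring the proof of Lemma~\ref{lem:aux2:prop:program.reduct.is.monotonic}. The only genuinely new feature is the bookkeeping of which operator is attached to which program: here the ``filtered'' operator $\tpP{Q,\cI}$ sits on the \emph{less} reduced program~$Q$ (monotonic queries un-reduced) and the plain operator $\tpP{P^\cI}$ on the \emph{more} reduced program~$P^\cI$ (all queries reduced). This is exactly the configuration one needs afterwards to treat programs~$P$ that already contain non-monotonic queries, so that $Q$ genuinely differs from~$P$.

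The base case is $\fF$ a causal literal $(\cliteral{\cquery}{\rA})$, and it splits according to whether the query is un-reduced when going from $\fF^\cI=(\cliteral{\cquery^{\cI(\rA)}}{\rA})$ to $\fF'$. If $\cquery$ is monotonic and $\fF'=(\cliteral{\cquery}{\rA})$, I would take any individual cause $G \leqmax \tprP{Q,\cI}{\alpha}(\rA)$ with $\cquery(G,\tprP{Q,\cI}{\alpha}(\rA))=1$. By the definition of $\tpP{Q,\cI}$ (and a routine induction on $\alpha$), every maximal individual cause of $\tprP{Q,\cI}{\alpha}(\rA)$ is already $\leqmax \cI(\rA)$, so $G \leqmax \cI(\rA)$. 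Using the hypothesis $G \le \tprP{Q,\cI}{\alpha}(\rA) \le \tprP{P^\cI}{\alpha}(\rA) \le \cI(\rA)$ together with $G \leqmax \cI(\rA)$, the cause $G$ is sandwiched, so in fact $G \leqmax \tprP{P^\cI}{\alpha}(\rA)$. Since a monotonic query does not depend on its second argument, $\cquery(G,\cI(\rA))=\cquery(G,\tprP{Q,\cI}{\alpha}(\rA))=1$, and then with $G\leqmax \cI(\rA)$ the definition of the reduct (witness $G'=G$) gives $\cquery^{\cI(\rA)}(G,u)=1$ for every $u\in\values$. Hence $G$ is among the terms summed in $\tprP{P^\cI}{\alpha}(\fF^\cI)$, so $G \le \tprP{P^\cI}{\alpha}(\fF^\cI)$; as each such $G$ is addition-prime (Proposition~\ref{prop:join.prime}) and $\tprP{Q,\cI}{\alpha}(\fF')$ is the sum of all of them, this yields $\tprP{Q,\cI}{\alpha}(\fF') \le \tprP{P^\cI}{\alpha}(\fF^\cI)$. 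If instead the query is not un-reduced, i.e. $\fF'=\fF^\cI=(\cliteral{\cquery^{\cI(\rA)}}{\rA})$ with $\cquery^{\cI(\rA)}$ monotonic by Proposition~\ref{prop:cquery.reduct.monotonic}, the inequality is just $\le$-monotonicity of the evaluation of a monotonic causal literal (Proposition~\ref{prop:cliteral.monotonic}) applied to $\tprP{Q,\cI}{\alpha} \le \tprP{P^\cI}{\alpha}$.

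For the inductive step I would assume the claim for the immediate subformulas. If $\fF=(\fE,\fH)$ then $\fF^\cI=(\fE^\cI,\fH^\cI)$ and $\fF'=(\fE',\fH')$; unfolding Definition~\ref{def:formula.evaluation}, applying the induction hypothesis to $\fE$ and $\fH$, and using that $*$ is monotone (Proposition~\ref{prop:operations.monotonicity}) gives $\tprP{Q,\cI}{\alpha}(\fF') \le \tprP{P^\cI}{\alpha}(\fF^\cI)$. The case $\fF=(\fE;\fH)$ is identical with $+$ in place of $*$, and the case $\fF=\Not\fE$ cannot occur because $\fF$ is monotonic.

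I expect the base case with un-reduction to be the only delicate point: one must make sure that the cause $G$ contributing to $\tprP{Q,\cI}{\alpha}(\cliteral{\cquery}{\rA})$ is \emph{simultaneously} maximal below $\tprP{P^\cI}{\alpha}(\rA)$ and maximal below $\cI(\rA)$, so that it survives the filter built into the reduced query $\cquery^{\cI(\rA)}$. This is precisely what the definition of $\tpP{Q,\cI}$ guarantees (it accumulates only causes that are $\leqmax\cI$), exactly as in Lemma~\ref{lem:aux2:prop:program.reduct.is.monotonic}; the remaining manipulations are routine over the completely distributive lattice $\values$.
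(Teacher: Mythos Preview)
Your proposal is correct and follows essentially the same structural-induction argument as the paper: the same base-case split (monotonic query un-reduced versus query left reduced), the same sandwich argument to obtain $G\leqmax\tprP{P^\cI}{\alpha}(\rA)$ from $G\leqmax\cI(\rA)$ and the hypothesis chain, and the same use of monotonicity of $\cquery$ to conclude $\cquery^{\cI(\rA)}(G,u)=1$.

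One small slip: you dismiss the case $\fF=\Not\fE$ on the grounds that ``$\fF$ is monotonic,'' but the lemma only assumes $\fF^\cI$ and $\fF'$ are monotonic (and in the paper's definition a monotonic formula may contain $\Not$; monotonic refers only to the causal literals occurring in it). The paper instead treats this case explicitly: since $(\Not\fE)^\cI\in\{\top,\bot\}$ and the passage from $\fF^\cI$ to $\fF'$ only touches causal queries, one has $\fF'=\bot$ iff $\fF^\cI=\bot$ and $\fF'=\top$ iff $\fF^\cI=\top$, so the inequality is immediate. This is a cosmetic gap rather than a real one---the case is trivial once noticed---but your justification for skipping it is not quite right.
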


\begin{proof}
If $\fF = (\cliteral{\cquery}{\rH})$ is a causal literal and $\cquery$ is not a monotonic causal query, then $\fF' = \fF^\cI$ and the result trivially holds.
Then, assume that $\cquery$ is a monotonic causal query.
This implies that $G \leqmax \tprP{Q,I}{\alpha}(\cliteral{\cquery}{\rH})$
holds only if
\begin{itemize}
\item $G \leqmax \tprP{Q,\cI}{\alpha}(\rH)$, and
\item $\cquery(G,\tprP{Q,\cI}{\alpha}(\rH))=1$.
\end{itemize}
By definition, it follows that
$G \leqmax \tprP{Q,\cI}{\alpha}(\rH)$ 
holds only if
$G \leqmax I(\rH)$.
Furthermore,
by hypothesis, it follow that
$G \leqmax \tprP{Q,\cI}{\alpha}(\rH) \leq \tprP{P^\cI}{\alpha}(\rH)  \leq I(\rH)$.
Then, 
$G \leqmax \cI(\rH)$
and
$G \leq \tprP{Q,\cI}{\alpha}(\rH)  \leq \cI(\rH)$
imply
\begin{gather}
G \ \ \leqmax \ \ \tprP{P^\cI}{\alpha}(\rH)
  \label{eq:1:lem:aux:prop:program.monotonic.queries.no-reduct.necesary}
\end{gather}
On the other hand,
$G \leqmax \tprP{Q}{\alpha}(\cliteral{\cquery}{\rH})$
imply that
$\cquery(G,\tprP{Q}{\alpha}(\rH))=1$ which, since $\cquery$ is monotonic,
implies that
$\cquery(G,I(\rH))=1$.
Then, since $G \leqmax I(\rH)$ and $\cquery(G,I(\rH))=1$,
it follows that
$\cquery^{I(\rH)}(G,u)=1$ for every causal value $u \in \values$.
This plus \eqref{eq:1:lem:aux:prop:program.monotonic.queries.no-reduct.necesary}
imply that
$G \leq \tprP{P^\cI}{\alpha}(\cliteral{\cquery^{\cI(\rH)}}{\rH}) = \tprP{P^\cI}{\alpha}(\fF^\cI)$.
\\[-5pt]

\noindent
Let us define the rank of a formula such that the rank of a causal literal is $0$ and the rank of any other formula is the greater than the rank of all their subformulas and assume as induction hypothesis that
$\tprP{Q,\cI}{\alpha}(\fE') \leq \tprP{P^\cI}{\alpha}(\fE^\cI)$ for every monotonic formula~$\fE$ of less rank than $\fF$.
\\[-5pt]

\noindent
In case that $\fF = (\fE,\fH)$,
it follows that
$G \leqmax \tprP{Q}{\alpha}(\fF)$ holds only if there are causal values
$G_1$ and $G_2$ such that
$G_1 \leq \tprP{Q}{\alpha}(\fE')$
and
$G_2 \leq \tprP{Q}{\alpha}(\fH')$
such that
$G \leq G_1 * G_2$.
Since $\fE$ and $\fH$ have less rank than $\fF$, by induction hypothesis,
it follows that
\begin{align}
G_1 \ \ \leq \ \ \tprP{Q}{\alpha}(\fE') \ \ \leq \ \ \tprP{Q}{\alpha}(\fE^\cI)
\\
G_2 \ \ \leq \ \ \tprP{Q}{\alpha}(\fH') \ \ \leq \ \ \tprP{Q}{\alpha}(\fH^\cI)
\end{align}
and, thus, $G \leq G_1 * G_2 \leq \tprP{P^\cI}{\alpha}(\fF^\cI)$.
\\[-5pt]

\noindent
The case in which $\fF = (\fE;\fH)$ is analogous.
In case that $\fF = \Not \fE$, by definition
it follows that $\fF' = \bot$ iff $\fF^\cI = \bot$
and $\fF' = \top$ iff $\fF^\cI = \top$
\end{proof}

\subsection{Proof of Proposition~\ref{prop:csm.are.supported.models} and~\ref{prop:csm.are.supported.models.nested}}
\label{proof:prop:csm.are.supported.models.nested}

\begin{lemma}\label{lem:formula.reduct.monotonic.nested}
The reduct $\fF^\cI$ of a formula $\fF$ w.r.t. any interpretation $\cI$ is $\leq$-monotonic, that is, $\cJ(\fF^\cI) \leq \cK(\fF^\cI)$ for all causal interpretations $\cJ$ and $\cK$ such that $\cJ \leq \cK$.\qed
\end{lemma}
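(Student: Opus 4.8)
The plan is to reduce the statement to Proposition~\ref{prop:formula.monotonic}, which asserts that every \emph{monotonic} formula is $\leq$-monotonic. So it suffices to prove, by structural induction on $\fF$, that the reduct $\fF^\cI$ is always a positive monotonic formula in the sense of Section~\ref{sec:nested}; that is, $\fF^\cI$ contains no occurrence of the operator $\Not$ and every causal literal occurring in it has a monotonic query. Once this is established, $\cJ(\fF^\cI) \leq \cK(\fF^\cI)$ for all causal interpretations $\cJ \leq \cK$ is immediate from Proposition~\ref{prop:formula.monotonic}.

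For the base cases, first note that if $\fF = t$ is a term, then $\fF^\cI = t$, which is positive and (vacuously) monotonic since it contains no causal literal. If $\fF = (\cliteral{\cquery}{\rA})$ is a causal literal, then by Definition~\ref{def:reduct} its reduct is either itself, when $\cquery$ is monotonic, or $(\cliteral{\cquery^{\cI(\rA)}}{\rA})$, when $\cquery$ is non-monotonic; in the latter case the reduced query $\cquery^{\cI(\rA)}$ is monotonic by Proposition~\ref{prop:cquery.reduct.monotonic}. In either case $\fF^\cI$ is a positive monotonic formula. (If instead the uniform reduct of Definition~\ref{def:reduct.uniform} is in force, the argument is unchanged: every causal literal is replaced by $(\cliteral{\cquery^{\cI(\rA)}}{\rA})$, again monotonic by Proposition~\ref{prop:cquery.reduct.monotonic}.) For the inductive step, assume the claim holds for all proper subformulas of $\fF$. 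If $\fF = (\fE,\fH)$ or $\fF = (\fE;\fH)$, then $\fF^\cI = (\fE^\cI,\fH^\cI)$ respectively $\fF^\cI = (\fE^\cI;\fH^\cI)$, and by the induction hypothesis both $\fE^\cI$ and $\fH^\cI$ are positive monotonic formulas, hence so is $\fF^\cI$. If $\fF = \Not \fE$, then by Definition~\ref{def:reduct} the reduct $\fF^\cI$ is the constant $\bot$ or the constant $\top$, i.e.\ a term, which is positive and monotonic. This closes the induction.

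This argument is essentially bookkeeping, so I do not expect a serious obstacle. The one point that genuinely carries the proof is the causal-literal base case, where Proposition~\ref{prop:cquery.reduct.monotonic} is exactly what guarantees that, once negation has been compiled away, every surviving query is monotonic. An alternative, equally short route would be to run a direct structural induction on $\fF$ proving $\cJ(\fF^\cI) \leq \cK(\fF^\cI)$ at once, using the monotonicity of $*$ and $+$ (Proposition~\ref{prop:operations.monotonicity}) in the binary cases and Propositions~\ref{prop:cliteral.monotonic} and~\ref{prop:cquery.reduct.monotonic} at the causal-literal leaves; but routing through Proposition~\ref{prop:formula.monotonic} avoids repeating that induction, so I would prefer the first approach.
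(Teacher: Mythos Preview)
Your proposal is correct and follows essentially the same approach as the paper: the paper's proof also observes that $\fF^\cI$ is positive and that every query occurring in it is monotonic (by Proposition~\ref{prop:cquery.reduct.monotonic}), then invokes Proposition~\ref{prop:formula.monotonic}. You spell out the structural induction that the paper leaves implicit, but the strategy and the key ingredients are identical.
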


\begin{proof}
From Proposition~\ref{prop:cquery.reduct.monotonic}, the reduct of any query $\cquery^{I(\rA)}$ is monotonic.
Furthermore, the reduct of any formula $\fF^I$ is positive.
Hence, $\fF^\cI$ is monotonic and, from Proposition~\ref{prop:formula.monotonic}, it follows that formula $\fF^I$ is $\leq$-monotonic.
\end{proof}

\begin{proofof}{prop:csm.are.supported.models.nested}
From Proposition~\ref{prop:csm.are.models.nested}, any causal stable model $I$ of a program~$P$ is a model of $P$.
Suppose that $I$ is not supported, that is, there is some true atom $\rA$ and cause $G \leq I(\rA)$ sucht that no rule~$\R$ in $P$ of the form of~\eqref{eq:rule} satisfies $G \leq I(\rF) \cdotl r_i$.
Furthermore, from Proposition~\ref{prop:formula.valuation.reduct.stable}, it follows that $I(\rF^I)=I(\rF)$.
That is, no rule $\R^I$ in~$P^I$ satisfies $G \leq I(\rF^I) \cdotl r_i$.

Let $J$ be a causal interpretation such that $J(\rB) = I(\rB)$ for every atom $\rB \neq \rA$ and
$J(\rA) = \sum \setm{ G' \in \causes}{ G' \leq I(\rA) \text{ and } G \not\leq G' }$.
Clearly $J < I$ and, since $I$ is a \mbox{$\leq$-minimal} model of $P^I$, $J$ cannot be a model of $P^I$.
That is, there is a rule $\R^I$ in $P^I$ of the form of~\eqref{eq:rule} such that
$J(\rF^I) \cdotl r_i \not\leq J(\rA)$.
Then there is a cause $G' \leq J(\rF^I) \cdotl r_i$ such that
$G' \not\leq J(\rA)$.
Since $I \leq J$,
it follows that $J(\rF^I) \leq I(\rF^I)$ (Lemma~\ref{lem:formula.reduct.monotonic.nested}) and thus, since application is monotonic, it follows that
$G' \leq I(\rF^I) \cdotl r_i$.
Note that $G' \leq I(\rF^I) \cdotl r_i$,
but no rule in $P^I$ with $\rA$ in the head satisfies
\mbox{$G \leq I(\rF^I) \cdotl r_i$}.
Then
$G \not\leq G'$.
Moreover, since $I\models \R^I$, it follows that $G' \leq I(\rA)$ and then, since $G \not\leq G'$, it follows that $G' \leq J(\rA)$, which is a contradiction with the fact that $G' \not\leq J(\rA)$.
Consequently, $I$ is a supported model of $P$.
\end{proofof}

\begin{proofof}{prop:csm.are.supported.models}
Note that, from Proposition~\ref{prop:nested.correspondence}, the causal stable models of programs w.r.t. Definition~\ref{def:causal.model} and~\ref{def:causal.model.nested} agree and, therefore, the statement directly follows from Proposition~\ref{prop:csm.are.supported.models.nested}.
\end{proofof}

\subsection{Proof of Proposition~\ref{prop:csm.are.minimal.models} and~\ref{prop:csm.are.minimal.models.nested} }
\label{proof:prop:csm.are.minimal.models.nested}

\begin{lemma}\label{lem:models.are.reduct-models.nested}
Let $P$ be a normal program and $I$ and $J$ be two causal interpretation such that $J \leq I$.
If $J$ is a model of $P$, then $J$ is a model of $P^I$.\qed
\end{lemma}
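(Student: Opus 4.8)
The plan is to push the claim down to the formula level, where Proposition~\ref{prop:normal.formula.reduct.leq.formula} already does the real work. Recall that, by Definition~\ref{def:reduct.nested}, the reduct $P^I$ contains, for every rule $\R = (r_i : \rH \lparrow \fF)$ of $P$, exactly the rule $\R^I = (r_i : \rH \lparrow \fF^I)$, with no rule dropped. Hence it suffices to fix an arbitrary rule $\R$ of $P$ and show that $J \models \R^I$, i.e. that $J(\fF^I) \cdot r_i \leq J(\rH)$ (this covers the case $\rH = \bot$ as well, since then $J(\fF)\cdot r_i \leq J(\bot)=0$ forces $J(\fF)\cdot r_i = 0$).

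First I would use normality of $P$: the body $\fF$ of every rule of $P$ is a normal formula, so Proposition~\ref{prop:normal.formula.reduct.leq.formula} applies and, from $J \leq I$, gives $J(\fF^I) \leq J(\fF)$. Since application `$\cdot$' is monotonic (Proposition~\ref{prop:operations.monotonicity}), this yields $J(\fF^I) \cdot r_i \leq J(\fF) \cdot r_i$. Now I invoke the hypothesis that $J$ is a model of $P$: it satisfies $\R$, so $J(\fF) \cdot r_i \leq J(\rH)$. Chaining the two inequalities gives $J(\fF^I) \cdot r_i \leq J(\rH)$, that is, $J \models \R^I$. As $\R$ was arbitrary, $J$ satisfies every rule of $P^I$, so $J \models P^I$.

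There is essentially no obstacle left in the lemma itself; the delicate point is the one already resolved inside Proposition~\ref{prop:normal.formula.reduct.leq.formula}, which I would simply cite. That difficulty is the case of a negated subformula $\Not\fE$ of a normal body: because of normality, $\fE$ is positive and all its causal queries are monotonic, hence $\fE$ is $\leq$-monotonic (Proposition~\ref{prop:formula.monotonic}); combined with stability of the valuation under reduct (Proposition~\ref{prop:formula.valuation.reduct.stable}), this ensures that whenever $J(\Not\fE)=0$ the reduct already has $(\Not\fE)^I=\bot$, so the inequality $J((\Not\fE)^I)\leq J(\Not\fE)$ holds (and it is trivial when $J(\Not\fE)\neq 0$). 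This is exactly why normality — no double negation and no negated non-monotonic causal literal — is indispensable, and the proof of the lemma proper reduces to the short three-line chaining argument above.
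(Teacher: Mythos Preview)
Your proof is correct and follows essentially the same approach as the paper: both reduce the claim to Proposition~\ref{prop:normal.formula.reduct.leq.formula} applied to the body of an arbitrary rule, then chain $J(\fF^I)\cdot r_i \leq J(\fF)\cdot r_i \leq J(\rH)$ using monotonicity of application and the hypothesis that $J\models P$. The only cosmetic difference is that the paper phrases it as a proof by contradiction while you give the direct argument; your additional paragraph explaining the $\Not\fE$ case of Proposition~\ref{prop:normal.formula.reduct.leq.formula} is accurate commentary but not part of the lemma's own proof.
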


\begin{proof}
Suppose that $J$ is a model of $P$ and not a model of $P^I$.
Then, there is a rule~$\R$ in $P$ of the form of~\eqref{eq:rule} such that
$J(\fF) \cdotl r_i \leq J(\rH)$
and
$J(\fF^I) \cdotl r_i \not\leq J(\rH)$.
Note that, since $P$ is a normal program,
the formula $\fF$ must also be normal.
Then, since $J \leq I$,
Proposition~\ref{prop:normal.formula.reduct.leq.formula} implies that
$J(\fF^I) \leq J(\fF)$.
Furthermore, since application~`$\cdot$' is monotonic,
it follows that
\begin{gather*}
J(\fF^I) \cdotl r_i \ \ \leq \ \ J(\fF) \cdotl r_i \ \ \leq \ \ J(\rH)
\end{gather*}
which is a contradiction with the fact that 
$J(\fF^I) \cdotl r_i \not\leq J(\rH)$.
\end{proof}

\proofsep

\begin{proofof}{prop:csm.are.minimal.models.nested}
If $I$ is a causal stable model of $P$,
then, Proposition~\ref{prop:csm.are.models.nested} implies that $I$ is a model of $P$.
Suppose that $I$ is not $\leq$-minimal.
Then there exists an interpretation $J \leq I$ such that $J$ is a model of $P$.
But, since $P$ is a normal program, from Lemma~\ref{lem:models.are.reduct-models.nested}, $J$ must be a model of $P^I$ and, thus, $I$ is not a $\leq$-minimal model of $P^I$ which contradicts the assumption that $I$ is a causal stable model of $P$.
\end{proofof}

\begin{proofof}{prop:csm.are.minimal.models}
Note that, from Proposition~\ref{prop:nested.correspondence}, the causal stable models of programs w.r.t. Definition~\ref{def:causal.model} and~\ref{def:causal.model.nested} agree and, therefore, the statement directly follows from Proposition~\ref{prop:csm.are.minimal.models.nested}.
\end{proofof}

\subsection{Proof of Theorem~\ref{thm:splitting} and~\ref{thm:splitting.nested}}

\begin{definition}
A \emph{splitting} of a program $P$ is a pair $\tuple{P_b,P_t}$ of pairwise disjoint sets such that
$P = (P_b \cup P_t)$
and no atom occurring in the head of a rule in $P_t$ occurs in the body of a rule in $P_b$.
A splitting is said to be \emph{strict} if, in addition, no atom occurring in the head of a rule in $P_t$ occurs (the head of a rule) in $P_b$.\qed
\end{definition}

% \begin{Theorem}[Splitting]{\label{thm:splitting}}
% Let $\tuple{P_b,P_t}$ a splitting of some program $P$.
% An interpretation~$I$ is a causal stable model of $P$ iff there is some causal stable model $J$ of $P_b$ such that
% $I$ is a causal stable model of
% $(J \cup P_t)$.
% Furthermore, if $\tuple{P_b,P_t}$ is a strict splitting,
% then $J = \restr{I}{S}$ where $S$ is the set of atoms of all atoms not occurring in the head of any rule in~$P_t$.\qed
% \end{Theorem}

\begin{lemma}\label{lem:splitting.aux}
Let $P_b$ and $P_t$ be two monotonic programs such that no atom occurring in a body in $P_b$ is a head atom of $P_t$.
Let $I$ and $J$ be the least models of $(P_b \cup P_t)$ and $P_b$, respectively.
Then, $I$ is also the least model of program $(J \cup P_t)$.
Furthermore, 
$\restr{J}{S} = \restr{I}{S}$ where $S$ is the set of atoms of all atoms not in the head of any rule in~$P_t$.\qed
\end{lemma}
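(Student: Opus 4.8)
Throughout I write $\tp$ for the direct consequences operator and $\lfp(\tp)$ for its least fixpoint; since $P_b$, $P_t$ (hence also $P_b\cup P_t$ and $J\cup P_t$) are positive and monotonic, Corollary~\ref{thm:tp.properties.nested} guarantees that each of these programs has a least model, equal to the least fixpoint of its $\tp$ operator and obtainable by iterating from $\botI$ (taking joins at limit ordinals). Let $M$ denote the least model of $J\cup P_t$; the plan is to prove $M=I$, and along the way that $M$ and $J$ agree on $S$. I use the convention that $J\cup P_t$ is the program obtained from $P_t$ by discarding the rules of $P_b$ and adding, for each atom $\rA$, a rule whose body is a term of value $J(\rA)$ and whose label is $1$; consequently, for $\rA\in S$ (an atom heading no rule of $P_t$) the only rule of $J\cup P_t$ with head $\rA$ is that added fact, so $T_{J\cup P_t}(X)(\rA)=J(\rA)$ for \emph{every} interpretation $X$.

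First I would establish $M\leq I$. Since $I$ is a model of $P_b\cup P_t$ it is a model of $P_b$, so $J\leq I$ by leastness of $J$; hence $I$ satisfies every added fact of $J$ and also every rule of $P_t$, i.e.\ $I$ is a model of $J\cup P_t$, and therefore $M\leq I$.

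The substance of the proof is the converse $I\leq M$, for which it suffices to show that $M$ is a model of $P_b\cup P_t$ (then $I\leq M$ by leastness of $I$). That $M\models P_t$ is immediate, as $P_t\subseteq J\cup P_t$. The delicate part — and the step I expect to be the main obstacle — is $M\models P_b$: the naive attempt from ``$J\leq M$ and $J\models P_b$'' fails, since a positive rule may cease to be satisfied when body values grow. The fix is that $M$ and $J$ coincide \emph{exactly} on $S$, not merely $J\leq M$. Indeed $M\geq J$ because $M$ satisfies the added facts; and for $\rA\in S$ the identity $T_{J\cup P_t}(X)(\rA)=J(\rA)$ forces $M(\rA)=J(\rA)$ after iteration from $\botI$. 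By hypothesis every atom occurring in a body of a $P_b$-rule lies in $S$, and the valuation of a formula depends only on the values of the atoms occurring in it, so for any rule $(r_i\colon \rH\lparrow\fF)\in P_b$ we get $M(\fF)=J(\fF)$; combining with $J\models P_b$ and $J\leq M$ yields $M(\fF)\cdot r_i=J(\fF)\cdot r_i\leq J(\rH)\leq M(\rH)$, i.e.\ $M\models P_b$. Hence $M\models P_b\cup P_t$ and $I\leq M$.

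Combining the two inclusions gives $M=I$, so $I$ is the least model of $J\cup P_t$. Finally, the ``furthermore'' part is exactly the identity proved above: $\restr{I}{S}=\restr{M}{S}=\restr{J}{S}$, since $M(\rA)=J(\rA)$ for every $\rA\in S$. A minor bookkeeping point to check carefully is the exact meaning of $J\cup P_t$ (the label of the added facts and how $\tp$ treats them), which is what makes the crucial equality $T_{J\cup P_t}(X)(\rA)=J(\rA)$ on $S$ hold; everything else is a routine transfinite induction on the $\tp$ iteration.
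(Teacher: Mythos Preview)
Your proof is correct and follows essentially the same approach as the paper's: both show that $I$ is a model of $J\cup P_t$ via $J\leq I$, and both hinge on the agreement of the relevant interpretations on $S$ to transfer satisfaction of $P_b$-rules. The only difference is organizational: you argue directly by double inequality ($M\leq I$ and $I\leq M$) and establish $\restr{M}{S}=\restr{J}{S}$ up front via the $T_{J\cup P_t}$ operator, whereas the paper argues by contradiction (taking a model $I'<I$ of $J\cup P_t$ and showing it violates $J\models P_b$) and invokes the equality $I(\fF)=J(\fF)$ before stating the $S$-agreement explicitly; your ordering is cleaner and avoids that apparent circularity.
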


\begin{proof}
Since interpretation $J$ is the least model of the program $J$ and $J \leq I$, it follows that $I$ satisfies all rules in program~$J$.
In addition, since $I$ is the least model of program~$(P_b \cup P_t)$,
 it is clear that $I$ also satisfies all rules in $P_t$
and, thus, $I$ satisfies all rules in program~$(J \cup P_t)$.
Suppose that $I$ is not the least model of $(I \cup P_t)$.
Then, there is a model $I'$ of
$(J \cup P_t)$ such that $I' < I$.
Since $I$ is the least model of program~$(P_b \cup P_t)$ and $I' < I$, it follow that $I'$ does not satisfy some rule \mbox{$\R = (r_i : \ \rH \leftarrow \fF)$ in $(P_b \cup P_t)$}.
That is,
% \begin{gather*}
$I'(\fF)  \cdot r_i  \ \not\leq \ I'(\rH)$.
% \end{gather*}
Since $I'$ is a model of $(J\cup P_t)$, it is clear that $J \leq I'$ and, since in addition $I' < I$, it follows that
$I(\fF)  \cdot r_i  \ \not\leq \ J(\rH)$ also holds.
Furthermore, $I'$ satisfy all rules in $P_t$ because $I'$ is a model of $(J \cup P_t)$ and, thus, rule $\R$ must be in~$P_b$ and no atom occurring in~$\fF$ occurs in the head of a rule in $P_t$.
Hence,
$I(\fF) = J(\fF)$
and, thus,
$I(\fF)  \cdot r_i  \, \not\leq \, J(\rH)$
implies that
$J(\fF)  \cdot r_i  \, \not\leq \, J(\rH)$
which is a contradiction with the hypothesis that $J$ is a model of $P_b$ and the fact that $\R$ in $P_b$.
Consequently, $I$ is also the least model of program $(J\cup P_t)$.
Furthermore, since $I$ is the least model of program $(J \cup P_t)$ and no atom in $S$ occurs in the head of any rule in $P_t$, it follows that $\restr{I}{S} = \restr{J}{S}$.
\end{proof}

\begin{proofof}{thm:splitting.nested}
For the only if direction.
Assume that $I$ is a causal stable model of program $(P_b \cup P_t)$.
Then, $I$ is the least model of the monotonic program
$(P_b \cup P_t)^I = (P_b^I \cup P_t^I)$.
Let $J$ be the least model of $P_b^I$.
Since $I$ and~$J$ respectively are the least models of $(P_b^I \cup P_t^I)$ and $P_b^I$ and no atom occurring in a body in $P_b^I$ is in the head of any rule in $P_t^I$,
from Lemma~\ref{lem:splitting.aux}, it follows that $I$ is the least model of program~$(J \cup P_t^I) = (J \cup P_t)^I$ and, consequently, $I$ is a causal stable model of~$(J \cup P_t)$ and
$\restr{I}{S}=\restr{J}{S}$ where $S$ is the set of atoms of all atoms not occurring in the head of any rule in~$P_t$.
In addition, since $\restr{I}{S}=\restr{J}{S}$ and all atoms in the body of some rule in $P_b$ are in~$S$, it follows that $P_b^I = P_b^J$ and, therefore, $J$ is the least model of $P_b^I = P_b^J$ and a causal stable model of $P_b$.
Furthermore, if no atom occurring in $P_b$ occurs in the head of a rule in $P_t$,
then $\restr{J}{S} = J$
(note that $S$ contains all atoms in $P_b$ since no atom occurring in $P_b$ occurs in the head of a rule in $P_t$)
and, thus,
$\restr{I}{S}=J$.
\\[-5pt]

\noindent
The other way around.
If $I$ is a causal stable model of~$(J \cup P_t)$,
then $I$ is the least model of~$(J \cup P_t)^I = (J \cup P_t^I)$.
Let $S$ be the of all atoms not occurring in the head of a rule in $P_t$.
Then, $S$ contains all atoms occurring in the body of the rules in $P_b$ and,
since $I$ is the least model of~$(J \cup P_t^I)$,
it follows that
$\restr{I}{S} = \restr{J}{S}$ and, thus, $P_b^I = P_b^{J}$.
Then, since $J$ is a causal stable model of $P_b$,
it follows that
$J$ is the least model of $P_b^I$.
From Lemma~\ref{lem:splitting.aux}, this implies that
$I$ is the least model of program~$(P_b^I \cup P_t^I)=(P_b \cup P_t)^I = P^I$ and, thus, $I$ is a causal stable model of~$P$.
\end{proofof}

\begin{proofof}{thm:splitting}
Note that, from Proposition~\ref{prop:nested.correspondence}, the causal stable models of programs w.r.t. Definition~\ref{def:causal.model} and~\ref{def:causal.model.nested} agree and, therefore, the statement directly follows from Theorem~\ref{thm:splitting.nested}.
\end{proofof}

\subsection{Proof of Theorem~\ref{thm:infinity.splitting.nested}}

\begin{lemma}\label{lem:infinity.splitting.aux}
Let $(P_\alpha)_{\alpha < \mu}$ a splitting sequence of some monotonic program $P$.
Then, there is a unique solution
$(I_\alpha)_{\alpha < \mu}$ of $(P_\alpha)_{\alpha < \mu}$
and it satisfies \ (i) \ $I = \sum_{\alpha < \mu} I_\alpha$ and \ (ii) \
$\restr{I_\alpha\,}{S_\alpha} = \restr{I}{S_\alpha}$
where $I$ is the least model of $P$ and $S_\alpha$ is the set of all atoms not occurring in the head of any rule in~$\bigcup_{\alpha < \beta < \mu } P_\beta$.\qed
\end{lemma}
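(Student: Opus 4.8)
The plan is to construct the solution by transfinite recursion --- which yields both its existence and its uniqueness at once --- and then to establish $(i)$ and $(ii)$ together by a transfinite induction that identifies each $J_\alpha=\sum_{\beta<\alpha}I_\beta$ with the least model of the initial segment $Q_{<\alpha}:=\bigcup_{\beta<\alpha}P_\beta$, applying the finite Splitting Lemma~\ref{lem:splitting.aux} at every step. For existence and uniqueness, observe that since $P$, and hence every $P_\alpha$, is positive monotonic, and an interpretation regarded as a set of facts is a positive monotonic program as well, each program $J_\alpha\cup P_\alpha$ is positive monotonic, so by Corollary~\ref{thm:tp.properties} it has a least causal model; moreover the reduct of a positive monotonic program is the program itself, so that least model is its \emph{unique} causal stable model. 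Hence taking $I_0$ to be the least model of $P_0$ and, for $\alpha>0$, $I_\alpha$ to be the least model of $J_\alpha\cup P_\alpha$ defines a family $(I_\alpha)_{\alpha<\mu}$ in which each term is forced and depends only on the previous ones --- precisely the unique solution of $(P_\alpha)_{\alpha<\mu}$.

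The core claim is that, for every $\alpha\le\mu$, $J_\alpha$ is the least model of $Q_{<\alpha}$, which I would prove by transfinite induction on $\alpha$. The case $\alpha=0$ is trivial ($\botI$ is the least model of the empty program). For a successor $\alpha=\gamma+1$, the splitting-sequence condition makes $\tuple{Q_{<\gamma},P_\gamma}$ a splitting in the sense of Lemma~\ref{lem:splitting.aux}; by the induction hypothesis $J_\gamma$ is the least model of $Q_{<\gamma}$, so the lemma gives that the least model of $Q_{<\gamma+1}=Q_{<\gamma}\cup P_\gamma$ equals the least model of $J_\gamma\cup P_\gamma$, i.e.\ equals $I_\gamma$. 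Since $I_\gamma$ satisfies the facts of $J_\gamma$ we have $I_\gamma\ge J_\gamma$, so $J_{\gamma+1}=J_\gamma+I_\gamma=I_\gamma$ is indeed the least model of $Q_{<\gamma+1}$; this also records the identity $I_\gamma=J_{\gamma+1}$ used later.

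The limit step is where I expect the real difficulty, because a directed supremum of models of a positive monotonic program need not itself be a model, so continuity of $\tp$ is not available. For a limit $\alpha$ one has $J_\alpha=\sup_{\beta<\alpha}J_\beta$, and each $J_\beta$, being the least model of $Q_{<\beta}\subseteq Q_{<\alpha}$, lies below the least model $M$ of $Q_{<\alpha}$; hence $J_\alpha\le M$, and it remains only to show that $J_\alpha$ is a model of $Q_{<\alpha}$, for then $M\le J_\alpha$. The key observation is that the splitting-sequence condition forces, contrapositively, every atom occurring in the body of a rule of $P_\delta$ to be a head atom only of pieces $P_\epsilon$ with $\epsilon\le\delta$; feeding this into Lemma~\ref{lem:splitting.aux} applied to the splittings $\tuple{Q_{<\delta+1},\bigcup_{\delta+1\le\epsilon<\beta}P_\epsilon}$ of the intermediate segments $Q_{<\beta}$ (for $\delta+1\le\beta<\alpha$) shows that $J_\beta$ agrees with $J_{\delta+1}$ on all such atoms, hence so does $J_\alpha$. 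Consequently, for any rule $r_i:\rH\leftarrow\fF$ of $P_\delta$ the body value is unchanged, $J_\alpha(\fF)=J_{\delta+1}(\fF)$, so $J_\alpha(\fF)\cdot r_i=J_{\delta+1}(\fF)\cdot r_i\le J_{\delta+1}(\rH)\le J_\alpha(\rH)$; thus $J_\alpha$ satisfies every rule of $Q_{<\alpha}$, closing the induction.

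Finally, instantiating the claim at $\alpha=\mu$ gives that $\sum_{\alpha<\mu}I_\alpha=J_\mu$ is the least model of $Q_{<\mu}=P$, i.e.\ $\sum_{\alpha<\mu}I_\alpha=I$, which is $(i)$. For $(ii)$, fix $\alpha<\mu$: by the successor step $I_\alpha=J_{\alpha+1}$ is the least model of $Q_{<\alpha+1}$, and the same reasoning makes $\tuple{Q_{<\alpha+1},\bigcup_{\alpha<\beta<\mu}P_\beta}$ a splitting of $P$, so Lemma~\ref{lem:splitting.aux} yields $\restr{I_\alpha}{S_\alpha}=\restr{I}{S_\alpha}$, where $S_\alpha$ is exactly the set of atoms not occurring in the head of any rule in $\bigcup_{\alpha<\beta<\mu}P_\beta$, as claimed.
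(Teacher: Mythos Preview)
Your proof is correct and uses the same essential ingredients as the paper --- transfinite induction together with Lemma~\ref{lem:splitting.aux} --- but the organisation differs in a way worth noting. The paper inducts on the \emph{length} $\mu$ of the sequence; in its limit case it first argues that $I=\sum_\alpha I_\alpha$ is a model of $P$, then obtains minimality by contradiction (if a model $I'<I$ existed, pick the first $\alpha$ with $I_\alpha\not\le I'$ and contradict the minimality of $I_\alpha$ as a model of $J_\alpha\cup P_\alpha$), and finally verifies~$(ii)$ by a separate argument. You instead isolate the single core claim that $J_\alpha$ is the least model of $Q_{<\alpha}$ and induct on the \emph{position}~$\alpha$; your limit step establishes that $J_\alpha$ is a model by showing that body values of rules in $P_\delta$ are already frozen at stage $\delta+1$ (via Lemma~\ref{lem:splitting.aux} applied to the intermediate segments), after which minimality is immediate from $J_\alpha\le M$, and~$(ii)$ drops out from one final application of the same lemma to the splitting $\tuple{Q_{<\alpha+1},\bigcup_{\alpha<\beta<\mu}P_\beta}$. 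Both routes work; yours makes the stabilisation of body values explicit and avoids the somewhat delicate bookkeeping in the paper's limit case.
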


\begin{proof}
First note that, since $P$ is a monotonic program, every $P_\alpha$ with $\alpha < \mu$ is also monotonic and, thus, there is a unique causal stable model $I_0$ of $P_0$.
Suppose that there is a solution
$(I_\alpha')_{\alpha < \mu}$ of $(P_\alpha)_{\alpha < \mu}$
such that $I_\alpha' \neq I_\alpha$ for some $\alpha < \mu$.
Let $\alpha$ be the first ordinal such that $I_\alpha' \neq I_\alpha$.
Then, $0 < \alpha < \mu$ and there are two different causal stable models $I_\alpha$ and $I_\alpha'$ of $(J_\alpha \cup P_\alpha)$ which is a contradiction with the fact that $(J_\alpha \cup P_\alpha)$ is monotonic.
\\[-5pt]

\noindent
Let $I = \sum_{\alpha < \mu} I _\alpha$ and we will show that $I$ is the least model of $P$ and that $I_\alpha = \restr{I}{S_\alpha}$.
Assume as induction hypothesis that the lemma statement holds for every ordinal $\mu' < \mu$ and note that, in case that $\mu = 0$, it follows that $P = \bigcup_{\alpha < 0 } P_\alpha = \emptyset$ and that $I = \sum_{\alpha < 0} I_\alpha = \botI$ and that $\botI$ is the least model of the empty program.
\\[-5pt]

\noindent
In case that $\mu$ is a successor ordinal,
let $\mu'  =  \mu - 1$ be its predecessor,
let $Q = \bigcup_{\alpha < \mu'} P_\alpha$
and $J$ be the least model of $Q$.
Then,
$(I_\alpha)_{\alpha < \mu'}$ is solution of $(P_\alpha)_{\alpha < \mu'}$,
$\tuple{Q,P_{\mu'}}$ is a splitting of $P$,
and, by induction hypothesis
$J = \sum_{\alpha < \mu'} I_\alpha$
and
$\restr{I_\alpha\,}{S_\alpha} = \restr{J}{S_\alpha}$ for every $\alpha < \mu'$.

Let $I_{\mu'}$ be the least model of $(J \cup P_{\mu'})$.
Since $I_{\mu'}$ is the least model of $(J \cup P_{\mu'})$,
it follows that $I_{\mu'} \geq J$ and, thus,
$I  = \sum_{\alpha < \mu} I _\alpha = I_{\mu'} + \sum_{\alpha < \mu'} I _\alpha = I_{\mu'} + J = I_{\mu'}$.
That is, $I = I_{\mu'}$ is the least model of $(J \cup P_{\mu'})$
and, since $J$ is the least model of $Q$,
from Lemma~\ref{lem:splitting.aux},
it follows that $I$ is the least model of $P = (Q \cup P_{\mu'})$
and, that, $\restr{ I_{\mu'}\, }{S_{\mu'}} = \restr{I}{S_{\mu'}}$.
Furthermore, since no atom in $S_\alpha$ with $\alpha < \mu'$ occurs in the head of any rule in $P_{\mu'}$ it follows that
$\restr{I\,}{S_\alpha} = \restr{J}{S_\alpha}$
for every $\alpha < \mu'$.
Consequently
$\restr{I_\alpha\,}{S_\alpha} = \restr{I}{S_\alpha}$ for every $\alpha < \mu$.
\\[-5pt]

\noindent
In case that $\mu$ is a limit ordinal, by induction hypothesis
$\restr{I_\alpha\,}{S_\alpha} = \restr{I}{S_\alpha}$ for every $\alpha < \mu'$
and, thus, since
all atoms occurring in the body of any rule in $P_\alpha$ belong to $S_\alpha$,
it follows that $P_\alpha^{I_\alpha} = P_\alpha^I$.
Furthermore, since $(I_\alpha)_{\alpha < \mu}$ is solution of $(P_\alpha)_{\alpha < \mu}$,
it follows that $I_\alpha$ is the least model of $(J_\alpha \cup P_\alpha)$ and, thus, $I_\alpha$ is a model of $P_\alpha^{I_\alpha} = P_\alpha^I$.
Since $I = \sum_{\alpha < \mu} I_\alpha \geq I_\alpha$,
then $I$ is a model of $P_\alpha^I$ for every $\alpha < \mu'$
and, consequently, $I$ is a model of $P^{I}$.

Suppose that $I$ is not the least model of $P$.
Then, there is a model $I'$ of
$P$ such that $I' < I$.
Since $I = \sum_{\alpha < \mu} I_\alpha$ and $I' < I$, it follows that $I_\alpha \not\leq I'$ for some first ordinal $\alpha < \mu$.
Since $\alpha$ is the first ordinal such that
$I_\alpha \not\leq I'$,
it follows that $J_\alpha = \sum_{\beta < \alpha} I_\beta \leq I'$
and, thus,
$I'$ satisfies all rules in $J_\alpha$.
Furthermore, since $P_\alpha \subseteq P$ and $I'$ is model of $P$,
it follows that
$I'$ also satisfies all rules in $P_\alpha$.
That is, $I'$ is a model of $(J_\alpha \cup P_\alpha)$ and $I_\alpha \not\leq I$ which is a contradiction with the fact that $I_\alpha$ is the least model of $(J_\alpha \cup P_\alpha)$.
Consequently, $I$ is the least model of $P$.

Suppose now that $\restr{I_\alpha\,}{S_\alpha} \neq \restr{I}{S_\alpha}$ for some $\alpha < \mu$ and let $\alpha$ be the first such ordinal.
Then, there is some first ordinal $\alpha'$ and atom $\rH \in S_\alpha$ such that
$I_{\alpha}(\rH) \not\leq I_{\alpha'}(\rH)$.
Note that $\alpha' \leq \alpha$ implies that $I_{\alpha'} \leq I_{\alpha}$ and, thus,
it must be that $\alpha < \alpha'$.
Since $\alpha'$ first ordinal that satisfies $I_{\alpha}(\rH) \not\leq I_{\alpha'}(\rH)$
it follows that
$I_\beta(\rH) \leq I_\alpha(\rH)$ for every $\beta < \alpha'$
and, thus,
$J_{\alpha'}(\rH) \leq I_\alpha(\rH)$.
Since $J_{\alpha'}(\rH)  \leq I_{\alpha}(\rH) \not\leq I_{\alpha'}(\rH)$
and $I_{\alpha'}$ is the least model of $(J_{\alpha'} \cup P_{\alpha'})$,
there must be some rule $\R = (r_i : \rH \lparrow \fF) \in P_{\alpha'}$
which is a contradiction with the fact that $\rH \in S_{\alpha}$ and $\alpha < \alpha'$.
Consequently,
$\restr{I_\alpha\,}{S_\alpha} = \restr{I}{S_\alpha}$ for all $\alpha < \mu$.
\end{proof}

\begin{proofof}{thm:infinity.splitting.nested}
For the only if direction.
Assume that $I$ is a causal stable model of $P$.
Then, $I$ is the least model of the monotonic program $P^I$
and, from Lemma~\ref{lem:infinity.splitting.aux}
there is a unique solution $(I_\alpha)_{\alpha < \mu}$ of program $P^I$
and it satisfies \ (i) \ $I = \sum_{\alpha < \mu} I_\alpha$ and \ (ii) \
$\restr{I_\alpha\,}{S_\alpha} = \restr{I}{S_\alpha}$.
Furthermore, by definition,
\begin{enumerate_thm}
\item $I_0$ is the least model of $P_0^{I}$,
% \item $I_{\alpha}$ is a stable model of $(I_{\alpha - 1} \cup P_{\alpha})$ for any successor ordinal $\alpha < \mu$, and
\item $I_{\alpha}$ is a stable model of $(J_{\alpha} \cup P_{\alpha}^{I})$ for any ordinal $0 < \alpha < \mu$ where $J_\alpha = \sum_{\beta < \alpha} I_\beta$.
\end{enumerate_thm}
Since $\restr{I_\alpha\,}{S_\alpha} = \restr{I}{S_\alpha}$ and all atoms occurring in the body of any rule in $P_\alpha$ belong to $S_\alpha$, it follows that $P_\alpha^I = P_\alpha^{I_{\alpha}}$ and, thus,
\begin{enumerate_thm}
\item $I_0$ is the least model of $P_0^{I_\alpha}$,
% \item $I_{\alpha}$ is a stable model of $(I_{\alpha - 1} \cup P_{\alpha})$ for any successor ordinal $\alpha < \mu$, and
\item $I_{\alpha}$ is a stable model of $(J_{\alpha} \cup P_{\alpha})^{I_\alpha}=(J_{\alpha} \cup P_{\alpha}^{I_\alpha})$ for any ordinal $0 < \alpha < \mu$ where $J_\alpha = \sum_{\beta < \alpha} I_\beta$.
\end{enumerate_thm}
Consequently, $(I_\alpha)_{\alpha < \mu}$ is a solution of $(P_\alpha)_{\alpha < \mu}$
and it satisfies $I = \sum_{\alpha < \mu} I_\alpha$
and
$\restr{I_\alpha\,}{S_\alpha} = \restr{I}{S_\alpha}$.
\\[-5pt]

\noindent
The other way around.
Assume there is some solution $(I_\alpha)_{\alpha < \mu}$ of $(P_\alpha)_{\alpha < \mu}$
and let $I = \sum_{\alpha < \mu} I_\alpha$.
By definition,
\begin{enumerate_thm}
\item $I_0$ is the least model of $P_0^{I_0}$,
% \item $I_{\alpha}$ is a stable model of $(I_{\alpha - 1} \cup P_{\alpha})$ for any successor ordinal $\alpha < \mu$, and
\item $I_{\alpha}$ is the least model of $(J_{\alpha} \cup P_{\alpha}^{I_\alpha})$ for any ordinal $0 < \alpha < \mu$ where $J_\alpha = \sum_{\beta < \alpha} I_\beta$.
\end{enumerate_thm}
Since $S_{\alpha}$ contains all atoms not in the head of any rule in
$\bigcup_{\alpha < \beta < \mu} P_\beta$,
it follows that
\begin{gather*}
\sum_{\beta < \alpha} \restr{I_\beta\,}{S_\alpha}
  \ = \ \restr{J_\alpha\,}{S_\alpha} 
  \ \ \leq \ \ \restr{I_\alpha\,}{S_\alpha}
  \ = \ \restr{J_{\alpha+1}\,}{S_\alpha}
  \ = \ \restr{I_{\alpha+1}\,}{S_\alpha}
  \ = \ \dotsc
  \ = \ \sum_{\beta < \mu} \restr{I_\beta\,}{S_\alpha}
  \ = \ \restr{I}{S_\alpha}
\end{gather*}
and, since $S_\alpha$ contains all atoms occurring in the body of all rules in $P_\alpha$,
it follows that
$P_\alpha^I = P_\alpha^{I_{\alpha}}$ and, thus,
\begin{enumerate_thm}
\item $I_0$ is the least model of $P_0^{I}$,
% \item $I_{\alpha}$ is a stable model of $(I_{\alpha - 1} \cup P_{\alpha})$ for any successor ordinal $\alpha < \mu$, and
\item $I_{\alpha}$ is the least model of $(J_{\alpha} \cup P_{\alpha}^{I})$ for any ordinal $0 < \alpha < \mu$ where $J_\alpha = \sum_{\beta < \alpha} I_\beta$.
\end{enumerate_thm}
Hence, $(I_\alpha)_{\alpha < \mu}$ of $(P_\alpha^I)_{\alpha < \mu}$
and, from Lemma~\ref{lem:infinity.splitting.aux}, it follows that $I$ is the least model of $P^I$ and a causal stable model of $P$.
\\[-5pt]

\noindent
Furthermore, if $(I_\alpha)_{\alpha < \mu}$ is a strict solution in $\alpha$, then no atom occurring in $P_\alpha$ occurs in the head of a rule in any $P_\beta$ with $\alpha < \beta < \mu$, and, thus, every atom occurring in $(J_\alpha \cup P_\alpha)$ belongs to~$S_\alpha$.
Consequently, $I_\alpha = \restr{I_\alpha\,}{S_\alpha} = \restr{I}{S_\alpha}$.
\end{proofof}

% \begin{proofof}{thm:infinity.splitting}
% Note that, from Proposition~\ref{prop:nested.correspondence}, the causal stable models of programs w.r.t. Definition~\ref{def:causal.model} and~\ref{def:causal.model.nested} agree and, therefore, the statement directly follows from Theorem~\ref{thm:infinity.splitting.nested}.
% \end{proofof}

\subsection{Proof of Proposition~\ref{prop:stratified} and~\ref{prop:stratified.nested}}

\begin{proofof}{prop:stratified.nested}
Let $P_{\alpha+1}$ be the set of rules of the form of~\eqref{eq:rule.nested} such that $\lambda(\rH)=\alpha$ and $P_\alpha=\emptyset$ if $\alpha$ is a limit ordinal.
Then, $(P_\alpha)_{\alpha < \mu}$ is a strict splitting sequence of $P$ and, from Theorem~\ref{thm:infinity.splitting.nested}, an interpretation $I$ is a causal stable model of~$P$ iff there is some solution $( \restr{I}{S_\alpha})_{\alpha < \mu}$ of $(P_\alpha)_{\alpha < \mu}$
such that $I = \sum_{\alpha < \mu}  \restr{I}{S_\alpha}$.
where $S_\alpha$ is the set of all atoms not occurring in the head of any rule in~$\bigcup_{\alpha < \beta < \mu } P_\beta$.
Hence, it is enough to show that every $P_\alpha$ has a unique causal stable model.

By definition, it is clear that $P_\alpha$ has the $\botI$ interpretation as its unique causal stable model when $\alpha$ is a limit ordinal.
In case that $\alpha$ is a successor ordinal, suppose that there are two different causal stable models $J$ and $J'$ of $P_\alpha$.
Since $P$ is stratified, there is no rule in~$\bigcup_{\alpha - 1 < \beta < \mu } P_\beta$ with an atom occurring in $P_{\alpha}$ under the scope of negation or a non-monotonic causal literal in $P_{\alpha}$.
Hence, $J(\rB)=J'(\rB)=\restr{I}{S_{\alpha-1}}(\rB)$ for every atom $\rB$ occurring under the scope of negation or a non-monotonic causal literal and, thus, $P^J=P^{J'}$ and $J$ and $J'$ must be equal which is a contradiction with the assumption.
\end{proofof}

\begin{proofof}{prop:stratified}
Note that, from Proposition~\ref{prop:nested.correspondence}, the causal stable models of programs w.r.t. Definition~\ref{def:causal.model} and~\ref{def:causal.model.nested} agree and, therefore, the statement directly follows from Proposition~\ref{prop:stratified.nested}.
Just note that, according to Definition~\ref{def:causal.P}, $\bot$ is not allowed in the head of the rules.
\end{proofof}

\subsection{Proof of Proposition~\ref{prop:strongly.equivalent}}

\begin{proofof}{prop:strongly.equivalent}
Let $R$ be any causal program over the signature~$\sigma$ of $P$ and $Q$.
Let $\mathcal{I}$, $\mathcal{J}$ respectively be the sets of causal stable models of program $P \cup R$ and $Q \cup R$.
Any causal stable model $I \in \mathcal{I}$ is  the least model of the positive program $(P\cup R)^I = P^I \cup R^I$.
That is, $I$ satisfies all rules in both $P^I$ and $R^I$ and,
since $P \Leftrightarrow Q$, $I$ satisfies all rules in $Q^I$.
Suppose there exists $I'$ which satisfies all rules in $(Q\cup R)^I$ and $I' < I$.
By the same reasoning $I'$ satisfies all rules in $P^I$ (an also in $R^I$) contradicting the assumption that $I$ is the lest model of $(P \cup R)^I$.
Hence, $I$ is the least model of $(P \cup R)^I$, and so, an stable model of $(P \cup R)$.
That is, $I \in \mathcal{J}$.
The other way around is analogous.
\end{proofof}

\subsection{Proof of Proposition~\ref{prop:program.replacmentet}}

\begin{lemma}\label{lem:formula.replacmentet}
Let $F$, $G$ and $H$ be formulas such that $F \Leftrightarrow G$. If a formula $H'$ is obtained from $H$ by replacing some regular occurrences of $F$ by $G$, then $H \Leftrightarrow H'$.\qed
\end{lemma}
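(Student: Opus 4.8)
The plan is to proceed by structural induction on the formula $H$, using Definition~\ref{def:equivalent.formula}: we must show that $I(H^J) = I(H'^J)$ for every pair of causal interpretations $I$ and $J$, where $H'$ is obtained from $H$ by replacing some (regular) occurrences of $F$ by $G$. The base of the induction covers the case where the replaced occurrence \emph{is} all of $H$, i.e.\ $H = F$ and $H' = G$; then the claim is just the hypothesis $F \Leftrightarrow G$. The remaining base case is when $H$ is elementary (a term $t$ or a causal literal $\rC$) and no replacement occurs inside it, so $H' = H$ and the claim is trivial. If $H$ is elementary but equals $F$, we are in the first base case already handled.

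For the inductive step, I would split on the outermost connective of $H$, mirroring the clauses of Definition~\ref{def:formula.evaluation} and of the reduct in Definition~\ref{def:reduct.nested}. If $H = (\fE,\fK)$, then any replacement of occurrences of $F$ happens entirely within $\fE$ or within $\fK$ (unless $H$ itself is $F$, already covered), so $H' = (\fE',\fK')$ where $\fE'$ and $\fK'$ are obtained from $\fE$ and $\fK$ by the corresponding replacements; by the induction hypothesis $I(\fE^J) = I(\fE'^J)$ and $I(\fK^J) = I(\fK'^J)$, and since $(\fE,\fK)^J = (\fE^J,\fK^J)$ evaluates to $I(\fE^J) * I(\fK^J)$, the equality $I(H^J) = I(H'^J)$ follows. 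The case $H = (\fE;\fK)$ is identical with product replaced by addition~`$+$'. For $H = \Not\fE$, we have $H' = \Not\fE'$; by the induction hypothesis $I(\fE^J) = I(\fE'^J)$ for every $I$, so in particular $J \models \fE^J$ iff $J \models \fE'^J$, hence $(\Not\fE)^J$ and $(\Not\fE')^J$ are the same ($\bot$ or $\top$ accordingly), and therefore they are evaluated equally by any $I$.

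The main obstacle is mostly bookkeeping rather than mathematical depth: one must be careful that the reduct operator commutes with the replacement, i.e.\ that $H'^J$ really is obtained from $H^J$ by the ``same'' replacement (of $F^J$ by $G^J$), and that $F \Leftrightarrow G$ gives precisely $I(F^J) = I(G^J)$ for \emph{all} $I,J$ — which is what lets the negation case go through, since there the premise $\fE \Leftrightarrow \fE'$ must be applied at the interpretation $J$ itself, not merely at the ambient $I$. The word ``regular'' in the statement is a mild red herring here — it only ensures the replacement stays within the regular fragment when relevant — and the induction does not actually need regularity; I would simply carry the hypothesis through verbatim. Once Lemma~\ref{lem:formula.replacmentet} is in hand, Proposition~\ref{prop:program.replacmentet} follows by applying it rule-by-rule to the bodies of $P$ together with Proposition~\ref{prop:strongly.equivalent}.
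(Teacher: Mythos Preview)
Your proposal is correct and follows essentially the same approach as the paper's proof: structural induction on $H$, with the base cases (elementary $H$, or $H=F$ itself) handled directly from the hypothesis, and the inductive cases for `$,$', `$;$', and $\Not$ treated exactly as you describe. In particular, your observation that the negation case requires invoking the induction hypothesis at the interpretation $J$ (to get $J(\fE^J)=J(\fE'^J)$, hence $(\Not\fE)^J=(\Not\fE')^J$) matches precisely what the paper does.
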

\begin{proof}
By structural induction like Lemma~4 in~\cite{lifschitzTT99nested}.
If $H$ is elementary. Then either $H = F$ and $H' = G$ or $H = H'$.
In both cases 
$H \Leftrightarrow H'$.
Otherwise, if $H = F$ and $H' = G$, then also $H \Leftrightarrow H'$.
Hence, in the following we assume that $H \neq F$.
\begin{enumerate}
\item In case $H= H_1, H_2$, then $H' = H_1', H_2'$ and, by induction hypothesis,
$H_i \Leftrightarrow H_i'$ with $i\in\set{1,2}$.
Then
\begin{align*}
I(H^J) &\ = \ I((H_1,H_2)^J)\\
       &\ = \ I(H_1^J,H_2^J)\\
       &\ = \ I(H_1^J) * I(H_2^J)\\
       &\ = \ I((H_1')^J) * I((H_2')^J)\\
       &\ = \ I((H_1',H_2')^J)\\
       &\ = \ I((H')^J)
\end{align*}

\item The case $H= H_1; H_2$ is similar to the previous one.

\item In case $H =  \Not H_1$, then $H' = \Not H_1'$ and, by induction hypothesis,
\mbox{$H_1 \Leftrightarrow H_1'$}.
\begin{align*}
I(H^J) = 1
       &\text{ \ iff \ } I((\Not H_1)^J) = 1\\
       &\text{ \ iff \ } J(H_1^J) = 0\\
       &\text{ \ iff \ } J((H_1')^J) = 0\\
       &\text{ \ iff \ } I((\Not H_1')^J) = 1\\
       &\text{ \ iff \ } I((H')^J) = 1
\end{align*}
and $I(H^J) = 0$ otherwise, that is, iff $I((H')^J) = 0$
\qed
\end{enumerate}
\end{proof}

\begin{proofof}{prop:program.replacmentet}
Similar to the proof of Proposition~3 in~\cite{lifschitzTT99nested}.
Let $Q$ be the program obtained by replacing some occurrences of $F$ by $G$ in $P$.
Assume that $I$ is satisfies all rules in~$Q^J$.
Take any rule \mbox{$(r_i : \ \rH \leftarrow \fF)$} in $P$.
Its corresponding rule \mbox{$(r_i : \ \rH \leftarrow \fG)$} in~$Q$ must satisfy
\begin{gather*}
I(\fG^J) \cdot r_i \leq I(\rA)
\end{gather*}
and, by Lemma~\ref{lem:formula.replacmentet}, it follows that $I(\fF^J) = I(\fG^J)$.
Consequently,
\begin{gather*}
I(\fF^J) \cdot r_i \leq I(\rH)
\end{gather*}
Hence, $I$ satisfies all rules in $P$.
The other way around is similar.
Hence, $I$ satisfies all rules in~$P^J$ iff $I$ satisfies all rules in $Q^J$.
That is $P \Leftrightarrow Q$ and, by Proposition~\ref{prop:strongly.equivalent}, $P$ and $Q$ are strongly equivalent.
\end{proofof}

\subsection{Proof of Proposition~\ref{prop:transformations}}

\begin{proofof}{prop:transformations}
For $(i)$ note that
\begin{align*}
I((\fF,\fG)^J) &\ = \ I(\fF^J,\fG^J)\\
           &\ = \ I(\fF^J) * I(\fG^J)\\
           &\ = \ I(\fG^J) * I(\fF^J)\\
           &\ = \ I((\fG,\fF)^J)
\end{align*}
Similarly, $I((\fF;\fG)^J) \ = \ I((\fG;\fF)^J)$. Note that product and \review{R2.4}{addition} are both commutative.
The same reasoning applies for $(ii)$ and $(iii)$ by noting that product and \review{R2.4}{addition} are also associative and distributes over one over the other.

For $(iv)$,
\begin{align*}
I((\Not\Not\Not \fF)^J) = 1
  &\text{ iff } J((\Not\!\Not \fF)^J) = 0\\
  &\text{ iff } J((\Not \fF)^J) = 1\\
  &\text{ iff } J((\Not \fF)^J) = 1\\
  &\text{ iff } J(F^J) = 0\\
  &\text{ iff } I((\Not \fF)^J) = 1
\end{align*}
and $I((\Not\Not\Not \fF)^J) = 0$ otherwise, that is $I((\Not \fF)^J) = 0$.

Similarly, for $(v)$,
\begin{align*}
I((\Not (\fF;\fG))^J) = 1
  &\text{ iff } J(\fF^J;\fG^J) = 0\\
  &\text{ iff } J(\fF^J) + J(\fG^J) = 0\\
  &\text{ iff } J(\fF^J) = 0 \text{ and } J(\fG^J) = 0\\
  &\text{ iff } I((\Not \fF)^J) = 1 \text{ and } I((\Not \fG)^J) = 1\\
  &\text{ iff } I((\Not \fF)^J) * I((\Not \fG)^J) = 1\\
  &\text{ iff } I((\Not \fF,\Not \fG)^J) = 1
\end{align*}
and $I(\Not (\fF;\fG)^J) = 0$ otherwise, that is $I((\Not \fF,\Not \fG)^J) = 0$.
Furthermore
\begin{align*}
I((\Not (\fF,\fG))^J) = 1
  &\text{ iff } J((\fF,\fG)^J) = 0\\
  &\text{ iff } J(\fF^J) * J(\fG^J) = 0\\
  &\text{ iff } J(\fF^J) = 0 \text{ or } J(\fG^J) = 0\\
  &\text{ iff } I((\Not \fF)^J) = 1 \text{ or } I((\Not \fG)^J) = 1\\
  &\text{ iff } I((\Not \fF)^J) + I((\Not \fG)^J) = 1\\
  &\text{ iff } I((\Not \fF;\Not \fG)^J) = 1
\end{align*}
and $I((\Not (\fF,\fG))^J) = 0$ otherwise, that is $I((\Not \fF;\Not \fG)^J) = 0$.
$(vi)$ and $(vii)$ directly follows from Proposition~\ref{prop:simplification}.
Finally, for $(viii)$, $I(\Not \top) = 0 = \bot$ and $I(\Not \bot) = 1 = \top$. 
\end{proofof}

\subsection{Proof of Proposition~\ref{prop:formula.normal.form}}

\begin{proofof}{prop:formula.normal.form}
The proof follows by structural induction using Proposition~\ref{prop:transformations} and~Lemma~\ref{lem:formula.replacmentet} exactly as in~\cite{lifschitzTT99nested}.
Note that we do not consider strong negation, so all formulas are regular.
\end{proofof}

\subsection{Proof of Proposition~\ref{prop:body.disjuntion.elimination}}

\begin{proofof}{prop:body.disjuntion.elimination}
Note that $I \models (r_i : \ \rA \leftarrow \fF; \fG)^J$ iff
$I \models (r_i : \ \rA \leftarrow \fF^J; \fG^J)$
\begin{gather*}
\big( I(\fF^J) + I(\fG^J) \big) \cdot r_i \leq I(\rA)
\end{gather*}
which, by application distributivity over \review{R2.4}{addition}, is equivalent to
\begin{gather*}
I(F^J) \cdotl r_i \ + \ I(G^J) \cdot r_i \leq I(\rA)
\end{gather*}
which in turn holds iff $I \models ( r_i : \ \rA \ \leftarrow \fF)^J$
and
$I \models ( r_i : \ \rA \ \leftarrow \fG)^J$.
\end{proofof}

\subsection{Proof of Proposition~\ref{prop:program.normal.form.strong}}

\begin{proofof}{prop:program.normal.form.strong}
Propositions~\ref{prop:strongly.equivalent},~\ref{prop:program.replacmentet} and~\ref{prop:formula.normal.form} show that any program is strongly equivalent to a set of rules of the form
\begin{gather}
r_i : \ \rA \ \leftarrow \
    \fF_1;\dotsc;\fF_\npbody,
\end{gather}
where each $F_i$ is a simple conjunction.
Similarly, Propositions~\ref{prop:strongly.equivalent},~\ref{prop:program.replacmentet} and~\ref{prop:body.disjuntion.elimination} show that such set of rules is strongly equivalent to a set of rules of the form
\begin{gather}
r_i : \ \rA \ \leftarrow \ \fF
\end{gather}
where each $F$ is a simply conjunction.
That is, a set of rule of the form \eqref{eq:rule} in which the head can be $\bot$.
\end{proofof}

\subsection{Proof of Proposition~\ref{prop:program.normal.form}}

\begin{proofof}{prop:program.normal.form}
From Proposition~\ref{prop:program.normal.form}, every program can be writing as an equivalent program where all rules $\R$ are of the form
\begin{gather}
r_i : \ \rH \ \leftarrow \ \rB_1, \dotsc, \rB_\npbody
\end{gather}
where $\rH$ is an atom or $\bot$. If $\rH$ is an atom, then $\R$ is already of the form of~\eqref{eq:rule}.
Otherwise, replace rule~$\R$ by a rule $\R'$ of the form of
\begin{gather}
r_i : \ aux_\R \ \leftarrow \ \rB_1, \dotsc, \rB_\npbody, \Not aux_\R
\end{gather}
where $aux_\R$ is a new auxiliary predicate.
Let $Q$ be the result of replacing $\R$ by $\R'$ in $P$.
If $I$ is a causal stable model of $P$, then $I \not\models \rB_j$ for some $1 \leq j \leq \npbody$ and, thus, it is a causal stable model of $Q$.
The other way around, if $I$ is a causal stable model of $Q$, either $I \models aux_\R$ or $I \not\models \rB_j$ for some $1 \leq j \leq \npbody$.
If the former, rule $\R'$ does not belong to $Q^I$ and, thus, there is no rule which $aux_\R$ which contradicts the fact that $I$ must be the least model of $Q^J$.
Hence, $I\not\models aux_\R$ and $I \not\models \rB_j$ for some $1 \leq j \leq \npbody$ and, therefore, $I$ is a causal stable model of $P$.
\end{proofof}

\section{Complexity assessment}
\label{sec:complexity}

First, it has been showed in~\cite{CabalarFF14Jelia} that there may an exponential number of causes for some atom with respect to a casual stable model.
For instance, consider the positive program~$\newprogram\label{prog:exp}$ consisting of following the rules:
\begin{gather*}
\begin{IEEEeqnarraybox}{lCl}
a		&:& p_{1}
\\
c		&:& q_{1}
\end{IEEEeqnarraybox}
\hspace{2cm}
\begin{IEEEeqnarraybox}{lCl}
b		&:& p_{1}
\\
d		&:& q_{1}
\end{IEEEeqnarraybox}
\hspace{2cm}
\begin{IEEEeqnarraybox}{lClCl"l}
m_{i} &:& p_{i}		& \leftarrow &	p_{i-1},\ q_{i-1}
						&	\text{for }  i\in\set{2,\dotsc,n}
\\
n_{i} &:& q_{i}		& \leftarrow &	p_{i-1},\ q_{i-1}
						&	\text{for }  i\in\set{2,\dotsc,n}
\end{IEEEeqnarraybox}
\end{gather*}
Since program~$\programref{prog:exp}$ is positive it has unique causal stable model $I_{\ref{prog:exp}}$.
Furthermore, it is easy to see that the interpretation of atoms $p_1$ and $q_1$ with respect to interpretation $I_{\ref{prog:exp}}$ are $a+b$ and $c+d$, respectively. The interpretation for $p_2$ corresponds to:
\vspace{-3pt}
\begin{IEEEeqnarray*}{l C l l}
I_{\ref{prog:exp}}(p_2) & = && (I(p_1)*I(q_1))\cdot m_2 = ((a+b)*(c+d))\cdot m_2\\
	% \ \ = \ \ ( a_1*c_1 + a_1*d_1 + b_1*c_1 + b_1*d_1 ) \cdot a_2
	& = &&	( a*c ) \cdot m_2 \ + \
			( a*d ) \cdot m_2 \ + \
			( b*c ) \cdot m_2 \ + \
			( b*d ) \cdot m_2
\end{IEEEeqnarray*}
This addition cannot be further simplified.
Analogously, $I_{\ref{prog:exp}}(q_2)$ can also be expressed as a sum of four sufficient causes -- we just replace $m_2$ by $n_2$ in $I(p_2)$. But then, $I_{\ref{prog:exp}}(p_3)$ corresponds to $(I_{\ref{prog:exp}}(p_2) * I_{\ref{prog:exp}}(q_2)) \cdot m_3$ and, applying distributivity, this yields a sum of $4 \times 4$ sufficient causes. In the general case, each atom $p_n$ or $q_n$ has $2^{2^{n-1}}$ sufficient causes so that expanding the complete causal value into this additive normal form becomes intractable.
Furthermore, it also has been in~\cite{CabalarFF14Jelia} that deciding whether a term without \review{R2.4}{addition}~$G$ is a brave necessary cause with respect to some regular program $P$ is $\SigmaP{2}$-complete and, thus, deciding the existence of causal stable model is $\SigmaP{2}$-hard even for the class of programs that only contain a unique necessary causal literal.

\begin{proposition}[From~\protect\citeNP{CabalarFF14Jelia}]\label{prop:complx.cause.leq.term}
Given a causal term without \review{R2.4}{addition}~$G \in \causes$ and an causal term $t \in \values$ in which the right-hand operand of every application~``$\cdot$'' is a label, deciding whether $G \leq t$ is feasible in polynomial time.\qed
\end{proposition}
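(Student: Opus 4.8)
The plan is to reduce the inequality test to a containment check on the canonical \emph{causal-graph} representation of $G$. Recall from~\cite{CabalarFF14} that every value in $\causes$ has a normal form given by a finite, reflexive, transitive directed graph whose vertices are labels; that this graph is computable in polynomial time from any term representative of that value and has size polynomial in it; and that for two such graph values one has $G_1\leq G_2$ iff the graph of $G_2$ is contained, vertex- and edge-wise, in the graph of $G_1$. So the first step is to compute the graph $\mathcal{G}=(V,E)$ of $G$: the vertices lie among the (at most $|G|$ many) labels occurring in the given term, $|E|\leq|V|^2$, and each application step only adds edges into a fresh sink followed by a transitive closure, so the whole computation is polynomial in $|G|$.

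Next I would decide $G\leq t$ by structural recursion on $t$, carrying along an induced subgraph of $\mathcal{G}$ that shrinks as one descends through applications. Writing $P_{\mathcal{H}}(s)$ for ``the value represented by the graph $\mathcal{H}$ is $\leq s$'', where $\mathcal{H}$ ranges over induced subgraphs of $\mathcal{G}$, the goal is to compute $P_{\mathcal{G}}(t)$, via: $P_{\mathcal{H}}(l)=[\,l\in V(\mathcal{H})\,]$ for a label $l$ (the graph of $l$ is a single reflexive vertex); $P_{\mathcal{H}}(\sum S)=\bigvee_{s\in S}P_{\mathcal{H}}(s)$, correct because graph values are completely additive-prime (Proposition~\ref{prop:join.prime}); $P_{\mathcal{H}}(\prod S)=\bigwedge_{s\in S}P_{\mathcal{H}}(s)$, because $\prod S$ is the greatest lower bound; and, using the hypothesis that the right operand of every application in $t$ is a \emph{label}, $P_{\mathcal{H}}(t_1\cdot l)=[\,l\in V(\mathcal{H})\,]\wedge P_{\mathcal{H}|_{U_l}}(t_1)$, where $U_l=\{v\in V(\mathcal{H}):(v,l)\in E(\mathcal{H})\}$ and $\mathcal{H}|_{U_l}$ is the subgraph of $\mathcal{H}$ induced on $U_l$. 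The justification of this last clause is the only real content: by addition-distributivity of application over (possibly infinite) sums (Figure~\ref{fig:appl}) one writes $t_1\cdot l$ as $\sum_c(c\cdot l)$ over the sum-free values $c$ with $t_1=\sum_c c$; the graph of $c\cdot l$ is the graph of $c$ with $l$ adjoined as a universal sink, so it is contained in $\mathcal{H}$ exactly when $l\in V(\mathcal{H})$ and the graph of $c$ is contained in $\mathcal{H}|_{U_l}$; re-applying additive-primality of the value of $\mathcal{H}|_{U_l}$ collapses the existential over $c$ back into $P_{\mathcal{H}|_{U_l}}(t_1)$.

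For the running time I would argue that, viewing $t$ as a syntax tree, the recursion visits each node once (one never recurses into the right operand of an application, which is a leaf label), and the induced subgraph that reaches a given node is pinned down by the unique sequence of application-labels on the path from the root, so there is no branching blow-up; since each step does work polynomial in $|\mathcal{G}|\leq\mathrm{poly}(|G|)$, the procedure runs in time polynomial in $|G|+|t|$. The hard part is precisely that last recursion step: one must verify that $G\leq c\cdot l$ is equivalent to ``$l$ is a vertex of $\mathcal{H}$ and $G$ restricted to the in-neighbours of $l$ is $\leq c$'', which rests on the exact causal-graph descriptions of $\leq$ and of application. This is also where the hypothesis on $t$ is indispensable: if the right operand of an application could be an arbitrary value, one would have to distribute over and descend into it as well, and the naive recursion could then expand exponentially, whereas forcing it to be a single label keeps every step local.
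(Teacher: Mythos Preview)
The paper does not actually prove this proposition: it is quoted verbatim from \cite{CabalarFF14Jelia} and closed immediately with \qed, so there is no in-paper argument to compare against.

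On its own merits your argument is sound and is the natural one. The key step---that for sum-free $c$ one has $\mathcal{H}\leq c\cdot l$ iff $l\in V(\mathcal{H})$ and the graph of $c$ is contained in $\mathcal{H}|_{U_l}$---is correct: the graph of $c\cdot l$ is the graph of $c$ together with $l$ and edges $(v,l)$ for every $v\in V_c$, and any further edges needed for transitive closure are automatically in $\mathcal{H}$ once these base edges are, since $\mathcal{H}$ itself is transitively closed. Note that because $\mathcal{H}$ is reflexive, $l\in U_l$ whenever $l\in V(\mathcal{H})$, so the case $l\in V_c$ is handled correctly without special treatment; and the empty induced subgraph, which represents $1$, is handled consistently by your empty-product clause. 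The running-time justification is slightly heavier than needed: this is just a single pass over the syntax tree of $t$, one call per node, each doing work polynomial in $|\mathcal{G}|$; the remark about the subgraph being pinned down by the root path is true but not required for the bound.
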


\begin{proposition}\label{prop:complx.term.leq.term}
Let $\set{t,u} \subseteq \values$ be two causal term in which the right-hand operand of every application~``$\cdot$'' is a label.
Then deciding whether $t \leq u$ is in $\coNP$.\qed
\end{proposition}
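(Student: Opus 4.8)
The plan is to prove the equivalent statement that the complement problem --- deciding $t \not\le u$ --- lies in $\NP$, from which it follows that deciding $t \le u$ is in $\coNP$. The structural observation driving everything is that every (finite) causal term $t$ equals, up to the axioms, a sum $t = \sum_{\sigma} G_\sigma$, where $\sigma$ ranges over the \emph{selections} that pick one summand at each occurrence of `$+$' in the syntax tree of $t$, and $G_\sigma$ is the addition-free term obtained by pruning $t$ according to $\sigma$ (at each `$+$'-node keep only the $\sigma$-chosen child). A selection is an object of size linear in $|t|$, and $G_\sigma$ is obtained from $t$ by deletions, so $|G_\sigma|\le|t|$; moreover $G_\sigma$ keeps all of $t$'s application nodes, hence it still has a label as the right-hand operand of every `$\cdot$', and, being free of `$+$', it represents a value in $\causes$.

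First I would establish two facts by routine structural induction on $t$: (i) $G_\sigma \le t$ for every selection $\sigma$, which uses only monotonicity of `$+$', `$*$' and `$\cdot$' (Proposition~\ref{prop:operations.monotonicity}), since replacing $a+b$ by one of its summands is non-increasing; and (ii) $t = \sum_\sigma G_\sigma$, which uses the distributivity laws of Figure~\ref{fig:appl} (product and application distribute over, possibly infinite, sums) to push every `$+$' to the top. From (i) and (ii) I obtain the characterisation: $t \not\le u$ iff there is a selection $\sigma$ with $G_\sigma \not\le u$. Indeed, if some $G_\sigma \not\le u$ then $t \not\le u$, because $t \le u$ together with (i) would give $G_\sigma \le t \le u$; conversely, if $G_\sigma \le u$ for every $\sigma$, then, since $\sum$ is the least upper bound, $t = \sum_\sigma G_\sigma \le u$.

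This immediately yields an $\NP$ procedure for $t \not\le u$: guess a selection $\sigma$ (polynomially many bits), compute $G_\sigma$ in polynomial time, and test $G_\sigma \not\le u$. The last test is feasible in polynomial time by Proposition~\ref{prop:complx.cause.leq.term}, since $G_\sigma \in \causes$ is addition-free and $u$ is a term in which the right-hand operand of every application is a label. Hence deciding $t \not\le u$ is in $\NP$, and therefore deciding $t \le u$ is in $\coNP$.

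The only mild obstacle I anticipate is the bookkeeping in the proof of $t = \sum_\sigma G_\sigma$ when the syntax tree has nested `$+$' occurrences (a `$+$' inside one summand of another `$+$'): there the set of selections is a tree of choices rather than a plain Cartesian product, so one must define $\sigma$ on \emph{all} `$+$'-nodes and only "use" those that survive pruning; the structural induction handles this transparently once that convention is fixed. Everything else is a direct appeal to Figure~\ref{fig:appl}, Proposition~\ref{prop:operations.monotonicity} and Proposition~\ref{prop:complx.cause.leq.term}.
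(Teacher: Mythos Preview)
Your argument is correct and follows essentially the same line as the paper's proof: both show that $t\not\le u$ is in $\NP$ by exhibiting a $G\in\causes$ with $G\le t$ and $G\not\le u$, then invoking Proposition~\ref{prop:complx.cause.leq.term} for the polynomial-time check. The paper states this tersely via the characterisation ``$t\le u$ iff every $G\in\causes$ with $G\le t$ satisfies $G\le u$'', whereas you make the witness explicit as a selection $\sigma$ and the resulting addition-free term $G_\sigma$; this extra care about the polynomial size of the guessed certificate is a point the paper's one-line proof leaves implicit, so your write-up is, if anything, more complete.
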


\begin{proof}
Note hat $t \leq u$ iff every~$G\in\causes$ such that $G \leq t$ also satisfy $G \leq u$ which are decidable in polynomial time (Proposition~\ref{prop:complx.cause.leq.term}).
Consequently, deciding whether $t \leq u$ is $\coNP$.
\end{proof}

\begin{definition}[Causal graph]\label{def:causal.graph}
Given a set of labels $\lb$, a \emph{causal graph (c-graph)} $G \subseteq \lb \times \lb$ is a set of edges transitively and reflexively closed.
By $\graphs$ we denote the set of all c-graphs that can be formed with labels from $\lb$.\QED
\end{definition}

\begin{theorem}[From~\protect\cite{fandinno2015aspocp}]\label{thm:tp.properties.finite}
For any finite and definite program $P$ with $n$ rules, $\lfp(T_P)\!=\!\tpr{n}$ is its least model.\QED
\end{theorem}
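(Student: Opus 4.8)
The plan is to combine the infinite‑iteration result already available for positive programs with a proof‑theoretic bound on the number of steps. Since a definite program $P$ is in particular positive and monotonic, Theorem~\ref{thm:m-tp.properties} tells us that $\lfp(\tp)=\tpr{\omega}$ is the least model of $P$ and that the sequence $\tpr{0}\leq\tpr{1}\leq\dotsb$ is $\leq$-increasing. Hence it suffices to show that $\tpr{n}$ is already a fixpoint of $\tp$, i.e.\ $\tpr{n+1}\leq\tpr{n}$ (the converse inequality is automatic from monotonicity); indeed, if $\tpr{n}$ is a fixpoint then $\tpr{m}=\tpr{n}$ for all $m\geq n$, so $\tpr{\omega}=\sum_{m<\omega}\tpr{m}=\tpr{n}$.

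Next I would make the correspondence between causal values and derivations explicit. Call a \emph{proof tree} of an atom $\rA$ any finite tree whose nodes are labelled by rules of $P$ such that the root carries a rule with head $\rA$ and the children of a node carrying $(r_i:\rH\lparrow\rB_1,\dotsc,\rB_m)$ are roots of proof trees of $\rB_1,\dotsc,\rB_m$ (so a fact is a leaf), and set $\mathrm{val}(D)\eqdef(\mathrm{val}(D_1)*\dotsc*\mathrm{val}(D_m))\cdot r_i$ where $D_1,\dotsc,D_m$ are the immediate subtrees; for a fact this yields $1\cdot r_i=r_i$. An induction on $k$, using distributivity of product and of application over (possibly infinite) sums, then gives $\tpr{k}(\rA)=\sum\{\,\mathrm{val}(D)\mid D\text{ a proof tree of }\rA\text{ of height}\leq k\,\}$, with both sides equal to $0$ for $k=0$. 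In particular $\tpr{\omega}(\rA)$ is the sum of $\mathrm{val}(D)$ over \emph{all} proof trees of $\rA$.

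The heart of the argument is that proof trees of height greater than $n$ are redundant. Given such a tree $D$, some root‑to‑leaf path visits more than $n$ nodes, so, as $P$ has only $n$ rules, two of them carry the same rule $r$ and therefore derive the same atom, one strictly above the other. Let $D'$ be obtained by contracting this loop, i.e.\ replacing the subtree at the upper occurrence by the subtree at the lower one; $D'$ is a proof tree of $\rA$ of strictly smaller height. Iterating this contraction, every proof tree of $\rA$ is $\leq$-subsumed (see below) by a loop‑free one, and a loop‑free proof tree uses pairwise distinct rules along each branch and hence has height at most $n$. Consequently $\tpr{n}(\rA)=\tpr{\omega}(\rA)$ for every atom $\rA$, so $\tpr{n}=\lfp(\tp)$ is the least model, as required.

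The step I expect to be the main obstacle is justifying that loop contraction does not decrease the causal value, that is, $\mathrm{val}(D)\leq\mathrm{val}(D')$. The point is that the rule applications along the contracted loop contribute only \emph{additional} factors: by application distributivity ($c\cdot d\cdot e=(c\cdot d)*(d\cdot e)$) a chain of applications unfolds into a product of consecutive pairs, so extending a chain only adds meet‑factors, and a product is below each of its factors; idempotence ($l\cdot l=l$) and absorption (Figure~\ref{fig:appl}) take care of the remaining simplifications, including a rule repeated with no intervening application. This is precisely where the algebraic axioms of the causal lattice are used; spelling out this inequality carefully (and, if the definition of \emph{definite} permits monotonic causal literals in rule bodies, checking that the proof‑tree reading survives their evaluation, which is monotone in the value of the underlying atom) is the delicate part of the proof.
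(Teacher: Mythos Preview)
The paper does not prove this theorem: it is quoted verbatim from \cite{fandinno2015aspocp} (note the ``From'' in the theorem header and the \textsc{qed} with no argument), so there is no in-paper proof to compare against. Your proposal therefore has to be judged on its own merits.

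Your argument is sound and is essentially the expected one: reduce to showing $\tpr{n}=\tpr{\omega}$, identify $\tpr{k}(\rA)$ with the join of the values of height-$\le k$ proof trees, and use pigeonhole on the $n$ rules along a long branch to contract loops. The termination of the contraction (each step strictly shrinks the tree) and the fact that a loop-free tree has height at most $n$ are correct as stated; the induction giving the proof-tree reading of $\tpr{k}$ only needs finite distributivity here, since a finite program has finitely many proof trees of each bounded height.

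One simplification: the inequality $\mathrm{val}(D)\le\mathrm{val}(D')$ after contracting a loop follows more directly from absorption than from the ``unfolding into consecutive pairs'' you sketch. Writing $G=\mathrm{val}(D_{\mathrm{lower}})$ and pushing products through applications (product distributivity over application), $\mathrm{val}(D_{\mathrm{upper}})$ becomes a product one of whose factors is $G\cdot w$, where $w$ is the chain of rule labels along the contracted segment. Absorption with $u=1$ gives $G\cdot w\le G$ immediately (this holds even if some labels equal $1$), and a product is below each factor; monotonicity of the surrounding context then yields $\mathrm{val}(D)\le\mathrm{val}(D')$. Your version via $c\cdot d\cdot e=(c\cdot d)*(d\cdot e)$ works too but needs the side condition $d\neq 1$ and is a detour. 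Finally, your caveat about non-regular monotonic bodies is appropriate: in the only place the paper invokes this theorem (the complexity appendix) the program $Q'$ is explicitly positive and \emph{regular}, so the plain proof-tree reading suffices.
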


% \begin{proof}
% From Theorem~\ref{thm:tp.properties.nested}, it follows that $\tpr{\omega}$ is the least fixpoint of the $\tp$ operator and the least model of the positive program~$P$.
% Furthermore, from Theorem~\ref{thm:algebra.values}, we can write the value of $\tpr{\omega}(\rH)$ as $\sum_{G \in U} G$ for some set of causal graphs $U$.

% Let $\lambda$ the number of unlabelled rules in $P$.
% By Proposition~\ref{prop:tp.height},
% for every literal $\rH$ and cause \mbox{$G\leq \tpr{\omega}(\rH)$} such that
% $h = height(G)$,
% it holds that $G \leq \tpr{h+\lambda}(\rH)$.
% Furthermore, it holds that \mbox{$h + \lambda \leq n$} and, thus, $G \leq \tpr{n}(\rH)$.
% Hence, $\tpr{n}(\rH)=\tpr{\omega}(\rH)$ for all atom $\rA$, that is,
% $\tpr{n}=\tpr{\omega}$.
% \end{proof}

\begin{definition}
Let $P$ be a program and $I$ be an interpretation.
By $\mathtt{simply-nec}(P^I)$ we denote the program obtained from $P^I$ by replacing every  causal literal of the form~$(\cliteral{\cquerynec_\ag}{\rA})^{I(\rA)}$
by $\rA$ if $I(\rA) \leq \sum\ag$; and by $0$ otherwise.\qed
\end{definition}

% \begin{definition}
% Let $P$ be a program and $I$ be an interpretation.
% By $\etpP{P^I}$, we denote an operator defined as in Definition~\ref{def:tp}, but whose evaluation necessary causal literals is given by
% \begin{align*}
% \etprP{P^I}{\alpha}(\cliteral{\cquery'}{\rA})
% 	\ \ &\eqdef \ \
% 		\begin{cases}
% 		\etprP{P^I}{\alpha}(\rA) 	&\text{iff } I(\rA) \leq \sum\ag
% \\
% 		0							&\text{otherwise}
% 	\end{cases}
% \end{align*}
% where $\cquery'$ is the reduct of $\cquerynec_{\ag}$ w.r.t. $I(\rA)$.\qed
% \end{definition}

\begin{lemma}\label{lem:aux1:prop:complx.membership.necessary.cliterals}
Let $P$ be a program and $I$ be an interpretation.
If $\tprP{P^I}{\alpha} \leq \tprP{Q}{\alpha} \leq I$,
then
$\tprP{P^I}{\alpha+1} \leq \tprP{Q}{\alpha+1} \leq I$
where $Q=\mathtt{simply-nec}(P^I)$.\qed
\end{lemma}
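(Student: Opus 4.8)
The plan is to prove the two inequalities $\tprP{P^I}{\alpha+1} \leq \tprP{Q}{\alpha+1}$ and $\tprP{Q}{\alpha+1} \leq I$ separately, both resting on a single structural comparison between the rules of $P^I$ and those of $Q = \mathtt{simply-nec}(P^I)$. First I would record the easy facts. Both $P^I$ and $Q$ are positive and monotonic: $P^I$ because reducts of queries are monotonic (Proposition~\ref{prop:cquery.reduct.monotonic}) and the reduct is always positive, and $Q$ because $\mathtt{simply-nec}$ only replaces a reduced necessary literal $(\cliteral{\cquerynec_{\ag}^{I(\rA)}}{\rA})$ by the regular literal $\rA$ or by the term $0$. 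Hence $T_{P^I}$ and $T_Q$ are $\leq$-monotonic, and the rules of $P^I$ and $Q$ stand in the obvious bijection: to a rule $r_i : \rH \lparrow \rB_1,\dotsc,\rB_\npbody$ of $P^I$ corresponds $r_i : \rH \lparrow \rB_1',\dotsc,\rB_\npbody'$ of $Q$, literal by literal. I would also use that $I$ is a model of $P^I$ — this is what the `$\leq I$' half of the statement needs, and it holds in the setting where the lemma is applied, since there $I$ is a causal stable model of $P$.

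The heart of the argument, and the step I expect to be the main obstacle, is the pointwise claim: writing $J_1 = \tprP{P^I}{\alpha}$ and $J_2 = \tprP{Q}{\alpha}$, for every body literal $\rB$ of a $P^I$-rule and its $Q$-counterpart $\rB'$ one has $J_1(\rB) \leq J_2(\rB')$ and $I(\rB) = I(\rB')$. If $\rB$ is a term or a regular (or merely monotonic) causal literal, then $\rB' = \rB$ and both claims follow at once from $J_1 \leq J_2 \leq I$ and the fact that the valuation of a term does not depend on the interpretation. The delicate case is a reduced necessary literal $(\cliteral{\cquerynec_{\ag}^{I(\rA)}}{\rA})$. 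Here I would unfold $\cquerynec_{\ag}^{I(\rA)}$ using Definition~\ref{def:reduct} and~\eqref{eq:nec.agent.definition} to obtain that $\cquerynec_{\ag}^{I(\rA)}(G,u)=1$ iff some $G'\leq G$ is maximal in $I(\rA)$ and $I(\rA)\leq\sum\ag$, and then split on whether $I(\rA)\leq\sum\ag$. If not, the reduced query is identically $0$, so the literal evaluates to $0$ under any interpretation, matching the term $0$ that $\mathtt{simply-nec}$ substitutes. If so, $\mathtt{simply-nec}$ substitutes $\rA$, and I would bound $J_1(\cliteral{\cquerynec_{\ag}^{I(\rA)}}{\rA})$ — a sum of causes each $\leqmax J_1(\rA)$ — by $J_1(\rA)\leq J_2(\rA)$, and separately use Proposition~\ref{prop:formula.valuation.reduct.stable} together with $I(\rA)\leq\sum\ag$ to get $I(\cliteral{\cquerynec_{\ag}^{I(\rA)}}{\rA}) = I(\cliteral{\cquerynec_{\ag}}{\rA}) = \sum \max I(\rA) = I(\rA)$. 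The genuinely subtle point is keeping straight the difference between `$G$ maximal in $I(\rA)$' and `$G$ below $I(\rA)$', which is precisely why the collapse to $\rA$ (respectively $0$) is value-preserving under $I$ yet still bounds the value under the smaller interpretations.

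With the claim in hand the two conclusions follow by monotonicity of product, application and addition. For $\tprP{P^I}{\alpha+1} \leq \tprP{Q}{\alpha+1}$ I would, for each pair of corresponding rules, pass from the body value of the $P^I$-rule under $J_1$ to that of the $Q$-rule under $J_2$ (componentwise by the claim, then through the label $r_i$), and sum over all rules with a given head. For $\tprP{Q}{\alpha+1} \leq I$ I would first use the equalities $I(\rB)=I(\rB')$ together with `$I$ is a model of $P^I$' to conclude that $I$ is a model of $Q$; then, since $J_2 \leq I$ and every body literal of $Q$ is monotonic, each addend defining $T_Q(J_2)(\rH)$ is bounded by the corresponding addend for $I$, which is $\leq I(\rH)$, so $T_Q(J_2)(\rH) \leq I(\rH)$. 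Everything after the pointwise claim is routine.
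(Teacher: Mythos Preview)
Your proposal is correct and follows essentially the same route as the paper: a pointwise comparison of corresponding body literals in $P^I$ and $Q$, case-splitting on whether $I(\rA)\leq\sum\ag$ for each reduced necessary literal, and then lifting to $T_{P^I}$ and $T_Q$ via monotonicity of product, application and addition. You are also right to flag that the step $\tprP{Q}{\alpha+1}\leq I$ needs $I$ to be a model of $P^I$; the paper's proof uses this too (its final chain ends with $I(\fF)\cdot r_i \leq I(\rH)$), but the hypothesis is missing from the lemma statement, and it is only supplied in the contexts where the lemma is later applied.
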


\begin{proof}
Suppose first that $\tprP{P^I}{\alpha+1}(\rA) \not\leq \tprP{Q}{\alpha+1}(\rA)$ for some atom~$\rA$.
Then, since application and \review{R2.4}{addition} are $\leq$-monotonic, there must be some rule of the form of~\eqref{eq:rule.nested} such that
$\tprP{P^I}{\alpha}(\fF) \not\leq \tprP{Q}{\alpha}(\fF')$
where $\fF'$ is just the result of replacing each causal literal of the form~$(\cliteral{\cquerynec_\ag}{\rA})^{I(\rA)}$
by $\rA$ if $I(\rA) \leq \sum\ag$; and by $0$ otherwise.
Since products and \review{R2.4}{addition} are monotonic, it is enough to show that
\begin{itemize}
\item $\tprP{P^I}{\alpha}(\cliteral{\cquery'}{\rA}) \ \ \leq \ \ \tprP{Q}{\alpha}(\rA)$ if $I(\rA) \leq \sum\ag$, and
\item $\tprP{P^I}{\alpha}(\cliteral{\cquery'}{\rA}) \ \ = \ \ 0$ otherwise.
\end{itemize}
where
\begin{gather*}
\cquery'(G,I(\rA)) \ \ \eqdef \ \
  \begin{cases}
  1 &\text{iff exists some }  G' \leq G \text{ s.t. } G' \leqmax I(\rA)
      \text{ and }
      I(\rA) \leq \sum\ag
  \\
  0 &\text{otherwise}
  \end{cases}
\end{gather*}
By definition, 
\begin{align*}
\tprP{P^I}{\alpha}(\cliteral{\cquerynec_\ag}{\rA})^{I(\rA)} \ \ &\eqdef \ \
  \sum\setbm{G \leqmax \tprP{P^I}{\alpha}(\rA)}{ \cquery'(G,\, I(\rA)\,) = 1  }
\end{align*}
One the one hand,
$\tprP{P^I}{\alpha}(\cliteral{\cquery'}{\rA}) \leq \tprP{P^I}{\alpha}(\rA)$ holds for every causal literal~$(\cliteral{\cquery'}{\rA})$ and, by hypothesis,
it holds that
$\tprP{P^I}{\alpha}(\rA) \leq \tprP{Q}{\alpha}(\rA)$
and, therefore, 
$\tprP{P^I}{\alpha}(\cliteral{\cquery'}{\rA}) \leq \tprP{Q}{\alpha}(\rA)$ also holds.
On the other hand,
$I(\rA) \not\leq \sum\ag$
implies that
$\cquery'(G,\, I(\rA)\,) = 0$ for every $G \in \causes$
and, thus,
$\tprP{P^I}{\alpha+1}(\cliteral{\cquery'}{\rA}) = 0 \leq \tprP{Q}{\alpha+1}(\rA)$.

Similarly, to show that $\tprP{Q}{\alpha} \leq I$ is enough to show
$\tprP{P^I}{\alpha}(\rA) \leq I(\cliteral{\cquery'}{\rA})$ when $I(\rA) \leq \sum\ag$.
Note that, in case that $I(\rA) \not\leq \sum\ag$, the causal literal~$(\cliteral{\cquery'}{\rA})$ has been replaced by~$0$.
Then,
for every $G \leq \tprP{Q}{\alpha}(\cliteral{\cquery'}{\rA})$
there is some $G' \leqmax I(\rA)$ and $\cquery'(G',I(\rA)) = 1$ and, consequently,
it follows that
$G \leq G' \leq I(\cliteral{\cquery'}{\rA})$.

Furthermore, it is easy to see that
$\tprP{Q}{\alpha} \leq I$
implies
$\tprP{Q}{\alpha}(\rA) \leq  I(\cliteral{\cquery'}{\rA})$
for every causal literal $(\cliteral{\cquery'}{\rA})$.
Just note that
if $G\leq\tprP{Q}{\alpha}(\cliteral{\cquery'}{\rA})$,
then $G\leq\tprP{Q}{\alpha}(\rA) \leq I(\rA)$
and there is some $G' \leq G$ such that $G' \leqmax I(\rA)$ and $I(\rA) \leq \sum\ag$.
Notice that facts $G \leq I(\rA)$, $G' \leqmax I(\rA)$ and $G' \leq G$ implies that
$G = G'$ and, thus, $G \leqmax I(\rA)$.
Therefore, $G \leq I(\cliteral{\cquery'}{\rA})$ and, thus,
\begin{gather*}
\tprP{Q}{\alpha}(\cliteral{\cquery'}{\rA}) 
	% \ \ \leq \ \  \etprP{P^I}{\omega}(\cliteral{\cquery'}{\rA}) 
	\ \ \leq \ \ I(\cliteral{\cquery'}{\rA})
\end{gather*}

Note now that the evaluation of conjunctions and disjunctions is $\leq$-monotonic and, thus, it can be probed by induction that
\begin{gather*}
\tprP{P^I}{\alpha}(\fF)
	\ \ \leq \ \ \tprP{Q}{\alpha}(\fF)
	% \ \ \leq \ \ \tprP{P^I}{\omega}(\fF)
	\ \ \leq \ \ I(\fF)
\end{gather*}
for every formula~$\fF$.
Finally, since \review{R2.4}{addition} and application are also $\leq$-monotonic, it can be shown by induction that
\begin{gather*}
\tprP{P^I}{\alpha}(\fF) \cdotl r_i  
	\ \ \leq \ \ \tprP{Q}{\alpha}(\fF) \cdotl r_i
	% \ \ \leq \ \ \etprP{P^I}{\omega}(\fF) \cdotl r_i
	\ \ \leq \ \ I(\fF) \cdotl r_i
\end{gather*}
and, thus,
\begin{gather*}
\tprP{P^I}{\alpha+1}(\rA)
	\ \ \leq \ \ \tprP{Q}{\alpha+1}(\rA)
	% \ \ \leq \ \ \tprP{Q}{\omega+1}(\rA)
	% \ \ = \ \ \etprP{P^I}{\omega}(\rA)
	\ \ \leq \ \ I(\rA)
\end{gather*}
for every label $r_i \in \lb$ and atom $\rA \in \at$.
\end{proof}

\begin{lemma}\label{lem:aux1b:prop:complx.membership.necessary.cliterals}
Let $P$ be a program and $I$ be an interpretation.
Then
$\tprP{P^I}{\omega} \leq \tprP{Q}{\omega} \leq I$
where $Q=\mathtt{simply-nec}(P^I)$.\qed
\end{lemma}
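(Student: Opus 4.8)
The plan is to prove, by transfinite induction on the ordinal $\alpha$, the uniform statement
\[
\tprP{P^I}{\alpha} \ \leq \ \tprP{Q}{\alpha} \ \leq \ I,
\]
where $Q = \mathtt{simply-nec}(P^I)$, and then to instantiate it at $\alpha = \omega$. For the base case $\alpha = 0$, Definition~\ref{def:tp.nested} gives $\tprP{P^I}{0} = \botI = \tprP{Q}{0}$, and $\botI \leq I$ holds trivially since $\botI$ assigns the lattice bottom $0$ to every atom. For a successor ordinal $\alpha+1$, the induction hypothesis provides $\tprP{P^I}{\alpha} \leq \tprP{Q}{\alpha} \leq I$, so Lemma~\ref{lem:aux1:prop:complx.membership.necessary.cliterals} applies directly and yields $\tprP{P^I}{\alpha+1} \leq \tprP{Q}{\alpha+1} \leq I$.

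For a limit ordinal $\alpha$, Definition~\ref{def:tp.nested} gives $\tprP{P^I}{\alpha} = \sum_{\beta<\alpha}\tprP{P^I}{\beta}$ and $\tprP{Q}{\alpha} = \sum_{\beta<\alpha}\tprP{Q}{\beta}$. By the induction hypothesis, $\tprP{P^I}{\beta} \leq \tprP{Q}{\beta} \leq I$ for every $\beta<\alpha$. Since addition (the join of the completely distributive lattice of causal values) is $\leq$-monotonic (Proposition~\ref{prop:operations.monotonicity}), taking the join over all $\beta<\alpha$ preserves the first inequality; and since $I$ is an upper bound of the family $\setm{\tprP{Q}{\beta}}{\beta<\alpha}$, it dominates its least upper bound, which gives $\tprP{Q}{\alpha}\leq I$. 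This closes the transfinite induction, and instantiating $\alpha=\omega$ proves the lemma.

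The argument is essentially a bookkeeping step: all the real content sits in Lemma~\ref{lem:aux1:prop:complx.membership.necessary.cliterals} (the successor case), where the effect of replacing each necessary causal literal $(\cliteral{\cquerynec_\ag}{\rA})^{I(\rA)}$ by $\rA$ or by $0$ is analysed. The only place where one must be a little careful here is the limit case, and even there it reduces to monotonicity of the join together with the least-upper-bound property of the value lattice, so I do not anticipate any genuine obstacle.
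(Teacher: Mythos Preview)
Your proposal is correct and follows essentially the same approach as the paper: transfinite induction with the base case $\botI$, Lemma~\ref{lem:aux1:prop:complx.membership.necessary.cliterals} for successors, and an elementary lattice argument at limits. The only cosmetic difference is that at limit ordinals the paper argues via join-primality of elements of $\causes$ (picking a $\beta<\alpha$ with $G\leq\tprP{P^I}{\beta}$), whereas you use monotonicity of the join directly; both are valid and equally routine.
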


\begin{proof}
By definition,
$\tprP{P^I}{0} = \tprP{Q}{0} = \botI \leq I$
and, thus, by induction using Lemma~\ref{lem:aux1:prop:complx.membership.necessary.cliterals},
it follows that
$\tprP{P^I}{\alpha} \leq \tprP{Q}{\alpha} \leq I$
for every successor ordinal~$\alpha$.
For a limit ordinal $\alpha$,
$G \leq \tprP{P^I}{\alpha}$ iff there is some $\beta < \alpha$ s.t.
$G \leq \tprP{P^I}{\beta} \leq \tprP{Q}{\beta} \leq \tprP{Q}{\alpha}$
and, thus,
$\tprP{P^I}{\alpha} \leq \tprP{Q}{\alpha}$.
The proof of
$\tprP{Q}{\alpha} \leq I$ is analogous.
Hence, 
$\tprP{P^I}{\omega} \leq \tprP{Q}{\omega} \leq I$.
\end{proof}

\begin{lemma}\label{lem:aux2:prop:complx.membership.necessary.cliterals}
Let $P$ be a program and $I$ be an interpretation
and $Q=\mathtt{simply-nec}(P^I)$.
If for every atom~$\rA$ and causal term without \review{R2.4}{addition}
$G \leq \tprP{Q}{\alpha}(\rA)$ such that $G \leqmax I(\rA)$,
it holds that
$G \leq \tprP{P^I}{\alpha}(\rA)$,
then for every atom~$\rA$ and causal term without \review{R2.4}{addition}
$G \leq \tprP{Q}{\alpha+1}(\rA)$ such that $G \leqmax I(\rA)$,
it holds that
$G \leq \tprP{P^I}{\alpha+1}(\rA)$\qed
\end{lemma}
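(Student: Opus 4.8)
The statement is the successor step of an induction on $\alpha$ which, combined with Lemma~\ref{lem:aux1:prop:complx.membership.necessary.cliterals}, will eventually identify the causes of the iterations $\tprP{P^I}{\beta}$ and $\tprP{Q}{\beta}$ that lie below the maxima of~$I$. Throughout I would use that iterating Lemma~\ref{lem:aux1:prop:complx.membership.necessary.cliterals} from $\tprP{P^I}{0}=\tprP{Q}{0}=\botI$ gives $\tprP{P^I}{\beta}\leq\tprP{Q}{\beta}\leq I$ for every ordinal~$\beta$ (so, in particular, $\tprP{Q}{\beta}(\rB)\leq I(\rB)$ for every atom~$\rB$), and that every causal term without addition is completely addition-prime (Proposition~\ref{prop:join.prime}).

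First I would \emph{peel a rule}. Fix $\rA$ and a cause $G$ with $G\leq\tprP{Q}{\alpha+1}(\rA)$ and $G\leqmax I(\rA)$. Since $\tprP{Q}{\alpha+1}(\rA)$ is the sum, over rules $(r_i:\rA\lparrow\fF')\in Q$, of $\tprP{Q}{\alpha}(\fF')\cdot r_i$, and $G$ is addition-prime, there is one such rule --- the $\mathtt{simply-nec}$-image of a rule $(r_i:\rA\lparrow\fF)\in P^I$ --- with $G\leq\tprP{Q}{\alpha}(\fF')\cdot r_i$; using distributivity of application over addition and primality once more, $G\leq G_1\cdot r_i$ for some maximal cause $G_1$ of $\tprP{Q}{\alpha}(\fF')$, and in fact $G=G_0\cdot r_i$ for a cause $G_0\leq G_1$ read off from the disjunctive normal form of $G_1\cdot r_i$. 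It then suffices to prove $G_0\leq\tprP{P^I}{\alpha}(\fF)$, for then $G=G_0\cdot r_i\leq\tprP{P^I}{\alpha}(\fF)\cdot r_i\leq\tprP{P^I}{\alpha+1}(\rA)$.

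The core is then an \emph{auxiliary claim about rule bodies}, proved by structural induction on the subformula $\fG$ of the body with $\mathtt{simply-nec}$-image~$\fG'$: under a suitable invariant relating the cause $G_0$ both to $\tprP{Q}{\alpha}(\fG')$ and to $I(\fG)$ --- tight enough to make the stage-$\alpha$ hypothesis of the lemma applicable at the leaves --- one has $G_0\leq\tprP{P^I}{\alpha}(\fG)$. For a term, $\fG'=\fG$ and the two valuations coincide. For a conjunction $(\fH_1,\fH_2)$ or a disjunction $(\fH_1;\fH_2)$ one splits $G_0$ along the disjunctive normal forms of the conjuncts' values (respectively selects the summand of which $G_0$ is a cause, by primality) and recurses. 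For a regular literal $(\cliteral{\cqueryone}{\rB})$, which $\mathtt{simply-nec}$ leaves unchanged, one reduces to the stage-$\alpha$ hypothesis applied to~$\rB$, using $\tprP{Q}{\alpha}(\rB)\leq I(\rB)$. The crux is a reduced necessary literal $(\cliteral{\cquerynec_{\ag}^{I(\rB)}}{\rB})$: if $I(\rB)\not\leq\sum\ag$ then $\fG'=0$ and there is no cause below~$0$; if $I(\rB)\leq\sum\ag$ then $\fG'=\rB$, the stage-$\alpha$ hypothesis produces a maximal cause $G^{\star}$ of $\tprP{P^I}{\alpha}(\rB)$ with $G_0\leq G^{\star}$, and one checks --- using $I(\rB)\leq\sum\ag$ together with the maximality of $G_0$ in $I(\rB)$ carried by the invariant --- that $\cquerynec_{\ag}^{I(\rB)}(G^{\star},\tprP{P^I}{\alpha}(\rB))=1$, so that $G^{\star}$, and hence $G_0$, is $\leq\tprP{P^I}{\alpha}((\cliteral{\cquerynec_{\ag}^{I(\rB)}}{\rB}))$. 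No negation case arises, since the reduct body $\fF$ is positive. Applying the claim to $\fF$ with $G_1\leqmax\tprP{Q}{\alpha}(\fF')$ and combining with the peeling step closes the argument.

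The main obstacle is getting the induction invariant exactly right. The stage-$\alpha$ hypothesis of the lemma only controls causes that are \emph{maximal in $I(\rB)$}, whereas what flows naturally down a conjunction and survives the application $G=G_0\cdot r_i$ is maximality in the $\tprP{Q}{\alpha}(\cdot)$-values; these two notions do not coincide ($\tprP{Q}{\alpha}$ may be strictly below~$I$), and they do not pass unchanged through product (a cause maximal in $I(\fH_1)*I(\fH_2)$ need not be maximal in the factors). The invariant must therefore thread both pieces of information through every connective simultaneously while staying weak enough to be inherited from $G\leqmax I(\rA)$ at the top and from each $\mathtt{simply-nec}$-image, and strong enough to feed the necessary-literal case; pinning this down, rather than the routine algebra (distributivity of application over addition and of product over application, addition-primality, and monotonicity of the $T_P$-iterates from Corollary~\ref{thm:tp.properties}), is where the real work lies.
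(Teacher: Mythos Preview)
Your outline matches the paper's argument step for step: it too peels off a rule (obtaining $G\le G'\cdot r_i$ with $G'\le\tprP{Q}{\alpha}(\fF')$ but $G'\not\le\tprP{P^I}{\alpha}(\fF)$) and then reduces to the leaf case for a reduced necessary literal, whose proof is exactly what you sketch---apply the stage-$\alpha$ hypothesis to get $G\le\tprP{P^I}{\alpha}(\rB)$, then use $G\leqmax I(\rB)$ together with $I(\rB)\le\sum\ag$ to verify the reduced query and conclude $G\le\tprP{P^I}{\alpha}(\cliteral{\cquery'}{\rB})$.

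The obstacle you isolate in your last paragraph is genuine, and the paper does not resolve it either. Its entire reduction from the body formula to the leaf claim is the single sentence ``since products and addition are monotonic \dots\ it is enough to show that $G\le\tprP{Q}{\alpha}(\rA)$ and $G\leqmax I(\rA)$ implies $G\le\tprP{P^I}{\alpha}(\cliteral{\cquery'}{\rA})$''. No invariant is stated, and no argument is given for why the factor $G_j$ extracted at a body conjunct should satisfy $G_j\leqmax I(\rB_j)$; monotonicity of $*$ and $+$ alone would only help if one had $\tprP{Q}{\alpha}(L_j')\le\tprP{P^I}{\alpha}(L_j)$ leafwise, and by Lemma~\ref{lem:aux1b:prop:complx.membership.necessary.cliterals} that inequality points the wrong way. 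So the structural-induction step you correctly describe as ``where the real work lies'' is precisely where the paper's proof is silent; your proposal is at least explicit that something is missing, whereas the paper simply asserts the reduction.
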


\begin{proof}
Suppose there is some atom~$\rA$ and causal term without \review{R2.4}{addition} $G \in \causes$
such that $G \leq \tprP{Q}{\alpha+1}(\rA)$ and $G \leqmax I(\rA)$,
but $G\not\leq \tprP{Q}{\alpha+1}(\rA)$.
Then, since application and \review{R2.4}{addition} are $\leq$-monotonic, there must be some causal term without \review{R2.4}{addition} $G' \in \causes$ and rule of the form of~\eqref{eq:rule.nested} such that
$G \leq G' \cdotl r_i$
and
$G' \leq \tprP{Q}{\alpha}(\fF')$, but
$G' \not\leq \tprP{P^I}{\alpha}(\fF)$
where $\fF'$ is just the result of replacing each causal literal of the form~$(\cliteral{\cquerynec_\ag}{\rA})^{I(\rA)}$
by $\rA$ if $I(\rA) \leq \sum\ag$; and by $0$ otherwise.
Since products and \review{R2.4}{addition} are monotonic and every causal literal in $\fF$ of the form of~$(\cliteral{\cquery'}{\rA})$ is replaced by $0$ in $\fF'$ when $I(\rA) \not\leq \sum\ag$, it is enough to show that
\begin{itemize}
\item $G \leq \tprP{Q}{\alpha}(\rA)$ and $G\in\leqmax I(\rA)$ implies $G \leq \tprP{P^I}{\alpha}(\cliteral{\cquery'}{\rA})$ when $I(\rA) \leq \sum\ag$
\end{itemize}
where $\cquery' \eqdef (\cquerynec_\ag)^{I(\rA)}$.
Indeed, by hypothesis, from $G \leq \tprP{Q}{\alpha}(\rA)$ and $G\leqmax I(\rA)$ it follows that
$G \leq \tprP{P^I}{\alpha}(\rA)$.
Furthermore, $G \leqmax I(\rA)$ and $I(\rA) \leq \sum\ag$ also imply that $(\cquery'(G,I(\rA))=1$ holds and, consequently, it follows that
$G \leq \tprP{P^I}{\alpha}(\cliteral{\cquery'}{\rA})$.
\end{proof}

\begin{lemma}\label{lem:aux3:prop:complx.membership.necessary.cliterals}
Let $P$ be a program and $I$ be an interpretation and $Q=\mathtt{simply-nec}(P^I)$.
Then, $I=\tprP{Q}{\omega}$ iff $I=\tprP{P^I}{\omega}$\qed
\end{lemma}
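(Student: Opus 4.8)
The plan is to wedge $\tprP{P^I}{\omega}$ and $\tprP{Q}{\omega}$ between one another, using the two preceding lemmas as the two halves of the argument. Lemma~\ref{lem:aux1b:prop:complx.membership.necessary.cliterals} already supplies one chain of inequalities for free, namely $\tprP{P^I}{\omega} \leq \tprP{Q}{\omega} \leq I$. From this, the direction ``$I = \tprP{P^I}{\omega}$ implies $I = \tprP{Q}{\omega}$'' is immediate: substituting $I$ for $\tprP{P^I}{\omega}$ in the chain gives $I \leq \tprP{Q}{\omega} \leq I$, hence $\tprP{Q}{\omega} = I$.

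For the converse, assume $I = \tprP{Q}{\omega}$. Since Lemma~\ref{lem:aux1b:prop:complx.membership.necessary.cliterals} already gives $\tprP{P^I}{\omega} \leq \tprP{Q}{\omega} = I$, it suffices to establish $I \leq \tprP{P^I}{\omega}$. To this end I would first prove, by transfinite induction on $\alpha$, the auxiliary statement: \emph{for every atom $\rA \in \at$ and every addition-free causal value $G \in \causes$ with $G \leq \tprP{Q}{\alpha}(\rA)$ and $G \leqmax I(\rA)$, we have $G \leq \tprP{P^I}{\alpha}(\rA)$.} The base case $\alpha = 0$ is vacuous, since $\tprP{Q}{0} = \botI$ leaves no admissible $G$. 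The successor step is exactly Lemma~\ref{lem:aux2:prop:complx.membership.necessary.cliterals}. For a limit ordinal $\alpha$, one uses that $\tprP{Q}{\alpha}(\rA) = \sum_{\beta < \alpha} \tprP{Q}{\beta}(\rA)$ together with the fact that an addition-free value is addition-prime (Proposition~\ref{prop:join.prime}), which yields $G \leq \tprP{Q}{\beta}(\rA)$ for some $\beta < \alpha$; the induction hypothesis then gives $G \leq \tprP{P^I}{\beta}(\rA) \leq \tprP{P^I}{\alpha}(\rA)$, the last inequality holding because $P^I$ is positive monotonic and hence its powers from $\botI$ form an increasing chain.

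Instantiating the auxiliary statement at $\alpha = \omega$ and recalling $\tprP{Q}{\omega} = I$, we obtain $G \leq \tprP{P^I}{\omega}(\rA)$ for every atom $\rA$ and every $G \leqmax I(\rA)$. Since every causal value equals the sum of its maximal addition-free subterms (its minimal disjunctive normal form), we have $I(\rA) = \sum \max I(\rA) = \sum \setm{G}{G \leqmax I(\rA)} \leq \tprP{P^I}{\omega}(\rA)$ for every atom $\rA$, that is $I \leq \tprP{P^I}{\omega}$. Combined with $\tprP{P^I}{\omega} \leq I$ this closes the argument: $I = \tprP{P^I}{\omega}$.

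The only subtle point, and the step I expect to require most care, is the auxiliary induction --- specifically the limit case, where one must exploit that addition-free causal values are addition-prime to pull the witness $G$ back to an earlier stage $\beta < \alpha$, and must keep in mind that the side condition $G \leqmax I(\rA)$ refers to the fixed interpretation $I$ and is therefore preserved unchanged along the iteration. Everything else is routine manipulation of the sandwich inequality furnished by Lemma~\ref{lem:aux1b:prop:complx.membership.necessary.cliterals} and the disjunctive-normal-form representation of causal values.
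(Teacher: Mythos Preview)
Your proposal is correct and follows essentially the same argument as the paper: one direction by the sandwich from Lemma~\ref{lem:aux1b:prop:complx.membership.necessary.cliterals}, the other by the transfinite induction whose successor step is Lemma~\ref{lem:aux2:prop:complx.membership.necessary.cliterals}, followed by summing over $\max I(\rA)$ at $\alpha=\omega$. If anything, you are more explicit than the paper in invoking Proposition~\ref{prop:join.prime} for the limit case and in justifying why the side condition $G \leqmax I(\rA)$ persists along the iteration.
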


\begin{proof}
First, assume that
$I=\tprP{P^I}{\omega}$.
From Lemma~\ref{lem:aux1b:prop:complx.membership.necessary.cliterals},
it follows that $\tprP{P^I}{\omega} \leq \tprP{Q}{\omega} \leq I$ and, thus,
$I=\tprP{P^I}{\omega}$ implies $I=\tprP{Q}{\omega}$.

The other way around.
Assume $I=\tprP{Q}{\omega}$ and assume as induction hypothesis that
$G \leq \tprP{Q}{\beta}(\rA)$ and $G \leqmax I(\rA)$,
imply
$G \leq \tprP{P^I}{\beta}(\rA)$
for every ordinal $\beta < \alpha$.
By definition,
$\tprP{Q}{0} = \botI$ and the hypothesis holds vacuous.
Furthermore, using Lemma~\ref{lem:aux2:prop:complx.membership.necessary.cliterals},
the hypothesis holds for every successor ordinal $\alpha$.
For a limit ordinal $\alpha$,
$G \leq \tprP{Q}{\alpha}$ iff there is some $\beta < \alpha$ s.t.
$G \leq \tprP{Q}{\beta}$
and, thus,
$G \leq \tprP{P^I}{\beta} \leq \tprP{P^I}{\alpha}$.
Then, for every $G \leq I(\rA) = \tprP{Q}{\omega}$
there is some $G' \in \causes$ such that
$G' \leqmax I(\rA) = \tprP{Q}{\omega}$
and, thus,
$G \leq G' \leq \tprP{P^I}{\omega}(\rA)$.
That is, $\tprP{Q}{\omega} \leq \tprP{P^I}{\omega}$.
Finally, from Lemma~\ref{lem:aux1b:prop:complx.membership.necessary.cliterals},
it follows that $\tprP{P^I}{\omega} \leq \tprP{Q}{\omega}$ and, thus,
it also holds $I=\tprP{P^I}{\omega}$.
\end{proof}

\begin{lemma}\label{lem:aux4:prop:complx.membership.necessary.cliterals}
Let $P$ be a program and $I$ be an interpretation and $Q=\mathtt{simply-nec}(P^I)$.
Then, $I$ is a causal stable model of $P$ iff
$\tprP{Q}{\omega} = I$.\qed
\end{lemma}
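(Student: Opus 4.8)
The plan is to derive this statement directly from the definition of causal stable model together with the iteration machinery already established in Lemmas~\ref{lem:aux1:prop:complx.membership.necessary.cliterals}--\ref{lem:aux3:prop:complx.membership.necessary.cliterals}; at this level essentially no fresh calculation is needed.

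First I would unfold the left-hand side. By Definition~\ref{def:causal.smodel.nested}, $I$ is a causal stable model of $P$ iff $I$ is the least model of the reduct~$P^I$. As observed right after Definition~\ref{def:reduct}, this reduct is always a positive monotonic program: every reduced causal query $\cquery^{I(\rA)}$ is monotonic (Proposition~\ref{prop:cquery.reduct.monotonic}) and all negative and consistent literals have been deleted. Hence Corollary~\ref{thm:tp.properties.nested} applies to $P^I$ and identifies its least model with the least fixpoint $\lfp(\tp) = \tprP{P^I}{\omega}$. Therefore $I$ is a causal stable model of $P$ if and only if $I = \tprP{P^I}{\omega}$. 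Second, I would invoke Lemma~\ref{lem:aux3:prop:complx.membership.necessary.cliterals}, which states precisely that for $Q = \mathtt{simply-nec}(P^I)$ one has $I = \tprP{Q}{\omega}$ iff $I = \tprP{P^I}{\omega}$. Chaining the two equivalences yields that $I$ is a causal stable model of $P$ iff $\tprP{Q}{\omega} = I$, which is the claim.

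The genuine difficulty lies not in this lemma but in the ones it relies on. Since $P^I$ and $Q$ are not equivalent programs in general — indeed they may have different least models, precisely because replacing a reduced necessary causal literal $(\cliteral{\cquerynec_{\ag}}{\rA})^{I(\rA)}$ by the plain atom $\rA$ (or by $0$) alters which causes propagate — one cannot argue by program equivalence. The main obstacle, already handled in Lemmas~\ref{lem:aux1:prop:complx.membership.necessary.cliterals}--\ref{lem:aux1b:prop:complx.membership.necessary.cliterals}, is to sandwich the two iteration sequences, $\tprP{P^I}{\alpha} \leq \tprP{Q}{\alpha} \leq I$ for every ordinal~$\alpha$, and then, in Lemma~\ref{lem:aux2:prop:complx.membership.necessary.cliterals}, to show that below the threshold of maximal causes of~$I$ the two iterations coincide; together these force $\tprP{P^I}{\omega}$ and $\tprP{Q}{\omega}$ to reach~$I$ simultaneously. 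Given those lemmas, the present statement is immediate, and the only thing to be careful about is to cite the positive-monotonic-reduct fact so that Corollary~\ref{thm:tp.properties.nested} is legitimately applicable.
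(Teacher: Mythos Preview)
Your proof is correct and follows essentially the same route as the paper: unfold the definition of causal stable model, use Corollary~\ref{thm:tp.properties.nested} to identify the least model of the monotonic reduct $P^I$ with $\tprP{P^I}{\omega}$, and then invoke Lemma~\ref{lem:aux3:prop:complx.membership.necessary.cliterals} to pass from $P^I$ to $Q$. Your additional commentary on where the real work lies (the sandwich argument of the preceding lemmas) is accurate and a nice touch.
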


\begin{proof}
By definition, $I$ is a causal stable model of $P$ iff $I$ is the least model of $P^I$ iff $\tprP{P^I}{\omega} = I$ (Theorem~\ref{thm:tp.properties.nested})
iff $\tprP{P^I}{\omega} = I$ (Lemma~\ref{lem:aux3:prop:complx.membership.necessary.cliterals}).
\end{proof}

\begin{proposition}\label{prop:complx.term.leq.sum.of.labels}
Let $t$ be a term and $\ag$ be a set of labels.
Then, $t \leq \sum\ag$ is decidable in polynomial time.\qed
\end{proposition}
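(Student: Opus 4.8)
The plan is to reduce the test $t\leq\sum\ag$, where $\sum\ag$ abbreviates $\sum_{l\in\ag}l$, to a single structural recursion over $t$ and to observe that this recursion runs in polynomial time. Throughout I would work in the causal-graph model of~\cite{CabalarFF14}: a term without addition denotes a causal graph $G$ (Definition~\ref{def:causal.graph}); on such graphs the order $\leq$ is reverse set inclusion, the product~`$*$' is the reflexive and transitive closure of the union, and application~`$\cdot$' adds edges but no new vertices. Hence both $G*H$ and $G\cdot H$ have vertex set $V(G)\cup V(H)$, and, since a label $l$ denotes $\set{(l,l)}$, one has $G\leq l$ iff $(l,l)\in G$, i.e.\ iff $l\in V(G)$; in particular distinct labels are $\leq$-incomparable, and a causal graph $G$ satisfies $G\leq\sum\ag$ iff $V(G)\cap\ag\neq\emptyset$ (using that $G$ is completely addition-prime, Proposition~\ref{prop:join.prime}).

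Next I would prove the characterisation: for every term $t$, $t\leq\sum\ag$ iff $\chi(t,\ag)$, where $\chi(l,\ag):=(l\in\ag)$ for a label $l$, $\chi(\sum S,\ag):=\bigwedge_{s\in S}\chi(s,\ag)$, $\chi(\prod S,\ag):=\bigvee_{s\in S}\chi(s,\ag)$, and $\chi(t_1\cdot t_2,\ag):=\chi(t_1,\ag)\vee\chi(t_2,\ag)$. Note this forces $\chi(1,\ag)=\mathit{false}$ and $\chi(0,\ag)=\mathit{true}$, matching $1\not\leq\sum\ag$ and $0\leq\sum\ag$. The proof is by structural induction and rests on four facts: (a)~$l\leq\sum\ag$ iff $l\in\ag$; (b)~$\sum S\leq\sum\ag$ iff $s\leq\sum\ag$ for every $s\in S$; (c)~$\prod S\leq\sum\ag$ iff $s\leq\sum\ag$ for some $s\in S$; (d)~$t_1\cdot t_2\leq\sum\ag$ iff $t_1\leq\sum\ag$ or $t_2\leq\sum\ag$.

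Facts (a) and (b) are routine: for (a), ``if'' holds because $l$ is one of the summands and ``only if'' because $l$ is addition-prime, so $l\leq\sum\ag$ gives $l\leq l'$, hence $l=l'$, for some $l'\in\ag$; (b) is just that $\sum S$ is the least upper bound of $S$. In (c) and (d) the ``if'' directions are immediate, since $\prod S$ is the greatest lower bound of $S$, so $\prod S\leq s$ for each $s\in S$, and absorption gives $t_1\cdot t_2\leq t_1$ and $t_1\cdot t_2\leq t_2$. For the ``only if'' directions I would pass to disjunctive normal form: if $s\not\leq\sum\ag$ for every $s\in S$, then by the fact above each $s$ has, in its normal form, a causal graph $G_s$ with $V(G_s)\cap\ag=\emptyset$; then $\prod_{s}G_s$ is a causal graph occurring in the normal form of $\prod S$, with vertex set $\bigcup_s V(G_s)$ disjoint from $\ag$, so $\prod_{s}G_s\not\leq\sum\ag$, which together with (b) contradicts $\prod S\leq\sum\ag$. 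The argument for (d) is identical with $G\cdot H$ in place of a product of two graphs. Finally, $\chi(t,\ag)$ is evaluated by one traversal of the syntax tree of $t$ performing a membership test in $\ag$ at each leaf, so the whole procedure runs in time polynomial in the sizes of $t$ and $\ag$, which is the claim.

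I expect the main obstacle to be the ``only if'' directions of (c) and (d): these are the only steps where the concrete causal-graph model is needed rather than the abstract lattice axioms, and one must be careful that, although the disjunctive normal form of $\prod S$ or $t_1\cdot t_2$ may be exponentially large, the argument only asserts the \emph{existence} of a single bad graph inside it and never constructs the whole normal form.
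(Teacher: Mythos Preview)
Your proposal is correct and follows essentially the same approach as the paper: both reduce the test $t\leq\sum\ag$ to the same structural recursion (membership in $\ag$ for labels, conjunction for `$+$', disjunction for `$*$' and `$\cdot$'), yielding a linear traversal of the syntax tree. The paper simply asserts the product and application equivalences without justification, whereas you additionally prove their ``only if'' directions via the concrete causal-graph model; your invocation of ``(b)'' in that argument is slightly misplaced (what you actually use is monotonicity of `$*$' and `$\cdot$' together with transitivity of $\leq$), but the reasoning is sound.
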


\begin{proof}
If $t \in \lb$ is a label, then $t \leq \sum\ag$ iff $t \in \ag$ which is clearly decidable in polynomial time.
Otherwise, we assume as induction hypothesis that $u  \leq \sum\ag$ and $w \leq \sum\ag$ are decidable in polynomial time for every subterms $u$ and $w$ of $t$.
In case that $t = u + w$, then $t \leq \sum\ag$ iff $u \leq \sum\ag$ and $w \leq \sum\ag$ which are both decidable in polynomial time.
Similarly, in case that $t = u * w$ or $t =u \cdot w$, then $t \leq \sum\ag$ iff $u \leq \sum\ag$ or $w \leq \sum\ag$ which are both decidable in polynomial time.
\end{proof}

\begin{proposition}\label{prop:complx.membership.necessary.cliterals}
Let $P$ be a causal program containing only necessary causal literals.
Then, deciding whether there exists a causal stable model of $P$ or not is in $\NP$.\qed
\end{proposition}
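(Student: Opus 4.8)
The plan is to exhibit an NP algorithm that guesses a candidate causal stable model in a compact (polynomial-size) representation, and then verifies in polynomial time that the guess is correct. By Lemma~\ref{lem:aux4:prop:complx.membership.necessary.cliterals}, an interpretation $I$ is a causal stable model of $P$ iff $\tprP{Q}{\omega} = I$ where $Q = \mathtt{simply-nec}(P^I)$, and $Q$ is a regular positive program whose rules only mention the atoms of $P$ together with ordinary regular causal literals. The point of this reduction is that $Q$ has a well-behaved $\tp$-operator: by Theorem~\ref{thm:tp.properties.finite}, for a finite definite program with $n$ rules the least model is reached at stage $n$, so $\tprP{Q}{\omega} = \tprP{Q}{n}$ is computable in finitely many iterations.

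First I would fix a polynomial-size representation of interpretations. A causal stable model assigns to each atom a value in disjunctive normal form, and in general this value can be exponentially large (as the program $\programref{prog:exp}$ shows), so we cannot guess $I$ directly. Instead, observe that $Q = \mathtt{simply-nec}(P^I)$ depends on $I$ only through the finitely many bits of information ``$I(\rA) \leq \sum\ag$'' for each necessary causal literal $(\ag \necessary \rA)$ occurring in $P$. So the NP algorithm guesses, for each such literal, a bit saying whether this inequality holds; there are polynomially many literals, so the guess has polynomial size. This guess determines the program $Q$ syntactically: replace each literal $(\cliteral{\cquerynec_\ag}{\rA})^{I(\rA)}$ by $\rA$ if the guessed bit is $1$ and by $0$ otherwise. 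Then $Q$ is a finite regular positive program, and one computes its least model $J = \tprP{Q}{n}$ by $n$ iterations of the $\tp$ operator. Since $Q$ is regular, each $J(\rA)$ stays a polynomial-size term (the right-hand operand of every application is a label, and no non-trivial product distributivity is triggered by regular literals), so the whole computation is polynomial. Finally I would check that the guess is \emph{consistent} with $J$: for each necessary causal literal $(\ag \necessary \rA)$ the guessed bit must equal the truth value of $J(\rA) \leq \sum\ag$, which by Proposition~\ref{prop:complx.term.leq.sum.of.labels} is decidable in polynomial time. If all consistency checks pass, accept; $J$ is then the causal stable model witnessed by the guess. Conversely, if $I$ is a causal stable model, the guess given by its own bits passes, so the algorithm accepts. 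Hence deciding the existence of a causal stable model is in $\NP$.

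The main obstacle is making precise, and justifying, the two claims that keep everything polynomial: (i) that $Q = \mathtt{simply-nec}(P^I)$ is genuinely regular so that all intermediate causal values remain of polynomial size, and (ii) that $\tprP{Q}{\omega}$ coincides with $\tprP{P^I}{\omega}$ — which is exactly Lemma~\ref{lem:aux3:prop:complx.membership.necessary.cliterals}, already available. For (i), the key observation is that after $\mathtt{simply-nec}$ every body literal is either a term, a regular causal literal $\rA = (\cliteral{\cqueryone}{\rA})$, or the constant $0$; regular literals evaluate to $I(\rA)$ itself and trigger only addition distributivity (never the product distributivity that causes the blow-up in $\programref{prog:exp}$), so the sizes of the $\tprP{Q}{k}$ values grow only linearly along the $n$ iterations — more carefully, each value is a sum of terms each of which is a causal chain of labels whose length is bounded by the number of rules, and the number of addends is bounded by the size of the program. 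I would then package (i) as a short auxiliary lemma bounding $|\tprP{Q}{k}(\rA)|$ polynomially in $|P|$, invoke Theorem~\ref{thm:tp.properties.finite} for termination at stage $n$, Proposition~\ref{prop:complx.term.leq.sum.of.labels} for the polynomial-time consistency checks, and Lemma~\ref{lem:aux4:prop:complx.membership.necessary.cliterals} to conclude correctness of the whole procedure.
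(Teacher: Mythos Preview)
Your overall strategy --- guess the finitely many bits that determine the simplified program $Q$, compute its least model in $n$ iterations, then verify consistency of the guess --- is exactly the paper's approach. However, there are two genuine gaps in your argument.

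\textbf{First, negation is not handled.} You claim that $Q=\mathtt{simply\text{-}nec}(P^I)$ depends on $I$ only through the bits ``$I(\rA)\leq\sum\ag$'' for the necessary causal literals. But $P$ is a causal program in the sense of Definition~\ref{def:causal.P}, so its bodies may contain negated and double-negated literals; the reduct $P^I$ already replaces each $\Not\fE$ by $\top$ or $\bot$ according to whether $I\models\fE^I$. Thus $Q$ also depends on these additional bits, and your guess must include one bit per maximal $\Not$-subformula as well. The paper does precisely this. The fix is routine, and the subsequent check ``$I(\fE)=0$?'' is polynomial by a simple structural recursion on the (unnormalized) term.

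\textbf{Second, your polynomial-size argument is wrong.} You assert that because $Q$ is regular, ``no non-trivial product distributivity is triggered,'' so each $\tprP{Q}{k}(\rA)$ is a sum of boundedly many causal chains. This is false: program~$\programref{prog:exp}$ is itself a regular positive program, and the paper shows that $I_{\ref{prog:exp}}(p_n)$ has $2^{2^{n-1}}$ addends in disjunctive normal form. The blow-up there comes from ordinary conjunctive rule bodies combined with multiple rules for the same head, not from any non-regular causal literal, so regularity of $Q$ gives you nothing about DNF size. Your proposed auxiliary lemma bounding $|\tprP{Q}{k}(\rA)|$ would therefore be false as stated.

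The paper avoids this by never normalizing: each $T_Q$ step just builds the term $(\tprP{Q}{k-1}(\rB_1)*\dotsb*\tprP{Q}{k-1}(\rB_m))\cdot r_i$ syntactically from its subterms, so after $n$ iterations the representation (as a shared syntax DAG) has size polynomial in $|P|$. The two consistency checks you need --- whether a term equals $0$, and whether $I(\rA)\leq\sum\ag$ --- are both decidable in polynomial time directly on this unnormalized representation (the latter by Proposition~\ref{prop:complx.term.leq.sum.of.labels}, whose proof is a structural recursion that never expands the DNF). Once you replace your incorrect size bound by this observation and add the negation bits to the guess, your argument coincides with the paper's.
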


\begin{proof}
First, note that there exists some causal stable model $I$ of $P$ iff there must exists some program $Q$ and casual stable model $I$ such that $Q$ is the result of replacing every maximal subformula in $P$ of the form $\Not \fE$ by $1$ if $I\models\Not\rE$ and by $0$ otherwise.
Just note that $Q^I = P^I$ for every interpretation $I$.
Then, from Lemma~\ref{lem:aux4:prop:complx.membership.necessary.cliterals},
$I$ is a causal stable model of $P$
iff $I$ is a causal stable model of $Q$
iff $\tprP{Q'}{\omega}=I$ where $Q'= \mathtt{simply-nec}(P^I)$.

Hence, instead of guessing an interpretation $I$ we will guess a program $Q'$.
Let $Q$ for every maximal subformula  be the result of replacing every maximal subformula in $P$ of the form $\Not \fE$ by a guessed $0$ or $1$ 
and let $Q'$  be he result of replacing every necessary causal literal in $Q$ of the form of~$(\cliteral{\cquerynec_\ag}{\rA})$ by a guessed $0$ or $\rA$.
Note that, since $P$ only contains necessary causal literals, $Q'$ is a positive regular (hence monotonic) program.
From Theorem~\ref{thm:tp.properties.finite}, it follows that $\tprP{Q'}{n}=\tprP{Q'}{\omega}$ is the least fixpoint of $\tpP{Q'}$ and the least model of $Q'$ where $n$ is the number of rules in $Q$, which is the same as the number of rules in $P$.
Let us define $I = \tprP{Q'}{n}$.
Since $Q'$ is a regular program each step of $\tpP{Q'}$ only involves the creation of a term from its subterms, which is feasible in polynomial time and, thus, $I$ can be computable in polynomial time.

Let us now check whether $Q' = \mathtt{simply-nec}(P^I)$.
Then, $\mathtt{fail}$ if $I=\tprP{Q'}{n}$ do not satisfy one of the following conditions
\begin{itemize}
\item $I \models \Not \rE$ for some maximal subformula  whose guessed value was $0$
\item $I \not\models \Not \rE$ for some maximal subformula whose guessed value was $1$
\item $I(\rA) \leq \sum\ag$ for some necessary causal literal~$(\cliteral{\cquerynec_\ag}{\rA})$ whose guessed value was $0$
\item $I(\rA) \not\leq \sum\ag$ for some necessary causal literal~$(\cliteral{\cquerynec_\ag}{\rA})$ whose guessed value was $\rA$
\end{itemize}
If reached this point, then $Q' = \mathtt{simply-nec}(P^I)$ and, hence, we the procedure $\mathtt{succeed}$.
It just remain to show that these four conditions can be checked in polynomial time.
The two first only involve checking whether $I(\fE)=0$ which is feasible simply simplifying the obtained causal term and looking whether it is $0$ or not.
Finally, since $\ag\subseteq\lb$ is a set of labels, from Proposition~\ref{prop:complx.term.leq.sum.of.labels}, it follows that
$I(\rA) \leq \sum\ag$ can be checked in polynomial time.
\end{proof}

\begin{proposition}\label{prop:complx.complete.necessary.cliterals}
Let $P$ be a causal program containing only necessary causal literals.
Then, deciding whether there exists a causal stable model of $P$ or not is in $\NP$-complete (it is $\NP$-hard even in $P$ only contains a single negated regular literal or $P$ is positive but contains a single constraint).\qed
\end{proposition}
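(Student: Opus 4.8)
The goal is to prove Proposition~\ref{prop:complx.complete.necessary.cliterals}: deciding the existence of a causal stable model for programs with only necessary causal literals is $\NP$-complete. Membership in $\NP$ is already established by Proposition~\ref{prop:complx.membership.necessary.cliterals}, so the entire work lies in the two $\NP$-hardness claims: hardness already holds when $P$ contains a single negated regular literal, and also when $P$ is positive but contains a single constraint (rule with $\bot$ in the head).

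\textbf{Plan for the first hardness reduction (single negated regular literal).} The plan is to reduce from propositional \textsc{SAT}, mimicking the standard reduction from \textsc{SAT} to the existence of a stable model of a normal logic program. First I would take a CNF formula $\varphi$ over variables $x_1,\dots,x_n$ with clauses $C_1,\dots,C_m$ and build, in the usual way, a program with choice rules $x_i \lparrow \Not \bar{x}_i$ and $\bar{x}_i \lparrow \Not x_i$ for each variable, together with rules deriving an atom $sat$ from each satisfied clause and a constraint $\bot \lparrow \Not sat$. Since the statement insists on \emph{a single} negated regular literal, the trick is to push all the negation into one place: instead of many default literals, use the necessary causal literals (which are \emph{not} counted as negated regular literals, and indeed are the literals the proposition is about) to encode the choices, and reserve the one permitted occurrence of $\Not(\cdot)$ for the final ``unless $sat$'' constraint — or, alternatively, realise the constraint via a single odd loop $p \lparrow \Not p$ guarded so it only fires when $\varphi$ is unsatisfied. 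The key step is to verify, using Theorem~\ref{thm:regluar.standard.correspondnece} together with the conservativity of the causal semantics over the standard stable model semantics, that the causal stable models of the constructed program are in one-to-one correspondence (at the two-valued level) with the satisfying assignments of $\varphi$.

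\textbf{Plan for the second hardness reduction (positive program, single constraint).} Here I would again reduce from \textsc{SAT}, but now the choice between $x_i$ and $\bar{x}_i$ must be simulated \emph{without} default negation, using only necessary causal literals — and this is exactly what program~\programref{prg:positive.two-cmodels} in Section~\ref{sec:properties} already shows is possible: a pair of rules $r_2 : q \lparrow \ag_1 \necessary p$ and $r_4 : p \lparrow \ag_2 \necessary q$ (with $p,q$ also facts under other labels) yields two causal stable models acting as a Boolean switch. I would replicate this gadget once per variable $x_i$, obtaining a fresh binary choice, arrange rules so that a ``master'' atom $sat$ collects evidence that every clause is satisfied under the chosen assignment, and then add the single allowed constraint $\bot \lparrow \Not sat$ — wait, that reintroduces negation; instead, since the program must be \emph{positive}, I would express the constraint positively as $\bot \lparrow \ag_{bad} \necessary sat'$ for a suitably crafted auxiliary atom, or simply as a positive constraint $\bot \lparrow \mathit{unsat}$ where $\mathit{unsat}$ is derived positively from the absence of a cause witnessing satisfaction — exploiting the non-monotonic behaviour of necessary causal literals to play the role of negation, precisely as the paper emphasises positive programs can do. The correctness argument then shows: a causal stable model survives the constraint iff the encoded assignment satisfies $\varphi$.

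\textbf{Main obstacle.} The routine parts are the \textsc{SAT} encoding skeleton and the polynomial-size check. The genuinely delicate step is controlling the causal \emph{values} in the variable-choice gadgets: because causal stable models need not be minimal and a necessary causal literal $\ag \necessary \rA$ is satisfied only when \emph{every} cause of $\rA$ lies below $\sum\ag$, I must make sure the labels are chosen so that in each intended model exactly one of the two ``branches'' has its necessary literal satisfied, and that spurious models (like the unintended $I'_{\ref{prg:positive.two-minimal-models}}$ with value $r_1 + r_2$) are ruled out by the reduct computation. Verifying that no such spurious causal stable model slips past — especially once many gadgets are combined and share the atom $sat$ — and that the single constraint indeed eliminates exactly the non-satisfying assignments, will require a careful case analysis of $\tpP{Q^I}$ on the reduct, using Lemma~\ref{lem:aux4:prop:complx.membership.necessary.cliterals} and Proposition~\ref{prop:complx.term.leq.sum.of.labels} to keep everything tractable. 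I expect this value-tracking to be where most of the proof's length and subtlety reside.
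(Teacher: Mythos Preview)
Your high-level strategy matches the paper's: reduce from \textsc{SAT}, use necessary causal literals to simulate a Boolean choice per variable, derive a ``satisfied'' atom $p$ from the conjunction of clause atoms, and then spend the single permitted negation on the odd loop $p \lparrow \Not p$ (first variant) or use the single constraint (second variant) to kill non-satisfying assignments. The paper organises the correctness argument via the Splitting Theorem rather than a global reduct computation, which avoids the value-tracking case analysis you anticipate.

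Where the paper differs is in the concrete gadgets, which are tighter than what you sketch:
\begin{itemize}
\item \emph{Variable gadget.} Rather than the two-atom, four-rule pattern of program~\programref{prg:positive.two-cmodels}, the paper uses a single atom $x_k$ with three rules: the fact $x_k:x_k$, plus two self-referential rules $t_{x_k}: x_k \lparrow \{t_{x_k},x_k\} \necessary x_k$ and $f_{x_k}: x_k \lparrow \{f_{x_k},x_k\} \necessary x_k$. This has exactly the two causal stable models $I(x_k)=x_k+t_{x_k}$ and $I(x_k)=x_k+f_{x_k}$, so the truth value is read off directly from the causal value of $x_k$ and no spurious models arise.
\item \emph{Clause test.} One rule per literal: $c_j \lparrow \{t_{x_k},x_k\}\necessary x_k$ for a positive occurrence of $x_k$ (with $f_{x_k}$ in place of $t_{x_k}$ for a negative one). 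Then $p \lparrow c_1,\dotsc,c_m$.
\item \emph{Positive constraint.} The paper's device is concrete: add a fresh atom $q$ with fact $q:q$ and an unlabelled rule $q \lparrow p$, then impose $\bot \lparrow \{q\} \necessary q$. The necessary literal holds iff the only cause of $q$ is the label $q$, i.e.\ iff $p$ is false; so the single constraint eliminates exactly the unsatisfying assignments without any negation.
\end{itemize}

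One correction to your plan: Theorem~\ref{thm:regluar.standard.correspondnece} is about \emph{regular} programs (only the trivial query $\cqueryone$) and is inapplicable once necessary causal literals appear; the correctness of the reduction must be argued directly on causal stable models, as the paper does.
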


\begin{proof}
$\NP$ membership follows directly from Proposition~\ref{prop:complx.membership.necessary.cliterals} while $\NP$-hard follows from the fact that every regular program in also a causal program and deciding the existence of stable model for standard programs in $\NP$-complete.
To show that it is $\NP$-hard even when $P$ only contains a negated regular literal, we reduce the existence of stable model for standard program to the satisfiability of a CNF Boolean formula $\varphi$ to the existence of causal stable model of a program $P$.
We assume without of generality that no clause in $\varphi$ has complementary variables.
For every variable $x_k$ occurring in $\varphi$, let $P_k$ be a program containing rules of the form
\begin{IEEEeqnarray*}{l C ? l ; C ; l}
x_k &:& x_k &\lparrow& 
\\
t_{x_k} &:& x_k &\lparrow& \ag_{tk} \necessary x_k
\\
f_{x_k} &:& x_k &\lparrow& \ag_{fk} \necessary x_k
\end{IEEEeqnarray*}
 where $\ag_{tk}=\set{t_{x_k}, x_k}$ 
and
$\ag_{fk}=\set{f_{x_k}, x_k}$.
For each clause $c_j$ in $\varphi$,
let $P_{j}'$ be a program containing a rule of the form of
\begin{gather}
c_j \lparrow \ag_{jk} \necessary x_k
\end{gather}
for each variable $x_k$ in $c_j$,
where $\ag_{jk}=\set{t_{x_k}, x_k}$ if $x_k$ occurs positively in the clause $c_j$
and
$\ag_{jk}=\set{f_{x_k}, x_k}$ if $x_k$ occurs negatively in $c_j$
and let $P''$ be a program containing the following rule
\begin{IEEEeqnarray*}{l C ? l ; C ; l}
% p &:& p &\lparrow& 
% \\
&& p &\lparrow& c_1, \dotsc, c_\npbody
\end{IEEEeqnarray*}
where $c_1,\dotsc,c_\npbody$ are all the clauses in $\varphi$.
Note that no atom occurring in the body of a program $P_k$ occurs in the head of a program $P'_j$ nor $P''$ and no atom occurring in the body of a program $P_j'$ occurs in the head of a program $P''$.
Hence, we can use the Splitting Theorem (Theorem~\ref{thm:splitting.nested}).

Each program $P_k$ has two causal stable models $I_k$ and $J_k$ that satisfy $I_k(x_k) = x_k + t_{x_k}$ and $J_k(x_k) = x_k + f_{x_k}$ and, thus, $P=\bigcup_{k \leq n} P_n$ has $2^n$ causal stable models here $n$ is the number of variables in $\varphi$: each causal stable model $I$
 satisfying $I(x_k) = x_k + t_{x_k}$ or $I(x_k) = x_k + f_{x_k}$ for each variable $x_k$.
We say that an variable $x_k$ is true in an interpretation $I$ if $I(x_k) = x_k + t_{x_k}$ and that is false if $I(x_k) = x_k + f_{x_k}$.
Then, $(P \cup P'_j)$ also has $2^n$ causal models models where
each causal stable model $I$ satisfy
\begin{align*}
I(c_j)
	&\ \ = \ \
	\sum\setm{ x_k + t_{x_k} }{ I(x_k) = x_k + t_{x_k} \text{ and } x_k \text{ occurs positively in } c_j }
\\
	&\ \ + \ \
	\sum\setm{ x_k + f_{x_k} }{ I(x_k) = x_k + t_{x_k} \text{ and } x_k \text{ occurs negatively in } c_j }
\end{align*}
That is, $I(c_j) \neq 0$ iff there is some variable $x_k$ such that $x_k$ is true in $I$ and occurs positively in $c_j$ or there is some variable $x_k$ such that $x_k$ is false in $I$ and occurs negatively in $c_j$
iff $I$ represents an assignment that satisfies the clause $c_j$.
Let $P' = \bigcup_{j \leq \npbody} P'_j$.
Then, $P'$ also has $2^n$ causal models models:
each causal stable model $I$ satisfy for each clause $c_j$ that
$I(c_j) \neq 0$ iff represents an assignment that satisfies the clause $c_j$.
It is easy to see now that $(P' \cup P'')$ has $2^n$ causal stable models where
$I(p) \neq 0$ iff every $c_j$ satisfy $I(c_j) \neq 0$ iff $I$ represents an assignment that satisfy all clauses $c_j$ in $\varphi$ iff $I$ represents an assignment that satisfy $\varphi$.
Finally, let $P$ be the result of adding, to the program~$(P' \cup P'')$, the following rule
\begin{gather*}
 p \lparrow \Not p
\end{gather*}
Then, $P$ has a causal stable model iff there is a causal stable model $I$ of $(P' \cup P'')$ such that $I(p) \neq 0$ iff $I$ represents an assignment that satisfy $\varphi$.
Alternatively, let $Q'$ be the result of adding, to the program~$(P' \cup P'')$, the following rules
\begin{IEEEeqnarray*}{l C ? l ; C ; l}
q &:& q &\lparrow&
\\
	&& q &\lparrow& p
% \\
% 	&& \botlj &\lparrow& \ag_p \necessary p
\end{IEEEeqnarray*}
and $Q$ be the result of adding to $Q'$ the constraint
\begin{IEEEeqnarray*}{l C ? l ; C ; l}
	&& \bot &\lparrow& \ag_p \necessary q
\end{IEEEeqnarray*}
where $\ag_p = \set{ q }$.
Then, $Q'$ has $2^n$ causal stable models: each causal stable model $I$ satisfying
$I(q) = q + I(p)$ and $Q$ has a causal stable model $I$
iff there is a causal stable model $I$ of $Q'$ such that $I(q) > q$
iff there is a causal stable model $I$ of $Q'$ such that $I(p) \neq 0$
iff $I$ represents an assignment that satisfy $\varphi$.
\end{proof}

% \subsection{Proof of Corollaries~\ref{thm:tp.properties}, \ref{thm:positive.two-valued.model.correspondence}, ~\ref{thm:regluar.standard.correspondnece} and~\ref{cor:regluar.correspondnece}}

% \subsection{Proof of Proposition~\ref{prop:csm.are.supported.models} }
% \label{proof:prop:csm.are.supported.models}

% \subsection{Proof of Proposition~\ref{prop:csm.are.minimal.models} }

% \begin{proofof}{prop:csm.are.minimal.models}
% Note that, from Proposition~\ref{prop:nested.correspondence}, the causal stable models of programs w.r.t. Definition~\ref{def:causal.model} and~\ref{def:causal.model.nested} agree and, therefore, the statement directly follows from Proposition~\ref{prop:csm.are.minimal.models.nested}.
% \end{proofof}

% \subsection{Proof of Theorem~\ref{thm:splitting}}

% \subsection{Proof of Theorem~\ref{thm:infinity.splitting}}

% \subsection{Proof of Proposition~\ref{prop:stratified}}

\newpage

\end{document}